\renewcommand{\algorithmiccomment}[1]{\bgroup\hfill\small\textcolor{gray}{//~#1}\egroup}
\renewcommand{\ALG@name}{Mechanism}
\newtheorem{theorem}{Theorem}[section]
\newtheorem{lemma}[theorem]{Lemma}
\newtheorem{claim}[theorem]{Claim}
\newtheorem{corollary}[theorem]{Corollary}
\newtheorem*{claim*}{Claim}
\theoremstyle{definition}
\newtheorem{definition}[theorem]{Definition}
\newtheorem{remark}[theorem]{Remark}
\newcommand{\bmu}{\ensuremath{\boldsymbol{\mu}}\xspace}
\newcommand{\mms}{\ensuremath{\textrm{\MakeUppercase{mms}}}\xspace}
\newcommand{\pmms}{\ensuremath{\textrm{\MakeUppercase{pmms}}}\xspace}
\newcommand{\gmms}{\ensuremath{\textrm{\MakeUppercase{gmms}}}\xspace}
\newcommand{\prop}{\ensuremath{\textrm{\MakeUppercase{prop}}}\xspace}
\newcommand{\efo}{\ensuremath{\textrm{\MakeUppercase{ef}1}}\xspace}
\newcommand{\efx}{\ensuremath{\textrm{\MakeUppercase{efx}}}\xspace}
\newcommand{\ef}{\ensuremath{\textrm{\MakeUppercase{ef}}}\xspace}
\newenvironment{myquote}[1]%
  {\list{}{\leftmargin=#1\rightmargin=#1}\item[]}%
  {\endlist}
\newcommand\circled[1]{%
	\tikz[baseline=(char.base)]\node[circle,draw,inner sep=2pt] (char) {#1};}
\DeclareMathOperator*{\argmin}{arg\,min}
\DeclareMathOperator*{\argmax}{arg\,max}
\begin{document}

\author[1,2]{Georgios Amanatidis}
\affil[1]{School of Mathematics, Statistics and Actuarial Science, University of Essex, UK.}
\affil[2]{Archimedes/Athena RC, Greece.}
\author[3]{Georgios Birmpas}
\affil[3]{Department of Computer Science, University of Liverpool, UK.}
\author[4]{Federico Fusco}
\affil[4]{Department of Computer, Control, and Management Engineering, Sapienza University of Rome, Italy. }
\author[5]{Philip Lazos}
\affil[5]{Input Output Global (IOG), London, UK. }
\author[4]{Stefano Leonardi}
\author[6]{Rebecca Reiffenh{\"a}user}
\affil[6]{Institute for Logic, Language and Computation, University of Amsterdam, The Netherlands.  }
\title{Allocating Indivisible Goods to Strategic Agents: \\ Pure Nash Equilibria and Fairness\thanks{An extended abstract version of this work appeared in the \textit{Proceedings of the 17th International Conference on Web and Internet Economics} (WINE 2021) \citep{AmanatidisBFLLR21}.}}

\predate{}
\postdate{}
\date{}

	\maketitle
	
\begin{abstract}
\noindent We consider the problem of fairly allocating a set of indivisible goods to a set of \emph{strategic} agents with additive valuation functions. We assume no monetary transfers and, therefore, a \textit{mechanism} in our setting is an algorithm that takes as input the reported---rather than the true---values of the agents. Our main goal is to explore whether there exist mechanisms that have pure Nash equilibria for every instance and, at the same time, provide fairness guarantees for the allocations that correspond to these equilibria. We focus on two relaxations of envy-freeness, namely  \textit{envy-freeness up to one good} (\efo), and \textit{envy-freeness up to any good} (\efx), and we positively answer the above question. In particular, we study two algorithms that are known to produce such allocations in the non-strategic setting: Round-Robin (\efo allocations for any number of agents) and a cut-and-choose algorithm of Plaut and Roughgarden \cite{PR18} (\efx allocations for two agents). 
For Round-Robin we show that all of its pure Nash equilibria induce allocations that are \efo with respect to the underlying true values, while for the algorithm of Plaut and Roughgarden we show that the corresponding allocations not only are \efx but also satisfy \textit{maximin share fairness}, something that is not true for this algorithm in the non-strategic setting! Further, we show that a weaker version of the latter result holds for any mechanism for two agents that always has pure Nash equilibria which all induce \efx allocations.
\end{abstract}

	\section{Introduction}
	
	Fair division refers to the problem of distributing a set of resources among a set of agents in such a way that everyone is ``happy'' with the overall allocation. Capturing this ``happiness'' can be elusive,
	as it may be determined by complicated underlying social dynamics; however, two well-motivated (and mathematically conducive) interpretations are those of \emph{envy-freeness} \citep{GS58,Foley67,Varian74} and \emph{proportionality} \citep{Steinhaus49}. When an allocation is envy-free, each agent values the set of resources that she receives at least as much as the set of any other agent, while when an allocation is proportional, each agent receives at least $1/n$ of her total value for all the goods, assuming there are $n$ agents.
	Since the first mathematically formal treatment of fair division  by Banach, Knaster, and Steinhaus \citep{Steinhaus49}, the multifaceted questions that arise for the different variants of the problem have been studied in a diverse group of fields, including mathematics, economics, and political science. As many of these questions are inherently algorithmic, fair division questions, especially the ones related to the existence, computation, and approximation of different fairness notions, have been very actively studied by computer scientists during the last two decades (see, e.g., \citep{Procaccia16-survey,BCM16-survey,Markakis17-survey,AmanatidisABFLMVW23} for surveys of recent results).

    In the standard discrete fair division setting that we study here, the resources are indivisible goods and the agents have additive valuation functions over them. Typically, there is also the additional assumption that all the goods need to be allocated. This discrete setting poses a significant conceptual challenge, as the classic notions of fairness originally introduced for \emph{divisible} goods, such as envy-freeness and proportionality, are  impossible to satisfy. The example that illustrates this situation needs only two agents and just one positively valued good. Whoever does not receive the good will not consider the result to be either envy-free or proportional.  However, this should not necessarily be considered an \emph{unfair} outcome, as it is done out of necessity, not malice: the only other (deterministic) option would be to deprive both agents of the good, which seems wasteful. To define what is  \emph{fair} in this context, a number of weaker fairness notions have been proposed. Among the most prevalent of those are \textit{envy-freeness up to one good} (\efo), \textit{envy-freeness up to any good} (\efx), and \textit{maximin share fairness} (\mms). 
   The notions of \efo and \efx were introduced by Lipton et al.~\cite{LMMS04}, Budish \cite{Budish11}, and Gourvès et al.~\cite{GourvesMT14}, Caragiannis et al.~\cite{CaragiannisKMPS19} respectively, and they can be seen as additive relaxations of envy-freeness. Both of them are based on the following rationale: an agent may envy another agent but only by the value of the most (for \efo) or the least (for \efx) desirable good in the other agent's bundle. It is straightforward that \efo is weaker than \efx, and indeed this is reflected to the known results for the two notions.
   The concept of the \textit{maximin share} of an agent was introduced by Budish \cite{Budish11}
   as a relaxation to the proportionality benchmark. The corresponding fairness notion, \textit{maximin share fairness} (MMS), requires that each agent receives the maximum value that this agent would obtain if she was allowed to partition the goods into $n$ bundles and then keep the worst of these (see Section \ref{sec:prelims} for a more detailed description and a formal definition).

   From an algorithmic point of view,  
   there are many  results  regarding the existence and computation of these notions (see our Related Work).
   Here, however, we are interested in exploring the problem from a game theoretic perspective. In particular, we assume that the agents are \textit{strategic}, which means that it is possible for an agent to intentionally misreport her values for (some of) the goods to end up with a bundle of higher total value.
   We see this as a very natural direction, as it captures what may happen in practice in many real-life scenarios where fair division solutions can be applied, for instance, in a divorce settlement.
    It should be noted here that, in accordance to the existing literature on truthful allocation mechanisms \cite{EhlersK03, KlausM02, Papai00, Papai01, ABM16, ABCM17, CKKK09}, 
    we assume there are \textit{no monetary transfers}. Therefore, a \textit{mechanism} in our setting is just an algorithm that takes as input the, possibly misreported, values that the agents declare. The existence of \textit{truthful} mechanisms, i.e., mechanisms where no agent ever has an incentive to lie, was studied  in the same setting by Amanatidis et al.~\cite{ABCM17} who showed that, even for two agents, truthfulness and fairness are incompatible by providing impossibility results for \emph{every} non-trivial fairness notion.
    As a consequence, the next natural question to ask is:   \smallskip
   \begin{myquote}{0.53in}
     \emph{Is it possible to have non-truthful mechanisms that are guaranteed to have equilibria, with these equilibria always inducing \emph{fair} allocations?} 
   \end{myquote} \smallskip
   Thus, our main quest is to investigate whether there exist mechanisms that have \textit{pure Nash equilibria} for every instance and each allocation corresponding to an equilibrium provides fairness guarantees with respect to the \textit{true} valuation functions of the agents.
   The stability notion of a pure Nash equilibrium, on which we focus here, describes a state where each agent plays a deterministic strategy (namely, reports her value for each good) and no agent can attain higher value by deviating to a different strategy.

	\subsection{Our Contributions}	
	To the best of our knowledge, our work is the first to consider the above question.
	The results we provide are mostly positive, as we show that the class of mechanisms that are implementable in polynomial time, have pure Nash equilibria for every instance, and provide some fairness guarantee at the allocations they produce in their equilibria is non-empty. 
Specifically, in Section \ref{sec:RR}, we study a mechanism adaptation of the Round-Robin algorithm which is known to produce \efo allocations in the non-strategic setting \citep{CaragiannisKMPS19}. Also, under some mild assumptions which we show that can be lifted, Aziz et al.~\cite{GW17} showed that the Round-Robin mechanism always has pure Nash equilibria.
Further, in Section \ref{sec:PR}, we consider the stronger fairness notion of \efx. We focus on the case of two agents and study a mechanism adaptation of the algorithm of Plaut and Roughgarden \cite{PR18}, Mod-Cut\&Choose,  which is known to always produce \efx allocations in the non-strategic setting.
Our main contributions can be summarized as follows:

\begin{itemize}[labelindent=15pt,leftmargin=*,itemsep=3pt,topsep=5pt]
    \item  Round-Robin has pure Nash equilibria for every instance and these equilibria induce allocations that are always \efo with respect to the underlying true values (Theorems \ref{thm:main_theorem_for_n} and \ref{thm:general_bluff}). That is, Round-Robin retains its fairness properties at its equilibria, even when the input is given by strategic agents!
    To show this, we combine well-known properties of Round-Robin with a novel recursive construction of ``nicely structured'' bid profiles. 
    We consider this as the main technical result of our paper.
    \item Mod-Cut\&Choose has pure Nash equilibria for every instance with two agents and these equilibria induce allocations that are always \efx {\em and} \mms with respect to the underlying true values (Theorem \ref{thm:mms+efx_PNE}). Note that for the case of two agents \mms allocations are always \efx allocations, i.e., \mms fairness is stronger. It should be also noted that in the non-strategic setting, for any $\varepsilon>0$, there are instances where the output of Mod-Cut\&Choose is not a $(5/6 + \varepsilon)$-\mms allocation!
    \item We generalize a weaker version of Theorem \ref{thm:mms+efx_PNE}. All mechanisms that have pure Nash equilibria for every instance with two agents and these equilibria induce allocations that are always \efx provide stronger \mms guarantees in these allocations than generic \efx allocations do (Theorems \ref{thm:4EFX=MMS} and \ref{thm:NEFX_APMMS}). This shows a very interesting separation between the strategic and non-strategic settings. 
	\end{itemize}

\subsection {Further Related Work}
The non-strategic version of the problem of fairly allocating goods to additive agents has been studied extensively. We provide a summary of indicative results mostly for the notions that we  consider. In particular, \efo allocations always exist and can be computed in polynomial time \cite{LMMS04,Markakis17-survey,CaragiannisKMPS19}. 
For the stronger notion of \efx, the picture is not that clear. It is known that such allocations always exist when there are $2$ or $3$ agents \cite{CaragiannisKMPS19, GourvesMT14, ChaGM20}, and in the former case they can be efficiently computed using Mod-Cut\&Choose \cite{PR18}. The existence of complete \efx allocations for $4$ or more agents remains one of the most  intriguing open problems in fair division. 
There are, however, positive results for any number of agents if the valuation functions are restricted \cite{ABFHV21, Mahara23, GargM23}, if it is allowed to discard some of the goods \cite{CaragiannisGH19, ChaudhuryKMS21, ChaudhuryGMMM21, BergerCFF22}, or if one considers approximate \efx allocations  \citep{PR18,ANM2019}. 
Finally, regarding the notion of \mms, allocations that provide this guarantee always exist when there are only $2$ agents, although computing them is an NP-hard problem \citep{Woeginger97}. Even worse, for three or more agents, such allocations do not always exist \citep{KurokawaPW18}.
However, there are algorithms that run in polynomial time and produce constant factor approximation guarantees \citep{KurokawaPW18, AMNS17, BarmanK20, GHSSY21, GargMT19, GargT21}, with ${3}/{4}+{3}/{3836}$ being the current state of the art \cite{AkramiGarg23}.

The works of Caragiannis et al.~\cite{CKKK09}, and Amanatidis, Birmpas and coauthors \cite{ABM16, ABCM17}  are very relevant to ours in the sense that they all studied the exact same strategic discrete fair division setting. As we mentioned earlier, however, their focus was different as they were only interested in truthful mechanisms. Amanatidis et al.~\cite{ABCM17} provided strong impossibility results in this direction: for instances with two agents, no truthful mechanism can consistently produce \efo (and thus \efx) allocations when there are more than $4$ goods, while the best possible approximation with respect to \mms declines linearly with the number of goods. Given these negative results, truthful mechanism design has also been  studied under restricted valuation function classes \citep{HPPS,BabaioffEF21,BabaioffF22}. In a very recent work, Amanatidis et al.~\cite{AmanatidisBL0R23} show that our main result (Theorem \ref{thm:main_theorem_for_n}) qualitatively extends to approximate pure Nash equilibria, even for agents with submodular valuation functions. 
	
	Aziz, Goldberg and Walsh \cite{GW17} studied the existence of pure Nash equilibria of Round-Robin and  showed that when no agent values any two goods equally, there always exists a pure Nash equilibrium. In addition, they provided a linear time algorithm that computes the preference rankings (i.e., the orderings of the goods that correspond to the reported values) that leads to this equilibrium, thus giving a constructive solution. 
	Aziz et al.~\cite{AzizBLM17} showed that computing best responses for Round-Robin, and for \emph{sequential mechanisms} more generally, is NP-hard, fixing an error in the work of Bouveret and Lang \cite{BouveretL14} on the same topic.
	
	We conclude by pointing out that in contrast to the case of indivisible goods, the problem of fairly allocating a set of divisible goods to a set of strategic agents has  been repeatedly studied. For some indicative papers in this line of work, we refer the reader to \cite{ColeGG13,ChenLPP13,BranzeiCKP16,BeiCHTW17,BranzeiGM22} and references therein.

\section{Preliminaries}\label{sec:prelims}
We consider the problem of allocating a set of indivisible goods to a set of agents in a fair manner under the presence of incentives. 
For $a\in \mathbb{N}$ we use $[a]$ to denote the set $\{1, 2, \ldots, a\}$. 	
An instance to our problem is an ordered triple $(N, M, \mathbf{v})$, where $N = [n]$ is a set of $n$ agents, $M = \{g_1,\ldots, g_m\}$ is a set of $m$ goods, and $\mathbf{v} = (v_1,\ldots, v_n)$ is a vector of the agents' additive valuation functions. In particular, each agent $i$ has a non-negative value $v_{i}(\{g\})$ (or simply $v_{i}(g)$)
for each good $g\in M$, and for every $S, T \subseteq M$ with $S\cap T= \emptyset$ we have $v_i(S\cup T) = v_i(S)+ v_i(T)$. Equivalently, the value of an agent is simply the sum of the values of the goods that she got.
We assume there is no free disposal, which means that all the goods must be allocated. Thus, an allocation $(A_1,\ldots,A_n)$, where $A_i$ is the \emph{bundle} of agent $i$, is a partition of $M$.
It is often useful 
to refer to the order of preference an agent has over the goods. We say that a valuation function $v_i$ \emph{induces a preference ranking} $\succeq_i$ if $g\succeq_i g' \Leftrightarrow v_i(g) \ge v_i(g')$ for all $g,g'\in M$. We use $\succ_i$ if the corresponding preference ranking is \emph{strict}, i.e., when $g\succeq_i g' \,\wedge\, g'\succeq_i g \,\Rightarrow\, g=g'$, for all  $g,g'\in M$.

\subsection{Fairness Notions}\label{subsec:fairness}
There is a significant number of different notions one can use to determine which allocations are ``fair''. The most prominent such notions are \emph{envy-freeness} (\ef) \cite{GS58,Foley67,Varian74} and \emph{proportionality} (\prop) \cite{Steinhaus49}, and, in the discrete setting we study here, their relaxations, namely \emph{envy-freeness up to one good} (\efo) \cite{Budish11}, \emph{envy-freeness up to any good} (\efx) \cite{CaragiannisKMPS19}, and \emph{maximin share fairness} (\mms) \cite{Budish11}. Particularly for additive valuation functions, we have that $\ef \Rightarrow \efx \Rightarrow \efo$ and $\ef \Rightarrow \prop \Rightarrow \mms$, where $X \Rightarrow Y$ means that any allocation that satisfies fairness criterion $X$ always satisfies fairness criterion $Y$ as well. 

\smallskip\begin{definition}\label{def:EF-EFX}
+	An allocation $(A_1,\ldots,A_n)$ is 
	\begin{itemize}[labelindent=20pt,leftmargin=*,itemsep=3pt,topsep=5pt]
		\item \textit{envy-free} (\ef), if for every $i, j\in N$, $v_i(A_i) \geq v_i(A_j)$. \label{def:EF}
		\item \textit{envy-free up to one good} (\efo), if for every pair of agents $i, j\in N$, with $A_j\neq\emptyset$, there exists a good $g\in A_j$, such that
		$v_i(A_i) \geq  v_i(A_j\setminus \{g\})$. \label{def:efo}
		\item  \textit{envy-free up to any good} (\efx), if for every pair $i, j\in N$, with $A_j\neq\emptyset$ and every good $g\in A_j$ with $v_i(g)>0$, it holds that $v_i(A_i) \geq v_i(A_j\setminus \{g\})$. \label{def:EFX} 
	\end{itemize}
\end{definition}\smallskip

While these notions rely on comparisons among the agents, proportionality focuses on everyone receiving at least a $1/n$ fraction of the total value.

\smallskip\begin{definition}\label{def:prop}
	An allocation $(A_1,\ldots,A_n)$ is \textit{proportional} (\prop), if for every $i\in N$, $v_i(A_i) \geq {v_i(M)}/{n}$.
\end{definition}\smallskip

In the same direction, but adjusted for indivisible goods, a number of fairness notions have been based on the notion of \emph{maximin shares} \cite{Budish11}. 
Imagine that agent $i$ is asked to partition the goods into $n$ bundles, under the condition that she will receive the worst bundle among those. If the resources were divisible, then she would clearly split everything evenly into $n$ bundles of value ${v_i(M)}/{n}$ each, thus capturing the benchmark required for proportionality. However, now that the goods are indivisible, agent $i$ would like to create a partition maximizing the minimum value of a bundle. This value is her maximin share. 
\smallskip\begin{definition}\label{def:mmshare}
	Given  a subset  $S\subseteq M$ of goods, the $n$-\textit{maximin share} of agent $i$ with respect to $S$ is
	\[ \bmu_i(n, S) = \displaystyle\max_{\mathcal{A}\in\Pi_n(S)} \min_{A_j\in \mathcal{A}} v_i(A_j)\,,\]
	where $\Pi_n(S)$  is the set of all partitions of $S$ into $n$ bundles.
\end{definition}\smallskip
From the definition and the preceding discussion, we have that $n\cdot \bmu_i(n, S)\le v_i(S)$.
When $S=M$, we call $\bmu_i(n, M)$ the \textit{maximin share} of agent $i$ and denote it by $\bmu_i$ as long as it is clear what $n$ and $M$ are. 
\smallskip\begin{definition}
	\label{def:MMS}
	An allocation 
	$\mathcal{A} = (A_1,\ldots,A_n) $ is called an \textit{$\alpha$-maximin share fair} ($\alpha$-\mms) allocation if $v_i(A_i)\geq \alpha\cdot \bmu_i\,$, for every $i\in N$. When $\alpha=1$ we just say that $\mathcal{A}$ is an \mms allocation.
\end{definition}\smallskip 
Besides \mms, there exist other fairness criteria based on the notion of maximin shares, like  \emph{pairwise maximin share fairness} (\pmms) \cite{CaragiannisKMPS19} and \emph{groupwise maximin share fairness} (\gmms) \cite{BBMN18}. While we are not going into more details about them, it should be noted that $\pmms\Rightarrow \efx$ \cite{CaragiannisKMPS19} and that for $n=2$, \mms, \pmms, and \gmms coincide. In particular, we need the following result of Caragiannis et al.~\cite{CaragiannisKMPS19}.

\begin{theorem}[Follows from Theorem 4.6 of \cite{CaragiannisKMPS19}]\label{thm:mms_to_efx_n=2}
	For $\,n = 2$, any \mms allocation is also an \efx allocation. 
\end{theorem}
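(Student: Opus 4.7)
The plan is to fix an \mms allocation $(A_1, A_2)$ and, after relabeling, show it is \efx from the perspective of agent $1$, i.e., that $v_1(A_1) \geq v_1(A_2 \setminus \{g\})$ for every $g \in A_2$ with $v_1(g) > 0$. The whole argument should follow by exhibiting a single alternative $2$-partition of $M$ whose worst bundle (according to $v_1$) forces the desired inequality.

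The key construction I would use is the partition
\[
\mathcal{B} = \bigl(A_1 \cup \{g\},\ A_2 \setminus \{g\}\bigr) \in \Pi_2(M).
\]
By Definition~\ref{def:mmshare} applied with $n=2$ and $S=M$, the maximin share $\bmu_1$ is at least the minimum value (under $v_1$) of the two bundles in $\mathcal{B}$, so
\[
\bmu_1 \;\geq\; \min\bigl\{\,v_1(A_1) + v_1(g),\; v_1(A_2) - v_1(g)\,\bigr\}.
\]
Combining this with the \mms guarantee $v_1(A_1) \geq \bmu_1$ yields
$v_1(A_1) \geq \min\{v_1(A_1) + v_1(g),\; v_1(A_2) - v_1(g)\}$.

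The finishing step is to rule out the first term of the minimum. Since $v_1(g) > 0$, we have $v_1(A_1) + v_1(g) > v_1(A_1)$, so the minimum cannot be achieved by the first term without contradicting the inequality above. Hence the minimum is the second term, and we conclude $v_1(A_1) \geq v_1(A_2) - v_1(g) = v_1(A_2 \setminus \{g\})$, which is exactly the \efx condition for agent $1$ with respect to $g$. By symmetry the same argument works for agent $2$, giving \efx overall. There is no serious obstacle here: the only subtlety is remembering that \efx quantifies only over goods $g$ with $v_1(g) > 0$, which is precisely what is needed to eliminate the wrong branch of the minimum; the case $v_1(g) = 0$ is vacuous since removing a zero-valued good does not change $v_1(A_2 \setminus \{g\})$.
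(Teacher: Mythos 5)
Your proof is correct. Note, however, that the paper does not prove this statement at all: it imports it from Theorem 4.6 of \citet{CaragiannisKMPS19}, which establishes that pairwise maximin share fairness implies \efx, combined with the observation (made in Section~\ref{subsec:fairness}) that \mms and \pmms coincide for $n=2$. Your argument is a self-contained, elementary replacement: you instantiate Definition~\ref{def:mmshare} on the single exchange partition $(A_1\cup\{g\},\,A_2\setminus\{g\})$, use $v_1(A_1)\ge\bmu_1$, and rule out the first branch of the minimum via $v_1(g)>0$ --- which is exactly the hypothesis under which the \efx condition of Definition~\ref{def:EF-EFX} is imposed, so the case analysis closes cleanly. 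This is essentially the same exchange idea that underlies the \pmms$\Rightarrow$\efx proof in the cited work, but specialized to two agents it becomes a three-line computation; it also makes transparent why the argument is confined to $n=2$ (for $n\ge 3$ the exchange partition would still contain the untouched bundles $A_k$, $k\neq 1$, whose values could dominate the minimum, and indeed the implication fails there). The only cosmetic point: when you ``rule out the first term,'' the clean phrasing is that if the minimum equaled $v_1(A_1)+v_1(g)$ you would get $v_1(A_1)\ge v_1(A_1)+v_1(g)>v_1(A_1)$, a contradiction, so the minimum is the second term and the desired bound follows --- which is what you say, just worth stating as a two-case disjunction rather than as ``the minimum cannot be achieved by the first term.''
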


In addition to the implications mentioned so far, one can consider how the approximate versions of \efo, \efx and \mms relate to each other (see \cite{ABM18}). Here we need the following result about the worst case \mms guarantee of an \efx allocation for the case of two agents.

\begin{theorem}[Follows from Proposition 3.3 of \cite{ABM18}]\label{thm:efx_to_mms_n=2}
	For $\,n = 2$, any \efx allocation is also a $\frac{2}{3}$-\mms allocation. 
	This guarantee is tight, in the sense that for every $\delta>0$ there exists an \efx allocation that is not a $\big(\frac{2}{3} +\delta\big)$-\mms allocation, for any $m\ge 4$. 
\end{theorem}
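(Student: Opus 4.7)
The plan is to argue the approximation bound agent by agent and, for tightness, to exhibit a single compact example.

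Fix an \efx allocation $(A_1, A_2)$ and consider agent $1$ (agent $2$ is symmetric). Write $X = v_1(A_1)$ and $Y = v_1(A_2)$. If $X \geq Y$, the allocation is envy-free from agent~$1$'s viewpoint, hence proportional for her, so $X \geq v_1(M)/2 \geq \bmu_1$ and we are done. Otherwise $X < Y$, so some good in $A_2$ has positive $v_1$-value; let $g^*$ be the minimum-valued such good and set $z = v_1(g^*) > 0$. The \efx condition then yields $X \geq Y - z$, equivalently $Y \leq X + z$.

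I would then split into two cases on the size of $z$. If $z \leq X$, combining $Y \leq X + z$ with $\bmu_1 \leq v_1(M)/2 = (X+Y)/2$ gives
\[
\bmu_1 \;\leq\; X + \tfrac{z}{2} \;\leq\; \tfrac{3}{2}\, X,
\]
i.e., $X \geq \tfrac{2}{3}\bmu_1$. If instead $z > X$, the key structural observation is that $g^*$ is the \emph{only} positively-valued good in $A_2$: every other positively-valued good in $A_2$ has $v_1$-value at least $z$ by minimality of $g^*$, so the presence of a second such good would force $Y \geq 2z > 2X \geq X + z$, contradicting $Y \leq X + z$. Hence $Y = z$. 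In any $2$-partition of $M$ the bundle containing $g^*$ has $v_1$-value at least $z > X$, while the other bundle is contained in $M \setminus \{g^*\}$, whose total $v_1$-value equals $v_1(M) - z = X$. Therefore $\bmu_1 \leq X$, and certainly $X \geq \tfrac{2}{3}\bmu_1$.

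For tightness I would take $m = 4$ goods with $v_1 = v_2 = (2,2,1,1)$ and the allocation $A_1 = \{g_3, g_4\}$, $A_2 = \{g_1, g_2\}$. One checks directly that it is \efx from both sides, that $\bmu_1 = 3$ (attained by the partition $(\{g_1, g_3\}, \{g_2, g_4\})$), and that $v_1(A_1) = 2 = \tfrac{2}{3}\bmu_1$; appending goods valued $0$ by both agents extends the construction to every $m \geq 4$. The most delicate step is the second case of the main argument: leveraging minimality of $g^*$ to deduce that $A_2$ contains a unique positively-valued good before computing $\bmu_1$. Once this is in place, the rest of the proof reduces to elementary inequalities and a direct verification on a small instance.
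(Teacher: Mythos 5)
Your proof is correct, and since the paper states this theorem without proof---importing it from Proposition 3.3 of \cite{ABM18}---your self-contained argument is a welcome verification; it follows essentially the standard route (case analysis pivoting on the least positively-valued good $g^*$ in the envied bundle, plus the $(2,2,1,1)$ instance for tightness). All steps check out, including the delicate sub-case $z > X$, where minimality of $g^*$ together with $v_1(A_2)\le v_1(A_1)+z$ correctly forces $A_2$ to contain a single positively-valued good and hence $\bmu_1 \le v_1(A_1)$.
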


\subsection{Mechanisms and Equilibria}\label{subsec:mecanisms}
We are interested in \emph{mechanisms} that produce allocations with  fairness guarantees. In our setting, where there are \textit{no payments}, an allocation mechanism $\mathcal{M}$ is essentially just an algorithm that takes its input from the agents and allocates all the goods to them. We use this distinction in terminology to highlight that this reported input may differ from the actual valuation functions. In particular, we assume that each agent $i$ reports a \emph{bid vector} $\bm{b}_i = (b_{i1}, b_{i2}, \ldots, b_{im})$, where $b_{ij}\ge 0$ is the value agent $i$ claims to have for good $g_j\in M$. A mechanism $\mathcal{M}$ takes as input a \emph{bid profile} $\mathbf{b} = (\bm{b}_1, \bm{b}_2, \ldots, \bm{b}_n)$ of bid vectors and outputs an allocation $\mathcal{M}(\mathbf{b})$. 
In our setting we assume that the agents are \emph{strategic}, i.e., an agent may misreport her true values if this results to a better allocation from her point of view. Hence, in general, $\bm{b}_i \neq (v_{i}(g_1), v_{i}(g_2), \ldots, v_{i}(g_m))$. While $\bm{b}_i$ is defined as a vector, for a generic good $h\in M$ it is often convenient to use the function notation $\bm{b}_i(h)$ to denote the bid value ${b}_{i\ell}$, where $\ell$ is such that $h=g_{\ell}$; extending this we may write $\bm{b}_i(S)$ for $\sum_{h\in S} \bm{b}_i(h)$.
Like above, we say that a bid vector $\bm{b}_i$ induces a preference ranking $\succeq_i$ if $g\succeq_i g' \Leftrightarrow \bm{b}_i(g) \ge \bm{b}_i(g')$ for all $g,g'\in M$, and use $\succ_i$ for strict rankings.

We focus on the fairness guarantees of the (pure) equilibria of the mechanisms we study. As is common, given a profile $\mathbf{b} = (\bm{b}_1, \ldots, \bm{b}_n)$, we write
$\mathbf{b}_{-i}$ to denote $(\bm{b}_1, \ldots, \bm{b}_{i-1},  \allowbreak \bm{b}_{i+1}, \ldots, \bm{b}_n)$ and, given a bid vector $\bm{b}'_{i}$, we use 
$(\bm{b}'_i, \mathbf{b}_{-i})$ to denote the profile $(\bm{b}_1, \ldots, \bm{b}_{i-1}, \allowbreak \bm{b}'_{i}, \allowbreak \bm{b}_{i+1}, \ldots, \bm{b}_n)$. For the next definition we  abuse the notation slightly: given an allocation $\mathcal{A} = (A_1,\ldots,A_n)$, we write $v_i(\mathcal{A})$ to denote $v_i({A}_i)$. 

\smallskip\begin{definition}\label{def:PNE}
Let $\mathcal{M}$ be an allocation mechanism and consider a profile
$\mathbf{b} = (\bm{b}_1, \ldots, \bm{b}_n)$. We say that $\bm{b}_{i}$ is a \emph{best response} to $\mathbf{b}_{-i}$ if for every $\bm{b}'_{i}\in \mathbb{R}^m_{\ge 0}$, we have 
\[v_i(\mathcal{M}(\bm{b}'_i, \mathbf{b}_{-i}))\le v_i(\mathcal{M}(\mathbf{b})) \,.\]
The profile $\mathbf{b}$ is a \emph{pure Nash equilibrium} (PNE) if, for each $i\in N$, $\bm{b}_{i}$ is a best response to $\mathbf{b}_{-i}$.
\end{definition}\medskip

When $\mathbf{b}$ is a PNE and the allocation $\mathcal{M}(\mathbf{b})$ has a fairness guarantee, e.g., $\mathcal{M}(\mathbf{b})$ is \efo, we will atribute the same guarantee to the profile itself, i.e., we will say that $\mathbf{b}$ is \efo.

\smallskip\begin{remark}
The mechanisms we consider in this work run in polynomial time. However there are computational complexity questions that go beyond the mechanisms themselves. For instance,  how does an agent compute a best response or how do all the agents reach an equilibrium? 
While we consider such questions interesting directions for future work, we do not study them here and we only focus on the fairness properties of PNE. It should be noted, however, that such problems are typically hard. For instance, computing a best response for Round-Robin is NP-hard in general \citep{AzizBLM17} (although for fixed $n$ it can be done in polynomial time \citep{XiaoL20}), and the same can be easily shown to be true for
Mod-Cut\&Choose via a reduction from the classic PARTITION problem.
\end{remark}\smallskip

\begin{remark}\label{rem:MMS}
An easy observation on the main question of this work is that \emph{any} PNE of \emph{any}  $\alpha$-approxi\-ma\-tion mechanism for computing \mms allocations is an $\alpha$-\mms allocation. Indeed, this is true, not only for \mms  but for any fairness notion that depends on  agents achieving specific value benchmarks that depend on their own valuation function, e.g.,  it is also true for \prop. While this  is definitely interesting to note, nothing is known on the existence of PNE of any constant factor approxi\-ma\-tion algorithm for computing \mms allocations in the literature. Even for a very simple $1/2$-approximation algorithm that only slightly differs from Round-Robin  \cite{AMNS17}, showing that PNE always exist seems  very challenging.  Clearly, an existence result for any such algorithm \citep{KurokawaPW18,AMNS17,BarmanK20,GHSSY21,GargMT19,GargT21} would imply an analogue of Theorem \ref{thm:main_theorem_for_n} for appro\-xi\-mate \mms. 
Although in this work we do not consider \emph{mixed} Nash equilibria (MNE), i.e., the generalization of PNE where strategies are distributions over bids and the inequality of Definition \ref{def:PNE} holds in expectation, everything said in this remark could be repeated for MNE and ex-ante $\alpha$-\mms allocations, i.e., allocations where the inequality of Definition \ref{def:MMS} holds in expectation. We see all such questions as promising directions in line with the research agenda we initiate here.
\end{remark}\smallskip

\section{Fairness of Nash Equilibria of Round-Robin}\label{sec:RR}

In this section we focus on one of the simplest and most well-studied allocation algorithms, Round-Robin, a draft algorithm where the agents take turns and in each turn the active agent receives her most preferred available (i.e., unallocated) good. Below we state Round-Robin as a mechanism (Mechanism \ref{alg:MRR}) that takes as input a bid profile rather than the valuation functions of the agents.
In its full generality, Round-Robin should also take a permutation $N$ as an input to determine the priority of the agents. Here, for the sake of presentation, we assume that the agents in each \emph{round} (lines \ref{line:rr3}--\ref{line:rr6}) are always considered according to their ``name'', i.e., agent 1 is considered first, agent 2 second, and so on. 
This is without loss of generality, as it only requires renaming the agents accordingly. As we have mentioned in the Introduction, as an algorithm, Round-Robin outputs \efo allocations when all agents have additive valuation functions \cite{Markakis17-survey,CaragiannisKMPS19}.

	\begin{algorithm}[ht]
		\caption{Round-Robin$(\bm{b}_1, \ldots, \bm{b}_n)$ \hfill\small{ $\triangleright$ For $i\in N$, $\bm{b}_i = (b_{i1}, \ldots, b_{im})$ is the bid of agent $i$.}}
		\begin{algorithmic}[1]
			\State $S=M$\textbf{;} $(A_1,\dots,A_n) = (\emptyset,\ldots,\emptyset)$\textbf{;} $k = \lceil m/n\rceil$ 
			\For{$r = 1, \dots, k$} \Comment{{\small Each value of $r$ determines the corresponding \emph{round}. }}
			\For{$i = 1, \dots, n$} \label{line:rr3}
			\State $g = \arg\max_{h \in S}\bm{b}_i(h)$ 
			\Comment{{\small Break ties lexicographically (hence we use {\footnotesize ``$=$''} instead of {\footnotesize ``$\in$''}).}}
			\State $A_i = A_{i} \cup \{g\}$ \Comment{{\small Current agent receives (what appears to be) her favorite available good.}}
			\State $S = S\setminus \{g\}$ \Comment{{\small The good is no longer available.}}\label{line:rr6} \vspace{-2pt}
			\EndFor
			\EndFor
			\State \textbf{Return}: $\mathcal{A}=(A_1,\dots,A_n)$
		\end{algorithmic}
		\label{alg:MRR}
	\end{algorithm}

\begin{lemma}[Follows from the proof of Theorem 12.2 of \cite{Markakis17-survey}]\label{lem:ef1_of_RR}
Let $i\in N$. If $\bm{b}_i$ is the \emph{truthful} bid of agent $i$, then the allocation $\mathcal{A}$ returned by Round-Robin$(\bm{b}_1, \ldots, \bm{b}_n)$ is \efo from $i$'s perspective, i.e., for all $j\in N$, with $A_j\neq\emptyset$, there exists $g\in A_j$, such that $v_i(A_i) \geq  v_i(A_j\setminus \{g\})$. Moreover, if $i=1$, then $\mathcal{A}$ is \ef from her perspective, i.e., for all $j\in N$, $v_1(A_1) \geq  v_1(A_j)$.
\end{lemma}

Although it is long known that truth-telling is generally not a PNE in sequential allocation mechanisms (a special case of which is Round-Robin) \cite{KohlerC71}, we present here a minimal example that illustrates the mechanics of manipulation.
Let $N=\{1, 2\}$ and $M=\{a,b,c\}$ with the valuation functions being as shown in the table on the left.
The circles show the allocation returned by Round-Robin when the agents bid their true values, whereas the superscripts indicate in which order were the goods assigned. Given that agent 2 is not particularly interested in good $a$, agent 1 can manipulate the mechanism into giving her $\{a,b\}$ instead $\{a,c\}$ by claiming that these are her top goods as in the table on the right. 
\[\begin{array}{c c c c}
			& a & b & c \\
			v_1: &\circled{6}^1 & 5 & \circled{4}^3 \\   
			v_2: & 4 & \circled{6}^2 & 5 
\end{array}
\qquad\qquad\qquad
\begin{array}{c c c c}
			& a & b & c \\
			\bm{b}_1: &\circled{5}^3 & \circled{6}^1 & 4 \\   
			v_2: & 4 & 6 & \circled{5}^2 
\end{array}
\]
Thus, bidding according to $v_1, v_2$ is not a PNE. The example is minimal, in the sense that with just $1$ agent or less than $3$ goods truth-telling is a PNE of Round-Robin almost trivially.

Before moving to the main technical part of this section, 
we discuss some  assumptions that again are without loss of generality, and give an easy proof for the case of two agents.
Round-Robin as a mechanism is known to have PNE for any instance where \textit{no agent values two goods exactly the same}, and at least some such equilibria (namely, the ones consistent with the so-called \emph{bluff profile}) are easy to compute \cite{GW17}.
From a technical point of view, this assumption that all the valuation functions induce strict preference rankings is convenient, as it greatly reduces the number of corner cases one has to deal with. 
However, as we show in Theorem \ref{thm:general_bluff} in  
the Appendix,
the result of \citet{GW17} on the existence of Round-Robin's PNE extends to general additive valuation functions.
On a different but related note, we assume, for the remainder of this section, that all the bid vectors induce strict preference rankings (but not necessarily consistent with  the preference rankings induced by the corresponding valuation functions). 
This is without loss of generality, because even if a bid vector contains some bids that are equal to each other, a strict preference ranking is imposed by the lexicographic tie-breaking of the mechanism itself. So, formally, when we abuse the notation and write $g\succ_i h$ we  mean that either  $\bm{b}_i(g) > \bm{b}_i(h)$, or $\bm{b}_i(g) = \bm{b}_i(h)$ and $g$ has a lower index than $h$ in the standard naming of goods as $g_1, g_2, \ldots, g_m$.

Next, we show that for only two agents all PNE of Round-Robin are \efo with respect to the real valuation functions. To appreciate this easy result, one should compare it to the involved general proof of Theorem \ref{thm:main_theorem_for_n} in the next section, the full complexity of which seems to be necessary even for $n=3$. The straightforward but crucial observation that makes things work here is that envy-freeness and proportionality are equivalent when there are only two agents.

\begin{theorem}\label{thm:main_theorem_for_2}
	For any fair division instance $\mathcal{I}=(\{1,2\},M,\mathbf{v})$, every PNE of the Round-Robin mechanism is \efo with respect to the valuation functions $v_1,v_2$. 
\end{theorem}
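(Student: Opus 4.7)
The plan is to exploit the equivalence of envy-freeness and proportionality for two agents, and to consider, for each agent $i \in \{1, 2\}$, a single well-chosen deviation: bidding truthfully, i.e., the vector $\bm{b}'_i$ with $\bm{b}'_i(g) = v_i(g)$ for all $g \in M$. Let $\mathcal{A} = (A_1, A_2)$ denote the allocation at a PNE $\mathbf{b} = (\bm{b}_1, \bm{b}_2)$, and let $\mathcal{A}^{(i)} = (A_1^{(i)}, A_2^{(i)})$ denote the allocation obtained from the deviation profile $(\bm{b}'_i, \mathbf{b}_{-i})$.

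For agent $1$ (the first picker), I would adapt the standard pairing argument that gives envy-freeness of the first-picking agent in the non-strategic Round-Robin. In the deviated execution her picks are guided by her true values, so in every round $r$ her pick $g'_r$ satisfies $v_1(g'_r) \geq v_1(h'_r)$, where $h'_r$ is agent $2$'s pick in the same round, since $h'_r$ is still available at agent $1$'s turn. Summing over rounds gives $v_1(A_1^{(1)}) \geq v_1(A_2^{(1)})$, equivalently $v_1(A_1^{(1)}) \geq v_1(M)/2$. The best-response inequality $v_1(A_1) \geq v_1(A_1^{(1)})$ then yields $v_1(A_1) \geq v_1(M)/2 \geq v_1(A_2)$, so agent $1$ is in fact envy-free, strictly stronger than \efo.

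For agent $2$, the key observation is that agent $1$'s first pick is determined solely by $\bm{b}_1$, so the same good $g^*$ is picked first by agent $1$ in both the PNE execution and in agent $2$'s truthful deviation; in particular $g^* \in A_1 \cap A_1^{(2)}$. Under the deviation, agent $2$ picks according to $v_2$, so the standard Round-Robin \efo pairing (matching agent $2$'s round-$r$ pick with agent $1$'s round-$(r{+}1)$ pick, which was still available to her at round $r$) gives $v_2(A_2^{(2)}) \geq v_2(A_1^{(2)} \setminus \{g^*\})$. Rewriting the right-hand side as $v_2(M) - v_2(A_2^{(2)}) - v_2(g^*)$ turns this into $2 v_2(A_2^{(2)}) + v_2(g^*) \geq v_2(M)$. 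Combining with the best-response inequality $v_2(A_2) \geq v_2(A_2^{(2)})$ and using $g^* \in A_1$ then yields $v_2(A_2) \geq v_2(A_1) - v_2(g^*) = v_2(A_1 \setminus \{g^*\})$, which is \efo for agent $2$.

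There is no serious obstacle for $n = 2$ precisely because the single good that controls the \efo slack for agent $2$, namely agent $1$'s first pick $g^*$, is invariant to agent $2$'s bid. This structural invariance is exactly what fails for $n \geq 3$: the identities of the goods that later agents end up comparing themselves against depend on the bids of earlier agents in a way that responds to the deviating agent's own bid, which is why the general-$n$ proof in Theorem~\ref{thm:main_theorem_for_n} needs the recursive construction of nicely structured bid profiles mentioned in the introduction.
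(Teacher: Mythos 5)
Your proof is correct and follows essentially the same route as the paper's: for each agent you compare the equilibrium outcome against the truthful deviation, use the known fairness guarantee of truthful picking in Round-Robin (envy-freeness for the first picker, and \efo with the exempted good being the first picker's first pick for the second), convert it to a proportionality threshold via the $n=2$ equivalence of \ef and \prop, and close with the best-response inequality. The paper phrases this as a proof by contradiction and cites the truthful-picking guarantee rather than re-deriving the pairing, but these are cosmetic differences; your closing observation about why the argument breaks for $n\ge 3$ also matches the paper's discussion.
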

	
	\begin{proof}
		Suppose towards a contradiction  that this is not the case. That is, there exists a PNE $\mathbf{b} = (\bm{b}_{1}, \bm{b}_{2})$ such that in the allocation $(A_1, A_2)$ returned by Round-Robin$(\mathbf{b})$ at least one of the agents envies the other, even after removing the most valuable good from her bundle. We will examine each agent separately. 
		
		If agent 1 does not see the allocation as \efo, then this means that she does not see it as \ef either. 
		Since envy-freeness and proportionality are equivalent for $n=2$, we get that $v_1(A_1) < v_1(M)/2$.
		According to Lemma \ref{lem:ef1_of_RR}, no matter what agent 2 bids, if agent 1 reports her true values to Round-Robin, the resulting allocation is \ef from her perspective.
		So, if $(A'_1, A'_2)$ is the allocation after agent 1 deviates to her true values, it is \ef from the point of view of the agent 1, which in turn implies that 
		$v_1(A'_1) \ge v_1(M)/2> v_1(A_1)$. This contradicts the fact that $\mathbf{b}$ is a PNE.
		
		If agent 2 does not see the allocation as \efo, then let $h_1$ be the good that agent 1 takes during the first round of round-robin, and $g^*\in  \arg\max_{h \in A_1}v_2(h)$  be the highest valued good in $A_1$ according to agent 2. Since agent 2 does not consider $(A_1, A_2)$ to be \efo, we have that $v_2(A_2)<v_2(A_1\setminus\{g^*\}) \leq v_2(A_1\setminus\{h_1\})$.
		This implies that the partition $(A_1\setminus \{h_1\}, A_2)$ of $M\setminus \{h_1\}$ is not an \ef allocation with respect to agent 2. Now we may use  a similar argument as in the previous case. First, since envy-freeness and proportionality are equivalent when $n=2$, we get that $v_2(A_2)< {v_2(M\setminus \{h_1\})}/{2}$. Then suppose agent 2 deviates to reporting her true values and let $(A'_1, A'_2)$ be the resulting allocation. Notice that the allocation of good $h_1$ is not affected by the deviation; it is still given to agent 1 during the first step of Round-Robin. From that point forward, the execution of the mechanism would be exactly the same as it would be if the input was the restrictions of $\bm{b}_{1}, v_2$ on $M\setminus \{h_1\}$ and agent 2 had higher priority than agent 1. The latter would result in an \ef allocation  with respect to agent 2 and, in particular, to the allocation $(A'_1\setminus \{h_1\}, A'_2)$. That is, we have $v_2(A'_2)\ge {v_2(A'_1\setminus \{h_1\})}$ and, therefore, $v_2(A_2)\ge {v_2(M\setminus \{h_1\})}/{2} > v_2(A_2)$. Like before, this contradicts the fact that $\mathbf{b}$ is a PNE. 
	\end{proof}

Moving to the case of general $n \ge 3$, the above simple argument no longer works. When an agent $i$ does not consider an allocation \efo because of an agent $i'$, this does not imply that $i$ got value less than ${1}/{n}$ of her value for the reduced bundle $M\setminus \{g^*\}$, where $g^*$ is her best good in $A_{i'}$. The reason for this is that  $\prop\not\Rightarrow \ef$  
anymore.

\subsection{Nash Equilibria of Round-Robin for Any Number of Agents}\label{subsec:RRn}

Here we state and prove the main result of our work.  
Despite its proof being rather involved, the intuition behind it is simple. 
As is often the case with proofs about \efo in variants of Round-Robin, the analysis boils down to arguing about agent 1 having no envy towards any other bundle.
On one hand, we know that whenever agent 1 bids truthfully, she sees the resulting allocation as being \ef (Lemma \ref{lem:ef1_of_RR}). On the other hand, no matter what agent 1 bids, we show it is possible to ``replace'' her with an imaginary version of herself who {\em (i)} does not affect the allocation, {\em (ii)} bids truthfully, and {\em (iii)} she considers the bundles of the allocation to be as valuable as the original agent 1 thought they were.
The rather elaborate formal argument relies on the recursive construction of auxiliary valuation functions and bids, done in Lemma \ref{lem:main_lemma}, and on the fact that small changes in a single preference ranking minimally change the ``history'' of available goods during the execution of the mechanism as shown in Lemma \ref{lem:swap}. For a high level description of the two lemmata, see the corresponding discussions before their statements, as well as Figure \ref{fig:main_lemma} which visualizes the main steps of the recursive construction of the alternative version of agent 1.

\begin{theorem}\label{thm:main_theorem_for_n}
	For any fair division instance $\mathcal{I}=(N,M,\mathbf{v})$, every PNE 
	of the Round-Robin mechanism 
	is \efo with respect to the valuation functions $v_1,\ldots,v_n$. 
\end{theorem}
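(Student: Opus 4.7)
I fix a PNE $\mathbf{b}=(\bm{b}_1,\ldots,\bm{b}_n)$ with resulting allocation $\mathcal{A}=(A_1,\ldots,A_n)$ and an agent $i$, and aim to show $\mathcal{A}$ is \efo from $i$'s perspective under the true $v_i$. My starting point is a ``truthful-agent'' lemma: for \emph{any} bids $\mathbf{b}_{-i}$ of the others, Round-Robin$(v_i,\mathbf{b}_{-i})$ produces an allocation that is \efo from $i$'s perspective under $v_i$. I prove this by round-pairing: denote agent $i$'s round-$r$ pick by $x^i_r$ and any other agent $j$'s round-$r$ pick by $y^j_r$. If $i$ precedes $j$ in the fixed order, then $y^j_r$ was available when $i$ picked truthfully in round $r$, so $v_i(x^i_r)\ge v_i(y^j_r)$; if $j$ precedes $i$, then $y^j_{r+1}$ was still available at $i$'s round-$r$ turn, so $v_i(x^i_r)\ge v_i(y^j_{r+1})$. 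Summing over $r$ gives \ef in the first case and \efo in the second, with witness $y^j_1$.

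To leverage this when $\bm{b}_i\neq v_i$, my plan is to construct an auxiliary valuation $\tilde v_i$ with (a) Round-Robin$(\tilde v_i,\mathbf{b}_{-i})$ producing exactly $\mathcal{A}$, and (b) $\tilde v_i(A_k)=v_i(A_k)$ for every $k\in N$, together with enough control on individual $\tilde v_i$-values that the \efo witness under $\tilde v_i$ can be taken with $\tilde v_i$-value no larger than $\max_{g\in A_j}v_i(g)$ for each competing $j$. Then the truthful-agent lemma applied to the ``replacement'' agent bidding $\tilde v_i$ yields \efo for $\mathcal{A}$ under $\tilde v_i$; condition (b) and the bound on the witness translate this back to \efo under $v_i$. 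I build $\tilde v_i$ by a recursion on the rounds: at round $r$, I assign $\tilde v_i$-values on goods picked at that round so as to reproduce $i$'s actual pick (it suffices to make that good top-ranked among currently available goods), and I use the remaining freedom on later-picked goods to accumulate the required bundle totals. Since Round-Robin consults only the ranking restricted to currently available goods, later adjustments do not disturb earlier picks, which is what makes the recursion go through.

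The main obstacle will be reconciling the \emph{local} ranking constraint at each step — the chosen good must dominate under $\tilde v_i$ all other currently available goods — with the \emph{global} bundle-total constraint. A naive within-bundle permutation of $v_i$'s values immediately gives (b) but can break the global ordering across bundles and change the allocation, violating (a). The recursive construction has to thread this needle, choosing magnitudes so that both the ordering on available goods at every pick and the per-bundle totals stay consistent; this is where the ``nicely structured'' bid profiles mentioned in the overview enter the picture. Given the authors' remark that the full complexity seems necessary already for $n=3$, I expect most of the technical work to lie in making this recursive balance explicit and in verifying that the flexibility afforded at each round's last pick suffices to absorb any residual mismatch as we propagate forward, while also ensuring the witness-value bound — presumably by arranging, recursively, that the value the replacement agent assigns to each $y^j_1$ is controlled by a true $v_i$-value of some good in $A_j$.
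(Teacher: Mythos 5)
Your overall architecture coincides with the paper's: both proofs rest on (i) the classical fact that a truthfully-bidding agent is satisfied with Round-Robin's output regardless of the others' bids, and (ii) the existence of a surrogate valuation $\tilde v_i$ whose truthful play reproduces the equilibrium allocation while agreeing with $v_i$ on the quantities that matter. (The paper phrases (i) as full envy-freeness for the \emph{first} agent and handles an agent $\ell\ge 2$ by deleting the $\ell-1$ goods taken before $\ell$'s first turn and restarting the mechanism at $\ell$, which makes the \efo witness explicit and lets it invoke the surrogate lemma only for a first-picking agent; your direct \efo round-pairing for an agent in position $i$ is an equivalent packaging. Your condition (b) on bundle totals plus the separate witness-control requirement is implied by the paper's cleaner formulation, which demands that $\tilde v_i$ agree with $v_i$ \emph{pointwise} outside $A_i$ and only in total on $A_i$.)

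The gap is that the surrogate lemma is the entire technical content of the theorem, and your proposal defers it rather than proving it. Two ingredients are missing, and neither is routine. First, nowhere in your recursive construction do you invoke the hypothesis that $\bm{b}_i$ is a best response; yet the construction is impossible without it, since for a non-best-response bid the conclusion of the surrogate lemma (and of the theorem) is simply false. Concretely, when $i$'s round-$r$ pick $h$ is not the $v_i$-maximal available good $\lambda_r$, you must inflate $\tilde v_i(h)$ above $v_i(\lambda_r)$ and compensate by deflating the values of $i$'s later picks, and the best-response property is exactly what guarantees that those later picks carry enough surplus over the goods they must continue to dominate. Second, quantifying that surplus requires comparing the run of Round-Robin on the actual bid with the run on a hypothetical bid in which $h$ is demoted in the ranking; these two runs have diverging histories of available goods, and one needs a structural fact (the paper's ``partial slide'' lemma: the available sets differ by at most one good at every step) to lower-bound the second-best available good in each later round. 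Your observation that ``later adjustments do not disturb earlier picks'' is correct but addresses only the easy direction; the hard direction --- that the bundle-total and local-ranking constraints are simultaneously satisfiable --- is precisely where these two ingredients are needed, and the proposal as written does not supply them.
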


As we will see shortly, proving Theorem \ref{thm:main_theorem_for_n} reduces to showing that the agent who ``picks first'' in the Round-Robin mechanism views the final allocation as envy-free, as long as she bids a best response to other agents' bids. 
Although Theorem \ref{thm:best_response} sounds very much like the standard statement about the value of the first agent in the algorithmic setting, its proof relies on a technical lemma that carefully builds a ``nice'' instance which is equivalent, in some sense, to the original. Recall that we have assumed that the agents' priority is indicated by their indices.

\begin{theorem}\label{thm:best_response}
	For any fair division instance $\mathcal{I}=(N,M,\mathbf{v})$, if the reported bid vector $\bm{b}_1$ of agent 1 is a best response to the (fixed) bid vectors $\bm{b}_2, \ldots, \bm{b}_n$ of all other players, then agent 1 does not envy (with respect to $v_1$) any bundle in the  allocation outputted by Round-Robin$(\bm{b}_1, \ldots, \bm{b}_n)$.
\end{theorem}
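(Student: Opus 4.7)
The plan is to prove Theorem \ref{thm:best_response} by an \emph{auxiliary-valuation} argument. Given a best response $\bm{b}_1$ leading to the allocation $(A_1,\ldots,A_n)$ under Round-Robin, I would aim to construct an additive valuation $\tilde{v}_1$ on $M$ with two properties: (i) running Round-Robin on $(\tilde{v}_1,\bm{b}_2,\ldots,\bm{b}_n)$ yields exactly the same allocation $(A_1,\ldots,A_n)$, and (ii) $\tilde{v}_1(A_1)\le v_1(A_1)$ while $\tilde{v}_1(A_j)\ge v_1(A_j)$ for every $j\ne 1$. Once such a $\tilde{v}_1$ is at hand, the elementary first-picker argument---in each round agent $1$ picks her $\tilde{v}_1$-top available good, so round by round her pick weakly $\tilde{v}_1$-dominates the pick of any other agent in that round, and summing yields $\tilde{v}_1(A_1)\ge\tilde{v}_1(A_j)$---gives $v_1(A_1)\ge\tilde{v}_1(A_1)\ge\tilde{v}_1(A_j)\ge v_1(A_j)$ for every $j$, which is the desired envy-freeness. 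Note that the existence of such a $\tilde{v}_1$ is strictly tied to $\bm{b}_1$ being a best response: in simple examples where agent $1$ envies some $A_j$ while not at best response, the ranking constraint from (i) is already incompatible with (ii).

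The construction of $\tilde{v}_1$ would proceed by induction on the number of rounds $k=\lceil m/n\rceil$. The base case $k=1$ is immediate: agent $1$ picks exactly once, and the best-response hypothesis forces $v_1(p_1)\ge v_1(g)$ for every other $g\in M$ (otherwise she would have bid $g$ at the top in a single-round game, picked it, and strictly improved), so $\tilde{v}_1=v_1$ satisfies both properties. For the inductive step, one peels off round $1$ and considers the sub-instance on $M'=M\setminus\{p_1,a_{1,2},\ldots,a_{1,n}\}$, where $p_1$ is agent $1$'s round-$1$ pick and $a_{1,j}$ is agent $j$'s. On the sub-instance agent $1$ still picks first in each of the remaining $k-1$ rounds, and her restricted bid $\bm{b}_1|_{M'}$ is itself a best response (any sub-instance deviation lifts to the original by keeping $\bm{b}_1(p_1)$ arbitrarily large, so all round-$1$ picks are preserved). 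The induction supplies an auxiliary $\tilde{v}_1'$ on $M'$, and $\tilde{v}_1$ is obtained by extending $\tilde{v}_1'$ with values on the round-$1$ picks chosen to meet the ranking constraint $\tilde{v}_1(p_1)>\tilde{v}_1(g)$ for every $g\ne p_1$, together with the sum constraints of (ii).

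The main obstacle is the feasibility of the extension at the inductive step. The inductive hypothesis delivers sum inequalities $\tilde{v}_1'(A_j\setminus\{a_{1,j}\})\ge v_1(A_j\setminus\{a_{1,j}\})$ for $j\ne 1$ and $\tilde{v}_1'(A_1\setminus\{p_1\})\le v_1(A_1\setminus\{p_1\})$, which together may push the required value $\tilde{v}_1(p_1)$ both upward (to clear the constraints $\tilde{v}_1(p_1)>\tilde{v}_1(a_{1,j})$ and $\tilde{v}_1(p_1)>\max_{g\in M'}\tilde{v}_1'(g)$) and downward (to fit the cap $v_1(A_1)-\tilde{v}_1'(A_1\setminus\{p_1\})$ on $\tilde{v}_1(p_1)$), a priori with no feasible choice. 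Handling this is where I expect the authors' hints about ``recursive construction of auxiliary valuation functions'' and ``small changes in a single preference ranking minimally change the history of available goods'' to be essential: the induction must carry a strengthened invariant beyond (i) and (ii)---controlling, for instance, $\max_{g\in M'}\tilde{v}_1'(g)$ and the slack in each sum inequality---and any putative violation of the ranking constraint must be convertible, via a local swap in $\bm{b}_1$ that changes the execution history in only a controlled way (promoting some offending good from $A_j$ or from $M'$ to the top), into an explicit deviation strictly improving agent $1$'s $v_1$-utility, contradicting best response. Carrying the correct invariant through all $k$ rounds while keeping execution histories under control is, I expect, the technical heart of the proof, and the reason the full complexity is already needed for $n=3$.
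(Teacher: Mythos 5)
Your high-level strategy is the same as the paper's: replace agent~1 by a truthful ``imaginary'' agent whose valuation reproduces the allocation, apply the standard first-picker envy-freeness argument, and transfer the conclusion back to $v_1$ via sum (in)equalities. The paper's Lemma~\ref{lem:main_lemma} plays exactly the role of your $\tilde{v}_1$ (with equalities $v_1^*(A_1)=v_1(A_1)$ and $v_1^*(g)=v_1(g)$ for $g\notin A_1$ in place of your one-sided inequalities). However, the construction of this valuation is the entire content of the proof, and the forward induction you propose --- peel off round~1, recurse on $M'$, then extend --- has a genuine gap that is not merely a matter of bookkeeping: it fails already on the paper's own three-good example. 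There, $v_1(a)=6$, $v_1(b)=5$, $v_1(c)=4$, the best response has agent~1 pick $b$ first and $a$ second, so $p_1=b$, $A_1=\{a,b\}$, $M'=\{a\}$, and your base case returns $\tilde{v}_1'(a)=v_1(a)=6$. The extension then needs $\tilde{v}_1(b)>\tilde{v}_1(a)=6$ for the ranking constraint but $\tilde{v}_1(b)\le v_1(A_1)-\tilde{v}_1'(a)=11-6=5$ for the sum cap: infeasible. A feasible $\tilde{v}_1$ exists (e.g.\ $b\mapsto 5$, $a\mapsto 4.5$, $c\mapsto 4$), but only by \emph{lowering} the value of a good in $M'$ that the induction has already fixed; so no invariant of the form ``here is one valuation $\tilde{v}_1'$ on $M'$'' can be strong enough, and the recursion must be allowed to redistribute value among the goods of $A_1$ across rounds.

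This is precisely why the paper builds the valuation \emph{backwards}, from the last round $k$ down to round~$1$: at round $r$ it moves value from the later goods $h_{n_{r+1}},\ldots,h_{n_k}\in A_1$ onto $h_{n_r}$, and the amount of slack each later good can give up while still being picked in its round is certified by combining the best-response inequality (\ref{eq:sum_of_lambdas}) with Lemma~\ref{lem:swap}, which bounds how much the ``history'' of available goods can change under a partial slide of agent~1's ranking. You correctly anticipate that these two ingredients are where the difficulty is resolved, but your proposal stops at naming the obstacle rather than overcoming it, and the specific decomposition you chose (forward, committing to values on $M'$ before knowing the round-1 cap) points in a direction that cannot be completed without restructuring. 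I would also flag a minor point in your base case: $\tilde{v}_1=v_1$ reproduces the round-1 pick only once ties in $v_1$ are broken consistently with the mechanism's lexicographic rule, which the paper handles separately in Appendix~\ref{app:strict-vs-ties}.
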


Note that since we are interested in PNE, it is always the case that each agent's bid is a best response to other agents' bids. As mentioned above,  Theorem \ref{thm:best_response} is essentially a corollary to Lemma \ref{lem:main_lemma}. The lemma  shows the existence of an alternative version of agent 1 who is truthful, her presence does not affect the original allocation, and, as long as the allocation is the same, she shares the same values with the original agent 1. Although its proof is rather involved, the high level idea is that we recursively construct a sequence of bids and valuation functions, each pair of which preserves the original allocation and the view of agent 1 for it, while being closer to being truthful. To achieve this we occasionally move value between the goods originally allocated to agent 1 and update the bid accordingly.

\begin{lemma}\label{lem:main_lemma}
	Suppose that the valuation function $v_1$ induces a strict preference ranking on the 
	goods.
	Let $\,\mathbf{b} = (\bm{b}_1, \bm{b}_2, \ldots, \bm{b}_n)$ be such that $\,\bm{b}_1$ is a best response of agent 1 to $\mathbf{b}_{-i} = (\bm{b}_2, \ldots, \bm{b}_n)$. Then there exists 
	a valuation function $v_1^*$ with the following properties:
	\begin{itemize}[labelindent=20pt,leftmargin=*,itemsep=3pt,topsep=5pt]
		\item If $\,\bm{b}_1^*= (v_{1}^*(g_1), v_{1}^*(g_2), \ldots, v_{1}^*(g_m))$, i.e., $\bm{b}_1^*$ is the truthful bid for $v_1^*$, then Round-Robin$(\mathbf{b})$ and Round-Robin$(\bm{b}_1^*, \mathbf{b}_{-1})$ produce the same allocation $(A_1,\dots,A_n)$.
		\item $v_1^*(A_1)=v_1(A_1)$. 
		\item For every good $g\in M\setminus A_1$, it holds that $v_1^*(g)=v_1(g)$.
	\end{itemize}
\end{lemma}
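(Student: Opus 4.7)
The plan is an iterative construction producing a sequence of pairs $(v^{(t)}, \bm{b}^{(t)})$, initialized with $(v_1, \bm{b}_1)$ and terminating once the truthful bid induced by $v^{(t)}$ reproduces the allocation $(A_1,\ldots,A_n)$; at that point I will set $v_1^* := v^{(t)}$. Throughout the iteration I maintain three invariants: (I1) Round-Robin$(\bm{b}^{(t)}, \mathbf{b}_{-1})$ returns $(A_1,\ldots,A_n)$; (I2) $v^{(t)}(g) = v_1(g)$ for every $g \in M \setminus A_1$; and (I3) $v^{(t)}(A_1) = v_1(A_1)$. A preliminary observation, used throughout, is that any bid producing picks $a_1,\ldots,a_k$ must already rank them internally as $a_1 \succ \ldots \succ a_k$, since $a_{r+1},\ldots,a_k$ all belong to $S_r$ at the moment $a_r$ is taken; thus the only way $\bm{b}^{(t)}$ can fail to be the truthful bid for $v^{(t)}$ is that for some round $r$ there exists a good $g \in S_r \setminus \{a_r\}$ with $v^{(t)}(g) > v^{(t)}(a_r)$ while $\bm{b}^{(t)}(a_r) > \bm{b}^{(t)}(g)$. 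I call such a $(r,g)$ a \emph{violation}.

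At each iteration I pick a violation $(r,g)$ and repair it, splitting into two cases. If $g = a_s$ for some $s > r$, I redistribute $v^{(t)}$-mass from $a_s$ into $a_r$ so that $v^{(t+1)}(a_r)$ strictly exceeds $v^{(t+1)}(a_s)$, while leaving $\bm{b}^{(t)}$ untouched; invariants (I1)–(I3) are immediate. If instead $g \notin A_1$, then $v^{(t)}(g) = v_1(g)$ by (I2), and I transfer enough mass from a suitable donor $a_s \in A_1$ to $a_r$ to push $v^{(t+1)}(a_r)$ strictly above $v_1(g)$, and simultaneously nudge $\bm{b}^{(t+1)}(a_r)$ upward by an arbitrarily small positive amount (keeping every other coordinate of $\bm{b}^{(t+1)}$ identical to that of $\bm{b}^{(t)}$). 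Because only $\bm{b}^{(t+1)}(a_r)$ is perturbed and only upward, the history of available sets $S_1, S_2, \ldots$ is unchanged, so invariant (I1) survives the step. Termination is driven by a potential function such as the lexicographic pair \emph{(smallest round containing a violation, number of violating goods in that round)}, which strictly decreases at every step.

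The main obstacle is showing that in the second case a valid donor $a_s$ always exists. This is precisely where the best-response hypothesis on $\bm{b}_1$ enters. My plan is by contradiction: if no donor has enough slack above its own round's threshold to send mass to $a_r$, then $v^{(t)}(A_1) = v_1(A_1)$ is too small to support the constraints $v^{(t+1)}(a_r) > v_1(g)$ together with the already-fixed constraints on the other $a_s$'s. From this scarcity I intend to produce an explicit deviation $\bm{b}_1'$ in which agent $1$ bids to grab $g$ (or another suitably chosen high-value non-$A_1$ good) in place of one of her low-value picks, yielding a bundle of strictly larger $v_1$-value and contradicting that $\bm{b}_1$ was a best response. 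The subtle part is that altering a single entry of $\bm{b}_1$ can cascade through later rounds, changing which goods are still on the table for agent $1$ and for the other agents; the key insight, precisely the ``small changes in a single preference ranking minimally change the history of available goods'' idea flagged in the introduction, is that a carefully chosen single-swap deviation perturbs the downstream history in a sufficiently controlled way that the $v_1$-gain from capturing $g$ strictly dominates any incidental losses—closing the contradiction and completing the lemma.
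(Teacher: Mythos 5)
Your high-level strategy---repair the valuation by shifting value among the goods of $A_1$, and invoke the best-response property to argue that enough value is available to shift---matches the paper's approach in spirit. But the proposal leaves unproven exactly the step that constitutes the technical core of the lemma: the existence of a donor with sufficient slack in your Case 2. You state this as a ``plan by contradiction'' via a single-swap deviation that grabs $g$ in place of one low-value pick, and you acknowledge that the cascade through later rounds is ``the subtle part.'' That subtle part is the whole difficulty. A single-swap deviation only yields the inequality $v_1(\text{new bundle}) \le v_1(\text{old bundle})$, where the new bundle's downstream content is a priori uncontrolled; to extract from this a quantitative guarantee that specific donors $a_s$ have enough value above specific floors, you need (a) a precise characterization of how much each later-round pick can be lowered without changing the allocation---i.e., a per-round floor of the form $\max\{v(h_{n_{s+1}}), v(\ell_s)\}$ where $\ell_s$ is the best good taken by another agent in round $s$---and (b) a proof that the aggregate slack above these floors covers the deficit $v_1(g)-v^{(t)}(a_r)$. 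The paper obtains both by comparing against the deviation ``bid truthfully from round $r$ onward'' (not a single swap), and by proving a separate combinatorial lemma (the partial-slide lemma) showing that the sets of available goods under the two bids differ by at most one good at every time step, which is what pins down the floors. Without carrying out this argument, the claim ``from this scarcity I produce a profitable deviation'' is an assertion, not a proof.

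There is also a structural issue with your termination argument. Repairing a violation at round $r$ by draining a donor $a_s$ can create fresh violations at round $s$ (goods in $S_s$ now exceeding $v^{(t+1)}(a_s)$). Your lexicographic potential only decreases if every donor lies in a round strictly later than $r$, and if the newly created violations at those later rounds are themselves repairable---which reopens the donor-existence question recursively. The paper sidesteps this by processing rounds backwards from $k$ to $1$ and maintaining the invariant that the bid is already truthful from round $r+1$ onward before round $r$ is touched, so that the donors for round $r$ are exactly $h_{n_{r+1}},\ldots,h_{n_k}$ and the amount taken from each is capped by its floor in a single coordinated step (with explicit coefficients $\alpha\cdot\varepsilon_i$) rather than by an open-ended repair loop. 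You would need either to adopt that ordering or to prove that your local-repair process cannot cycle or exhaust its donors; neither is established in the proposal.
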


For the sake of presentation, we defer the proof of the lemma to the end of this section 
(as it needs an additional technical lemma that is itself quite long) 
and move to the proofs of Theorems \ref{thm:main_theorem_for_n} and \ref{thm:best_response}. In fact, given Lemma \ref{lem:main_lemma}, the two theorems are not hard to prove.
\medskip

\begin{proof}[Proof of Theorem \ref{thm:best_response}.]
	Consider an arbitrary instance $\mathcal{I}=(N,M,\mathbf{v})$ and assume that the input of Round-Robin is $\,\mathbf{b} = (\bm{b}_1, \bm{b}_2, \ldots, \bm{b}_n)$, where $\bm{b}_1$ is a best response of agent 1 to $\mathbf{b}_{-i} = (\bm{b}_2, \ldots, \bm{b}_n)$ according to her valuation function $v_1$. Let $(A_1, \dots,A_n)$ be the output of Round-Robin$(\mathbf{b})$. 
	In order to apply Lemma \ref{lem:main_lemma}, we need $v_1$ to induce a strict preference ranking over the 
	goods. 
	For the sake of presentation, we assume here that this is indeed the case, and we treat the general case formally in 
    the Appendix, as it needs an additional technical lemma (Lemma \ref{lem:ties}).
	So, we now consider the hypothetical scenario implied by Lemma \ref{lem:main_lemma} in this case: keeping agents 2 through $n$ fixed, suppose that the valuation function of agent 1 is the function $v_1^*$ given by the lemma, and her bid $\bm{b}_1^*$ is the truthful bid for $v_1^*$. The first part of Lemma \ref{lem:main_lemma} guarantees that the output of Round-Robin$(\bm{b}_1^*, \mathbf{b}_{-i})$ remains $(A_1,\dots,A_n)$.

	According to Lemma \ref{lem:ef1_of_RR}, no matter what others bid, if agent 1 (the agent with the highest priority here) reports her true values (i.e., according to $v^*_1$) to Round-Robin, the resulting allocation is \ef from her perspective.
	In our hypothetical scenario this translates into having $v_1^*(A_1)\ge v_1^*(A_i)$ for all $i \in N$. Then the second and third parts of Lemma \ref{lem:main_lemma} imply that $v_1(A_1)\ge v_1(A_i)$ for all $i \in N$, i.e., agent 1 does not envy any bundle in the original instance. 
\end{proof}

Having shown Theorem \ref{thm:best_response}, the proof of Theorem \ref{thm:main_theorem_for_n} is of similar flavour to the  proof on Round-Robin producing \efo allocations in the non-strategic setting \cite{Markakis17-survey}.\medskip

\begin{proof}[Proof of Theorem \ref{thm:main_theorem_for_n}.]
Let $\mathbf{b} = (\bm{b}_1, \bm{b}_2, \ldots, \bm{b}_n)$ be a PNE of the Round-Robin mechanism for the instance $\mathcal{I}$. By Theorem \ref{thm:best_response}, it is clear that the allocation returned by Round-Robin$(\mathbf{b})$ is \ef, and hence \efo, from the point of view of agent 1. We fix an agent $\ell$, where $\ell \ge 2$. 
For $i\in [\ell -1]$, let $h_i$ be the good that agent $i$ \textit{claims to be} her favourite among the goods that are available when it is her turn in the first round, i.e.,
$h_i= \argmax_{h\in M\setminus\{h_1\ldots,h_{i-1}\}}{b_{i}(h)}$. 
Right before agent $\ell$ is first assigned a good, all goods in $H = \{h_1, \ldots, h_{\ell - 1}\}$ have already been allocated. We are going to consider the instance $\mathcal{I'}=(N',M',\mathbf{v}')$ in which all goods in $H$ are missing. That is, $N'=N$, $M' = M\setminus H$, and $\mathbf{v}' = (v'_1, \ldots, v'_n)$ where $v'_i = v_i|_{M'}$, for $i\in [n]$, is the restriction of the function $v_i$ on $M'$. Similarly define $\bm{b}'_i = \bm{b}_i|_{M'}$, for $i\in [n]$, the restrictions of the bids to the available goods, and $\,\mathbf{b}' = (\bm{b}'_1, \ldots, \bm{b}'_n)$. Finally, we consider the version of Round-Robin, call it Round-Robin$_{\ell}$, that starts with agent $\ell$ and then follows the indices in increasing order.

We claim that for Round-Robin$_{\ell}$ the bid $\bm{b}'_{\ell}$ is a best response for agent $\ell$ assuming that the restricted bid vectors of all the other agents are fixed. To see this, notice that for any $\bm{c}_{\ell} = (c_{\ell 1}, c_{\ell 2}, \ldots, c_{\ell m})$, the bundles given to agent $\ell$ by Round-Robin$(\bm{c}_{\ell}, \mathbf{b}_{-\ell})$ and Round-Robin$_{\ell}(\bm{c}_{\ell}|_{M'}, \mathbf{b}'_{-\ell})$ are the same! In fact, the execution of Round-Robin$_{\ell}(\bm{c}_{\ell}|_{M'}, \mathbf{b}'_{-\ell})$ is identical to the execution of Round-Robin$(\bm{c}_{\ell}, \mathbf{b}_{-\ell})$ from its $\ell$th step onward. So, if  $\bm{b}'_{\ell}$ was not a best response in the restricted instance, then there would be a profitable deviation for agent $\ell$, say $\bm{b}^*_{\ell}$, so that $\ell$ would prefer her bundle in Round-Robin$_{\ell}(\bm{b}^*_{\ell}, \mathbf{b}'_{-\ell})$ to her bundle in Round-Robin$_{\ell}(\mathbf{b}')$. This would imply that any extension of $\bm{b}^*_{\ell}$ to a bid vector for all goods in $M$ (by arbitrarily assigning numbers to goods in $H$) would be a profitable deviation for agent $\ell$ in the profile $\mathbf{b}$ for Round-Robin, contradicting the fact that $\mathbf{b}$ is a PNE.

Now we may apply Theorem \ref{thm:best_response} for Round-Robin$_{\ell}$ (where agent $\ell$ plays the role of agent 1 of the theorem's statement) for instance $\mathcal{I'}$ and bid profile $\mathbf{b}'$. The theorem implies that  agent $\ell$ does not envy any bundle in the allocation $(A_1, \dots,A_n)$ outputted by Round-Robin$_{\ell}(\mathbf{b}')$, i.e., $v'_{\ell}(A_\ell) \ge v'_{\ell}(A_i)$, for all $i\in [n]$. Using the observation made above about the execution of Round-Robin$_{\ell}(\mathbf{b}')$ being identical to the execution of Round-Robin$(\mathbf{b})$ after $\ell-1$ goods have been allocated, we have that Round-Robin$(\mathbf{b})$ returns the allocation $(A_1\cup\{h_1\}, \dots,A_{\ell -1}\cup\{h_{\ell -1}\}, A_{\ell}, \ldots, A_n)$. So, for any $i<\ell$ we have
$v_{\ell}(A_\ell) = v'_{\ell}(A_\ell) \ge v'_{\ell}(A_i)  = v_{\ell}(A_i) 
= v_{\ell}((A_i \cup \{h_i\}) \setminus \{h_i\})$,
whereas for $i>\ell$ we simply have
$v_{\ell}(A_\ell) = v'_{\ell}(A_\ell) \ge v'_{\ell}(A_i) = v_{\ell}(A_i)$.
Thus, the allocation returned by Round-Robin$(\mathbf{b})$ is \efo from the point of view of agent $\ell$. 
\end{proof}

Before we move on to the proof of Lemma \ref{lem:main_lemma}, we state another technical lemma. 
Suppose an agent changes her bid so that in her preference ranking a single good is moved down the ranking, and then---keeping everything else fixed---we run Round-Robin on the new instance. Surprisingly, Lemma \ref{lem:swap} states that, in any step, the set of available goods 
differs by at most one good from the corresponding set in the original run of Round-Robin. To formalize this, we need some additional notation and terminology.

\smallskip\begin{definition}
	Let $\succ$ and $\succ'$ be two strict preference rankings on $M$ and $\{q_1, q_2, \ldots, q_m\}$ be a renaming of the goods according to $\succ$, i.e., $q_1 \succ q_2 \succ \ldots \succ q_m$. We say that $\succ$ and $\succ'$ are \emph{within a partial slide of each other} 
	if there exist $x,y \in [m]$, $x<y$, such that 
\[q_1 \succ' \ldots \succ' q_{x-1} \succ' q_{x+1} \succ'  \ldots \succ' q_{y} \succ' q_x \succ' q_{y+1} \succ' \ldots  \succ'q_m\,.\]
\end{definition}

Also, given a profile $\mathbf{b} = (\bm{b}_1, \ldots, \bm{b}_n)$, let $M_t(\mathbf{b})$ denote the set of available goods right after $t-1$ goods have been allocated in a run of Round-Robin$(\mathbf{b})$.

When we run Round-Robin on two profiles which induce the same preference rankings for all agents but one, and for this agent the two preference rankings are within a partial slide of each other, then the resulting allocations may different drastically. Yet, as the next lemma states, the available goods at every step are almost the same in the two executions of the mechanism. What happens, roughly speaking, is that at the beginning of each step there is at most one difference between the sets of unallocated goods, and this is difference is either ``fixed'' or ``passed on'' to the next step, possibly slightly altered.

\begin{lemma}\label{lem:swap}
	Let $\mathbf{b} = (\bm{b}_1, \ldots, \bm{b}_n)$ and $\mathbf{b}' = (\bm{b}'_i,\mathbf{b}_{-i})$ be two profiles such that the corresponding induced preference rankings $\succ_i$ and $\succ'_i$ of agent $i$ are within a partial slide of each other. Then $|M_t(\mathbf{b}) \setminus M_t(\mathbf{b}')| = |M_t(\mathbf{b}') \setminus M_t(\mathbf{b})| \le 1$ for all $t\in [m+1]$. 
\end{lemma}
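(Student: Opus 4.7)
The plan is to strengthen the statement and prove by induction on $t \in [m+1]$ the following two-part invariant:
\begin{align*}
\text{(I)}\quad & |M_t(\mathbf{b}) \triangle M_t(\mathbf{b}')| \le 2, \\
\text{(II)}\quad & q_x \in M_t(\mathbf{b}) \;\Longrightarrow\; M_t(\mathbf{b}) = M_t(\mathbf{b}').
\end{align*}
Since $|M_t(\mathbf{b})| = |M_t(\mathbf{b}')| = m - t + 1$, invariant (I) immediately yields the desired $|M_t(\mathbf{b}) \setminus M_t(\mathbf{b}')| = |M_t(\mathbf{b}') \setminus M_t(\mathbf{b})| \le 1$. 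The base case $t=1$ is trivial since $M_1(\mathbf{b}) = M_1(\mathbf{b}') = M$.

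For the inductive step, let $j$ be the agent whose turn it is at step $t$. The structural fact driving the analysis is that $\succ_i$ and $\succ'_i$ agree on every subset of $M$ not containing $q_x$, differing only in the position of $q_x$. In the case $M_t(\mathbf{b}) = M_t(\mathbf{b}')$, any $j \neq i$ picks identically in the two executions. Agent $i$ also picks identically unless her $\succ_i$-top of $M_t(\mathbf{b})$ is $q_x$ while her $\succ'_i$-top is some $q_k \in \{q_{x+1},\dots,q_y\}$; in that event the new symmetric difference becomes exactly $\{q_x, q_k\}$ and $q_x$ leaves $M_{t+1}(\mathbf{b})$, so (I) and (II) hold at $t+1$. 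In the case $M_t(\mathbf{b}) \neq M_t(\mathbf{b}')$, the inductive form of (II) forces $q_x \notin M_t(\mathbf{b})$; combined with $|M_t(\mathbf{b}) \triangle M_t(\mathbf{b}')| = 2$, this gives $q_x \notin S$, where $S := M_t(\mathbf{b}) \cap M_t(\mathbf{b}')$. Writing $M_t(\mathbf{b}) = S \cup \{a\}$, $M_t(\mathbf{b}') = S \cup \{a'\}$, and letting $g_S$ be the common top of $S$ under $\succ_j|_S = \succ'_j|_S$, one checks that the two picks $g$ (in $\mathbf{b}$) and $g'$ (in $\mathbf{b}'$) must satisfy $g \in \{g_S, a\}$ and $g' \in \{g_S, a'\}$. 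Running through the four resulting pairs $(g,g')$ shows that $|M_{t+1}(\mathbf{b}) \triangle M_{t+1}(\mathbf{b}')| \in \{0, 2\}$, while monotonicity of the available sets preserves (II) vacuously.

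The main obstacle is the divergent case with $j = i$: without extra information, one would fear that her two tops are two distinct elements of the common part $S$, inflating the symmetric difference to $4$ and breaking the induction. Invariant (II) is engineered to prevent exactly this: $q_x \notin S$ makes $\succ_i$ and $\succ'_i$ coincide on $S$, so the top of $S$ is the same under both rankings and this bad subcase cannot occur.
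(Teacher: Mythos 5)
Your proof is correct and takes essentially the same route as the paper's: induction on $t$ maintaining that the symmetric difference has size at most two, with the key structural fact being that once the executions diverge the slid good $q_x$ has already left $M_t(\mathbf{b})$, so both rankings of agent $i$ agree on the common part $S$ and the dangerous subcase (two distinct picks from within $S$) cannot occur. Your only departure is presentational: the paper establishes this fact as a standalone observation about the first divergence step and invokes it inside the final subcase of its Case~2, whereas you carry it as the explicit inductive invariant~(II), which is arguably tidier.
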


\begin{proof}
Clearly, for $t\le i$ we have $M_t(\mathbf{b}) = M_t(\mathbf{b}')$ as the runs of Round-Robin$(\mathbf{b})$ and Round-Robin$(\mathbf{b}')$ are identical at least up to the allocation of the first $i-1$ goods. We are going to prove the statement by induction on $t$ using this observation as our base case. Assume that for some $t\ge i$, $|M_t(\mathbf{b}) \setminus M_t(\mathbf{b}')| = |M_t(\mathbf{b}') \setminus M_t(\mathbf{b})| \le 1$. Up to this point, $t-1$ goods have been allocated already. Let $j$ be the next agent to get a good and let $g$ (resp.~$g'$) be this good in Round-Robin$(\mathbf{b})$ (resp.~in Round-Robin$(\mathbf{b}')$).
The only challenging (sub)case is when $M_t(\mathbf{b})$ and $M_t(\mathbf{b}')$ each contain one non-common element and neither of these two elements is about to be allocated in the corresponding run of Round-Robin.
\medskip

	\noindent\ul{Case 1 ($M_t(\mathbf{b}) = M_t(\mathbf{b}')$).} 
	No matter who $j$ is and what $g$ and $g'$ are, it is straightforward to see that either $M_{t+1}(\mathbf{b}) \setminus M_{t+1}(\mathbf{b}') = M_{t+1}(\mathbf{b}') \setminus M_{t+1}(\mathbf{b}) = \emptyset$ (when $g=g'$), or $M_{t+1}(\mathbf{b}) \setminus M_{t+1}(\mathbf{b}') = \{g'\}$ and $M_{t+1}(\mathbf{b}') \setminus M_{t+1}(\mathbf{b}) = \{g\}$ (when $g\neq g'$). Thus, $|M_{t+1}(\mathbf{b}) \setminus M_{t+1}(\mathbf{b}')| = |M_{t+1}(\mathbf{b}') \setminus M_{t+1}(\mathbf{b})| \le 1$. \smallskip

	Before we move to Case 2, it is important to take a better look on how can we move away from Case 1 for the very first time. 
    That is, we want to focus on the first time step when the good allocated in Round-Robin$(\mathbf{b})$ is different from the good allocated in Round-Robin$(\mathbf{b}')$, if such a time step exists for the specific profiles.
    Since $\succ_i$ and $\succ'_i$ are within a partial slide of each other, there exists a unique
    good $s\in M$ that goes from a better position in $\succ_i$ to a worse position in $\succ'_i$. The next claim about $s$ is crucial for showing that the last subcase of Case 2 below cannot happen.

    \begin{claim}\label{claim:first_difference}
    Suppose that $t_*$ is the first time step where the good $\gamma$ allocated in Round-Robin$(\mathbf{b})$ is different from the good $\gamma'$ allocated in Round-Robin$(\mathbf{b}')$. Then, $\gamma=s$.
    \end{claim}

    \begin{proof}[Proof of Claim \ref{claim:first_difference}]
    We begin with the observation that $t_*$ cannot be first time step when $\gamma\neq \gamma'$ if $j\neq i$ at this point. Indeed, if it was  $j\neq i$, since $M_{\ell}(\mathbf{b}) = M_{\ell}(\mathbf{b}')$ for all $\ell\in [t_*]$ and the induced preference ranking of  $j$ in this case is the same in both $\mathbf{b}$ and $\mathbf{b}'$, we have that the two runs of Round-Robin should make the same choice for $j$ in the time step $t_*$; that would contradict the choice of $t_*$ itself. So, after the same $t_*-1$ goods have been allocated by Round-Robin$(\mathbf{b})$ and Round-Robin$(\mathbf{b}')$, agent $i$ is about to be given $\gamma$ and $\gamma'$ respectively in the two runs from the set $M_{t_*} = M_{t_*}(\mathbf{b})  = M_{t_*}(\mathbf{b}')$ of available goods. We are going to show that these goods cannot be arbitrary. 
	Recall that $\gamma$ is the best good in $M_{t_*}$ with respect to $\succ_i$; similarly for $\gamma'$ and  $\succ'_i$.
	First, notice that $\succ_i$ and $\succ'_i$ are identical on $M_{t_*} \setminus \{s\}$ and, thus, for  $\gamma$ and $\gamma'$ to be distinct at least one of them must be $s$. 
	Since $\gamma\neq \gamma'$, either $\gamma=s$ or $\gamma'=s$ but not both. Assume for a contradiction that $\gamma'=s$ and $\gamma=x\neq s$. Since $s\in M_{t_*}$, it is available to both. 
	The fact that $x\neq s$ implies that $x\succ_i s$. However, this also mean $x\succ'_i s$ which, given the availability of $x$, contradicts the choice of $\gamma'$. 
	We conclude that $\gamma=s$ and $\gamma'\neq s$. \renewcommand\qedsymbol{{\footnotesize $\boxdot$}}
    \end{proof}
	\medskip

    \noindent\ul{Case 2 ($M_t(\mathbf{b}) \setminus M_t(\mathbf{b}') = \{h\}$ and $M_t(\mathbf{b}') \setminus M_t(\mathbf{b}) = \{h'\}$).} 
	When $g=h$ or $g'=h'$, it is very easy to complete the inductive step. First, if $g=h$ \emph{and} $g'=h'$, then we immediately get $M_{t+1}(\mathbf{b}) = M_{t+1}(\mathbf{b}')$. Further, if $g=h$ and $g'\neq h'$, then we have that 
	\[M_{t+1}(\mathbf{b}) \setminus M_{t+1}(\mathbf{b}') = (M_t(\mathbf{b}) \setminus \{h\}) \setminus (M_t(\mathbf{b}') \setminus \{g'\})= ((M_t(\mathbf{b}) \setminus M_t(\mathbf{b}'))  \setminus \{h\})\cup \{g'\} =  \{g'\}\,,\]
	where the second equality holds because $g'\in M_t(\mathbf{b}) \cap M_t(\mathbf{b}')$ in this case, and
	\[M_{t+1}(\mathbf{b}') \setminus M_{t+1}(\mathbf{b}) = (M_t(\mathbf{b}') \setminus \{g'\}) \setminus (M_t(\mathbf{b}) \setminus \{h\})= M_t(\mathbf{b}') \setminus M_t(\mathbf{b})  =  \{h'\}\,,\]
	where here the second equality holds because $g'\in M_t(\mathbf{b})$ and $h\notin M_t(\mathbf{b}')$. The subcase where $g\neq h$ and $g'= h'$ is symmetric and we similarly get
	\[M_{t+1}(\mathbf{b}) \setminus M_{t+1}(\mathbf{b}')  =  \{h\} \text{\ \ \ \ and\ \ \ \ } M_{t+1}(\mathbf{b}') \setminus M_{t+1}(\mathbf{b}) =  \{g\}\,.\]

	It remains to deal with the subcase where $g\neq h$ and $g'\neq h'$. If $g=g'$, then we immediately get $M_{t+1}(\mathbf{b}) \setminus M_{t+1}(\mathbf{b}') = M_t(\mathbf{b}) \setminus M_t(\mathbf{b}') = \{h\}$ and $M_{t+1}(\mathbf{b}') \setminus M_{t+1}(\mathbf{b}) = M_t(\mathbf{b}') \setminus M_t(\mathbf{b}) = \{h'\}$. So, we may assume that $h\neq g \neq g'\neq h'$. We are going to show that this cannot actually happen, as it would lead to a contradiction. Notice that $h\neq g \neq g'\neq h'$ implies $g, g' \in M_t(\mathbf{b}) \cap M_t(\mathbf{b}')$. If agent $j$ is different than agent $i$, this would mean that $g \succ_j g'$ and $g' \succ_j g$ because of the corresponding choices of the algorithm when the input is $\mathbf{b}$ and $\mathbf{b}'$ respectively (recall that the bid, and thus the induced preference ranking, of $j$ is the same in both profiles); that would be a contradiction. Therefore, it must be the case that $j = i$. 
	Since we are in Case 2, a scenario leading to Case 2 for the first time (as described in Claim \ref{claim:first_difference}) must have already happened. Consequently, by Claim \ref{claim:first_difference}, $s$ is not available at this point in $M_t(\mathbf{b})$ and hence $s\notin M_t(\mathbf{b}) \cap M_t(\mathbf{b}')$. This means that $\{g, g'\} \subseteq M\setminus\{s\}$ and, therefore, $g$ and $g'$ have the same ordering in both preference rankings of agent $i$. That is, $g \succ_i g'$ implies $g \succ'_i g'$, contradicting the optimality of $g'$ in $M_t(\mathbf{b}')$ with respect to $\succ'_i$.
	\medskip

	We conclude that in any possible case, $|M_{t+1}(\mathbf{b}) \setminus M_{t+1}(\mathbf{b}')| = |M_{t+1}(\mathbf{b}') \setminus M_{t+1}(\mathbf{b})| \le 1$. This concludes the induction. 
\end{proof}	

We are now ready to prove Lemma \ref{lem:main_lemma}. As it was noted before the lemma's statement, we will occasionally move value among the goods allocated to agent 1. This is when Lemma \ref{lem:swap} is crucial. It allows us to guarantee that there is sufficient value for satisfying all the desired properties of the intermediate valuation functions we define. \smallskip

\begin{proof}[Proof of Lemma \ref{lem:main_lemma}.]
Recall that $k = \lceil m/n\rceil$, i.e., we have $k$ total rounds. 
Let $\succ_1$ be the preference ranking induced by $\bm{b}_1$ and consider all the goods according to this ranking: $h_1 \succ_1 h_2 \succ_1 \ldots \succ_1 h_m$. Let  $n_1 = 1 <n_2 <\dots< n_k$ be the indices in this ordering of the goods assigned to agent 1 by Round-Robin$(\mathbf{b})$, i.e., in round $r$ agent 1 receives good $h_{n_r}$. This means that $A_1=\{h_{n_1}, \ldots, h_{n_{k}}\}$.

We will recursively construct $v_1^*$ from $v_1$, over the rounds of Round-Robin. In particular, we are going to define a sequence of intermediate bid vectors $\bm{b}^r_1$ and valuation functions $v^r_1$, one for each round $r$ starting from the last round $k$, so that $v_1^* = v_1^1$ and $\bm{b}^*_1 = \bm{b}^1_1$. For defining each $\bm{b}^r_1$ we typically use a number of auxiliary bid vectors to break down and better present the construction.
Also, for any round $r$, we are going to maintain that

\begin{enumerate}[labelindent=20pt,leftmargin=*,itemsep=4pt,topsep=5pt]
\item[(i)] $v_1^r(A_1)=v_1(A_1)$.
\item[(ii)] $v_1^r(g)=v_1(g)$, for any $g\in M\setminus A_1$.
\item[(iii)] $\bm{b}^r_1$ is \emph{truthful from round $r$} with respect to $v^r_1$, meaning that for every good that is no better than $h_{n_r}$, according to the preference ranking $\succ^r_1$ induced by $\bm{b}^r_1$, we have that its bid matches its value; formally, $g\nsucc^r_1 h_{n_r} \Rightarrow \bm{b}^r_1(g) = v^r_1(g)$. 
\item[(iv)] The preference ranking ${\succ}^r_1$ (induced by ${\bm{b}}^r_1$) is identical to ${\succ}_1$ (induced by ${\bm{b}}_1$) up to good $h_{n_{r-1}}$. 
\item[(v)] $\min_{g, h\in M,\, g\neq h}|v^{r}_1(g)-v^{r}_1(h)|>0$.
\end{enumerate}
Carefully ensuring that all five properties hold, makes the formal construction rather complicated. We provide a visual abstraction of the high level idea of a single step in the recursive construction of $v_1^*$ in Figure \ref{fig:main_lemma}; see the caption for a detailed connection with the formal steps of the proof.

Let us focus on round $k$, i.e., the last round. Let $\lambda_k$ be the most valuable (according to $v_1$) available good at the very beginning of the round. It is easy to see that $h_{n_k} = \lambda_k$; if not, then by increasing her bid for $\lambda_k$ to be slightly above her bid for $h_{n_k}$ agent 1 would end up with the bundle $\{h_{n_1}, \ldots, h_{n_{k-1}}, \lambda_k\}$ which is a strict improvement over $A_1$ and would contradict the fact that $\bm{b}_1$ is a best response of agent 1.
We construct the auxiliary bid $\bar{\bm{b}}^k_1$ by ``moving up'' in $\succ_1$ every good that is more valuable than $\lambda_k$ but comes after it in $\succ_1$ (i.e., the purple blocks in Figure \ref{fig:main_lemma}). Formally, $(\lambda_k \succ_1 g) \,\wedge\, (v_1(g)> v_1(\lambda_k)) \Rightarrow \bm{b}_1(h_{n_k}) < \bar{\bm{b}}^k_1(g) < \bm{b}_1(h_{n_k -1})$, where these bids are chosen arbitrarily, as long as they are distinct from each other.
Note that this small modification does not affect the allocation at all. Indeed, every good the bid of which was improved is still worse than $h_{n_k -1}$ in the preference ranking $\bar{\succ}^k_1$ induced by $\bar{\bm{b}}^k_1$, so no decision in rounds $1,\ldots,k-1$ is affected and, by the definition of $\lambda_k$, these goods were not actually available for agent 1 in the beginning of round $k$, so the decisions in round $k$ are not affected either.
Next we define ${\bm{b}}^k_1$ by replacing the bids with the actual values for every good that is no better than $h_{n_k}$ in $\bar{\succ}^k_1$,
as well as by scaling the bids of all other goods to remain larger than $\bm{b}^k_1(h_{n_k})$, if necessary.
Although the latter can be done in several ways, we can simply multiply bids by $\bm{b}^k_1(h_{n_k})/ \bar{\bm{b}}^k_1(h_{n_k})$. Formally,  ${\bm{b}}^k_1$ is  defined by
\[ g  \not\!\!{\bar{\succ}}^k_1 \, h_{n_k} \Rightarrow \bm{b}^k_1(g) = v_1(g) 
\text{\ \ \ \ and\ \ \ \ } g \,\,\bar{\succ}^k_1 \,h_{n_k} \Rightarrow \bm{b}^k_1(g) = \bar{\bm{b}}^k_1(g)\cdot \bm{b}^k_1(h_{n_k}) / \bar{\bm{b}}^k_1(h_{n_k})\,.\]
Note that the preference ranking ${\succ}^k_1$ induced by ${\bm{b}}^k_1$ is identical to $\bar{\succ}^k_1$ up to good $h_{n_k - 1}$, and that $\lambda_k$ is the good with the highest bid in ${\bm{b}}^k_1$ among the goods that are available in the last round.
Hence, Round-Robin$(\bm{b}_1^k, \mathbf{b}_{-1})$ still produces the allocation $(A_1,\dots,A_n)$.
Also, recall that $\bar{\succ}^k_1$ is identical to ${\succ}_1$ up to good $h_{n_k - 1}$ and, thus, up to at least good $h_{n_{k-1}}$, implying that ${\succ}^k_1$ satisfies property (iv) above.
Finally, by setting $v^k_1 = v_1$, it is clear that $\bm{b}^k_1$ is truthful from round $k$ with respect to $v^k_1$, but also that $\min_{g, h\in M, g\neq h}|v^{k}_1(g)-v^{k}_1(h)|>0$, $v_1^k(A_1)=v_1(A_1)$, and $v_1^k(g)=v_1(g)$, for all $g\in M\setminus A_1$. That is, all properties (i)-(v) are satisfied.\medskip

Moving to an arbitrary round $r< k$ we are going to follow a similar, albeit a bit more complicated, approach, where now it will be necessary to move value among the goods of $A_1$.
So, assume that $\bm{b}^{r+1}_1$ and $v^{r+1}_1$ have already been constructed and have the desired properties (i)--(v) mentioned above, and let $\succ^{r+1}_1$ be the preference ranking induced by $\bm{b}^{r+1}_1$. 
Consider the execution of Round-Robin$(\bm{b}_1^{r+1}, \mathbf{b}_{-1})$.
For $i\ge r$, let $\lambda_i$ be the most valuable available good with respect to $v^{r+1}_1$ at the very beginning of round $i$ and $\ell_i$ be the most valuable good with respect to $v^{r+1}_1$ (or equivalently with respect to $v_1$ as $\ell_i\in M\setminus A_1$) that is allocated to some \emph{other} agent during round $i$. 
By property (iii) of $\bm{b}^{r+1}_1$ and $v^{r+1}_1$ we know that in future rounds agent 1 will have $\lambda_{r+1}=h_{n_{r+1}}, \lambda_{r+2} = h_{n_{r+2}}, \ldots, \lambda_{k} = h_{n_{k}}$  allocated to her. By property (iv) of $\bm{b}^{r+1}_1$ we further know that in the current round agent 1 is going to get good $h_{n_r}$. Unlike what happened for round $k$, however, here $h_{n_r}$ may be different from $\lambda_r$. We will  
consider two cases depending on this.

First, though, similarly to what we did before, we define the auxiliary bid $\bar{\bar{\bm{b}}}^r_1$ by setting $\bm{b}^{r+1}_1(h_{n_r}) \allowbreak < \allowbreak  \bar{\bar{\bm{b}}}^r_1(g) < \allowbreak  \bm{b}^{r+1}_1(h_{n_r -1})$ for all goods $g$ such that $h_{n_r} \succ^{r+1}_1 g$ and $v^{r+1}_1(g)> v_1(\lambda_r)$;  these $\bar{\bar{\bm{b}}}^r_1$ entries are arbitrary, as long as they satisfy the inequalities and are distinct from each other.
By now it should be clear that moving from $\bm{b}^{r+1}_1$ to $\bar{\bar{\bm{b}}}^r_1$ does not affect the allocation since every good that had its bid improved is still worse than $h_{n_r -1}$ in the preference ranking $\ \bar{\bar{\!\!\!\succ}}^r_1$ induced by $\bar{\bar{\bm{b}}}^r_1$ 
and, by the definition of $\lambda_r$, these goods were already not available in the beginning of round $r$. 
That is, Round-Robin$\big(\bar{\bar{\bm{b}}}_1^r, \mathbf{b}_{-1}\big)$ returns $(A_1,\dots,A_n)$. \medskip

\noindent\ul{Case 1 ($h_{n_r} = \lambda_r$).}  
This case is similar to what we did for round $k$. We go straight from $\bar{\bar{\bm{b}}}^r_1$ to ${\bm{b}}^r_1$ by replacing the bids with the corresponding $v^{r+1}_1$ values for all goods that are no better than $h_{n_r}$  in $\ \bar{\bar{\!\!\!\succ}}^r_1$,
and by scaling the bids of all other goods to remain larger than $v^{r+1}_1(\lambda_{r})=\bm{b}^r_1(\lambda_{r})=\bm{b}^r_1(h_{n_r})$. Formally, we have
\[ g \not{\bar{\bar{\!\!\!\succ}}}^r_1 \, h_{n_r} \Rightarrow \bm{b}^r_1(g) = v^{r+1}_1(g) \text{\ \ \ \ and\ \ \ \ } g\ {\ \bar{\bar{\!\!\!\succ}}}^r_1 \,h_{n_r} \Rightarrow \bm{b}^r_1(g) = \bar{\bar{\bm{b}}}^r_1(g)\cdot v^{r+1}_1(\lambda_{r}) / \bar{\bar{\bm{b}}}^r_1(h_{n_r})\,.\]
The preference ranking ${\succ}^r_1$ induced by ${\bm{b}}^r_1$ is identical to $\ {\bar{\bar{\!\!\!\succ}}}^r_1$ up to good $h_{n_r}$, so Round-Robin$(\bm{b}_1^r, \mathbf{b}_{-1})$ up to the beginning of round $r$ still allocates $\{h_{n_1}, \ldots, h_{n_{r-1}}\}$ to agent 1 in that order. Also, from good $h_{n_r}$ onward, ${\succ}^r_1$ is defined in such a way that the best available good in the beginning of round $i\ge r$ with respect to ${\succ}^r_1$ is $h_{n_i}$. Therefore, the final bundle for agent 1 is still $A_1$ and the overall allocation is still $(A_1,\dots,A_n)$ as $\mathbf{b}_{-1}$ is fixed and goods in $A_1$ are allocated in the exact same order.
Moreover, recall that ${\ \bar{\bar{\!\!\!\succ}}}^r_1$ is identical to ${\succ}_1$ up to good $h_{n_{r-1}}$ (in fact, up to good $h_{n_r - 1}$), implying that ${\succ}^r_1$ satisfies property (iv).
Given that no changes to values were necessary and that we made the relevant (for rounds $r, \ldots, k$) entries of ${\bm{b}}^r_1$ equal to the corresponding $v^{r+1}_1$ values, we may set $v^r_1 = v^{r+1}_1$ to get that $\bm{b}^r_1$ is truthful from round $r$ with respect to $v^r_1$, but also that $\min_{g, h\in M, g\neq h}|v^{r}_1(g)-v^{r}_1(h)|>0$, $v_1^r(A_1)=v_1(A_1)$, and $v_1^r(g)=v_1(g)$, for all $g\in M\setminus A_1$. \medskip

\begin{figure}[t]
    \centering
    \includegraphics[width=\textwidth]{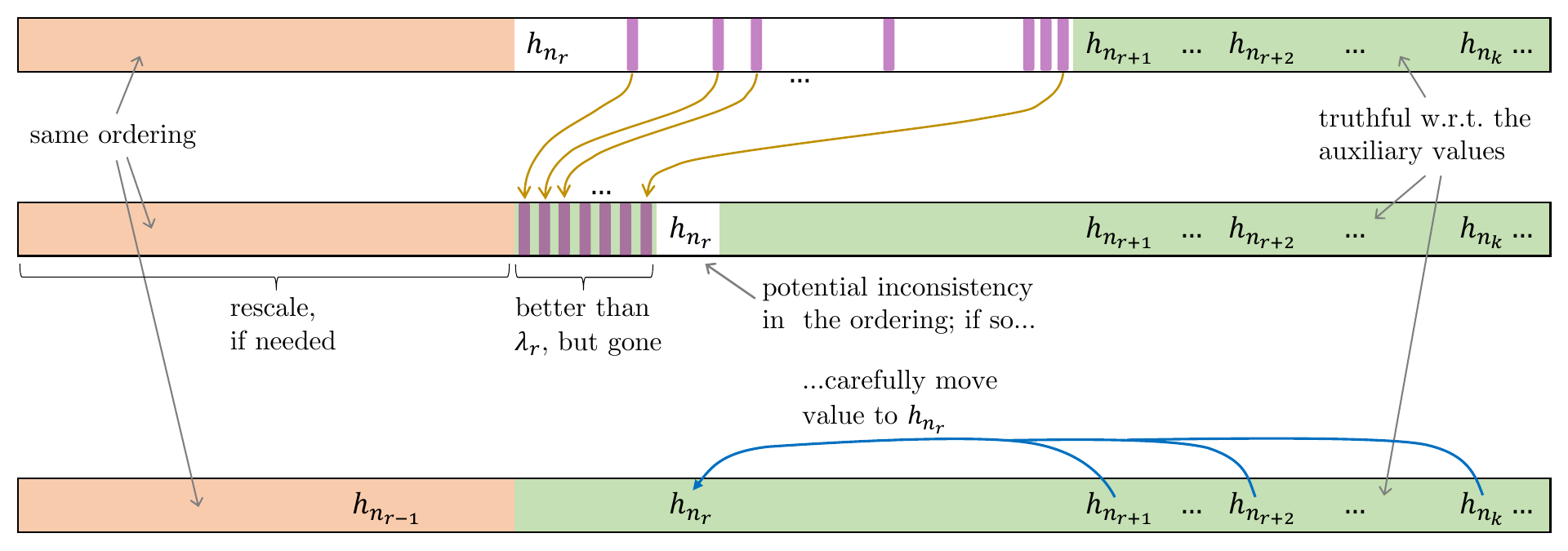}
    \caption{Each row is a visual abstraction of the preference ranking of the corresponding auxiliary bid during the transformation we perform for round $r$. Note that for $r = k$, or an arbitrary $r < k$ that falls under Case 1, only the first two rows are relevant. The purple blocks represent the goods which are more valuable than $\lambda_r$ but come after it in $\succ_1$, i.e., they are all unavailable at the beginning of round $r$. Moving those up to the left of $h_{n_r}$ corresponds to constructing $\bar{\bar{\bm{b}}}^r_1$ (or ${\bar{\bm{b}}}^r_1$ for round $k$). Then,  we replace the bids with the corresponding $v^{r+1}_1$ values for all goods on the green part and scale the bids  of the goods in the orange part to keep the ordering consistent. For $r = k$ or in Case 1 for $r < k$, we already obtain ${\bm{b}}^r_1$ this way and we are done for the round. In Case 2, however, there still are goods on the right of $h_{n_r}$ that are better (at least one, as $\lambda_r \neq h_{n_r}$ in this case), yet we want to keep $h_{n_r}$ in place. This temporary inconsistency is represented by ${\bar{\bm{b}}}^r_1$ in Case 2. In order to fix that and move to ${\bm{b}}^r_1$, we transfer enough value from $h_{n_{r+1}}, h_{n_{r+2}}, \ldots, h_{n_{k}}$ to $h_{n_r}$, without compromising their position in the ordering; Claim \ref{claim:move_value} guarantees that the latter is always possible in this case.
    }
    \label{fig:main_lemma}
\end{figure}

\noindent\ul{Case 2 ($h_{n_r} \neq \lambda_r$).}  
Here we are going to move value from goods $\lambda_{r+1}, \lambda_{r+2}, \ldots, \lambda_{k}$ to $h_{n_{r}}$ while defining $v^{r}_1$.
The main idea is that we would like $h_{n_r}$ to become the most valuable available good at the beginning of round $r$  with respect to $v^r_1$, although this is not the case for $v^{r+1}_1$ as $v^{r+1}_1(h_{n_r})<v^{r+1}_1(\lambda_r)$. The constraints we need to satisfy make this task rather tricky: properties (i) and (ii) must hold, so value can only be transferred between goods of $A_1$, but this should happen in a way that ensures that in future rounds the goods given to agent 1 remain $h_{n_{r+1}}, \ldots, h_{n_{k}}$ in that order.

We begin with a rather benign modification of $\bar{\bar{\bm{b}}}^r_1$, which is almost identical to what we did in Case 1, except that we do not update the bid of $h_{n_r}$ with its $v^{r+1}_1$ value. We do this to make sure that $h_{n_r}$ still seems like the most attractive good of round $r$ and the overall allocation remains the same. Specifically, we define the auxiliary bid ${\bar{\bm{b}}}^r_1$ by
\begin{equation}\label{eq:b-bar}
h_{n_r}\  \,{\bar{\bar{\!\!\!\succ}}}^r_1 \  g \Rightarrow \bar{\bm{b}}^r_1(g) = v^{r+1}_1(g) \text{\ \ \ \ and\ \ \ \ } h_{n_r} \not{\bar{\bar{\!\!\!\succ}}}^r_1 \ g \Rightarrow \bar{\bm{b}}^r_1(g) = \bar{\bar{\bm{b}}}^r_1(g)\cdot (v^{r+1}_1(\lambda_{r}) + \delta/2) / \bar{\bar{\bm{b}}}^r_1(h_{n_r})\,,
\end{equation}
where $\delta = \min_{g, h\in M, g\neq h}|v^{r+1}_1(g)-v^{r+1}_1(h)|>0$.
It is easy to check that Round-Robin$(\bar{\bm{b}}_1^r, \mathbf{b}_{-1})$ returns $(A_1,\dots,A_n)$.
Indeed, the preference ranking $\bar{{\succ}}^r_1$ induced by $\bar{\bm{b}}^r_1$ is identical to $\ {\bar{\bar{\!\!\!\succ}}}^r_1$ up to good $h_{n_r}$, so agent 1 receives $h_{n_1}, \ldots, h_{n_{r}}$ in the first $r$ rounds, whereas any bid that was higher than $v^{r+1}_1(h_{n_{r+1}})$ and has been updated  to its $v^{r+1}_1$ value is not available at the beginning of round $r+1$ anyway. The latter is true, because otherwise such a good would have been chosen by Round-Robin$(\bar{\bm{b}}_1^r, \mathbf{b}_{-1})$ and Round-Robin$({\bm{b}}_1^r, \mathbf{b}_{-1})$ instead of $h_{n_{r+1}}$.

Having ${\bar{\bm{b}}}^r_1$ as a point of reference, we now take a closer look to what happens if, starting at round $r$, agent 1 would receive her goods according to $v^{r+1}_1$. Note that this would be the same as just changing  ${\bar{\bm{b}}}^r_1(h_{n_r})$ to $v^{r+1}_1(h_{n_r})$. We call this new auxiliary bid $\hat{\bm{b}}^r_1$; note that this is the first time we introduce a bid that \textit{does not} preserve the original allocation.
Similarly to our definition of the $\lambda_i$s, we consider the execution of Round-Robin$(\hat{\bm{b}}^r_1, \mathbf{b}_{-1})$ and define $\hat{\lambda}_i$ to be the most valuable available good with respect to $v^{r+1}_1$ at the very beginning of round $i$, for $i\ge r$.
While we know that $\hat{\lambda}_r = \lambda_r$, in general we  have no reason to expect that $\hat{\lambda}_i$ and $\lambda_i$ are the same.
Actually, the fact that $\bm{b}_1$---and thus $\bar{\bar{\bm{b}}}^r_1$---is a best response, combined with $h_{n_r} \neq \lambda_r$, imply that
\begin{equation}\label{eq:sum_of_lambdas}
v^{r+1}_1(h_{n_r}) + \sum_{i= r+1}^{k} v^{r+1}_1(\lambda_i) > \sum_{i= r}^{k} v^{r+1}_1(\hat{\lambda}_i) \,.
\end{equation}

Coming back to the challenge of moving value from $v^{r+1}_1(\lambda_{r+1}), \ldots, v^{r+1}_1(\lambda_{k})$ to $v^{r+1}_1(h_{n_{r}})$ (and equally so from $\hat{\bm{b}}^{r+1}_1(\lambda_{r+1}), \ldots, \hat{\bm{b}}^{r+1}_1(\lambda_{k})$ to $\hat{\bm{b}}^{r+1}_1(h_{n_{r}})$), we want to make sure that enough value can be moved to eventually get $v^{r}_1(h_{n_{r}})$ slightly above $v^{r+1}_1(\lambda_{r})$ while each $h_{n_{i}}$ maintains more value than any other available good in round $i>r$. 
\begin{claim}\label{claim:move_value}
There exists $\varepsilon > 0$ such that
\[ \sum_{i= r+1}^{k} \big( v^{r+1}_1(\lambda_i) -  \max\big\{v^{r+1}_1(h_{n_{i+1}}), v^{r+1}_1(\ell_i)\big\} \big) =  v^{r+1}_1({\lambda}_r) - v^{r+1}_1(h_{n_r}) + \varepsilon \,.\]
\end{claim}

\begin{proof}[Proof of Claim.]
Let $\varepsilon = v^{r+1}_1(h_{n_r}) + \sum_{i= r+1}^{k} v^{r+1}_1(\lambda_i) - \sum_{i= r}^{k} v^{r+1}_1(\hat{\lambda}_i)$;  the fact that $\varepsilon>0$ follows from inequality \eqref{eq:sum_of_lambdas}. Next notice that the preference rankings $\bar{{\succ}}^r_1$ and $\hat{{\succ}}^r_1$, induced by ${\bar{\bm{b}}}^r_1$ and ${\hat{\bm{b}}}^r_1$ respectively, are within a partial slide from each other! Indeed,  $h_{n_{r}}$ is moved to a worst position in $\hat{{\succ}}^r_1$ compared to $\bar{{\succ}}^r_1$, but otherwise the two preference rankings are the same.
Besides the easy observation that Round-Robin$(\hat{\bm{b}}^r_1, \mathbf{b}_{-1})$ and Round-Robin$(\bar{\bm{b}}^r_1, \mathbf{b}_{-1})$ run identically for $i-1$ rounds,
this also means that Lemma \ref{lem:swap} applies.  
That is, we have that 
in each round $i$ of Round-Robin$(\hat{\bm{b}}^r_1, \mathbf{b}_{-1})$, for $i\ge r$, there is at most one good that is unavailable despite being available in round $i$ of Round-Robin$(\bar{\bm{b}}^r_1, \mathbf{b}_{-1})$. In particular, in round $i$ of Round-Robin$(\hat{\bm{b}}^r_1, \mathbf{b}_{-1})$, for $i\ge r$, at least two goods from $\{h_{n_i}, h_{n_{i+1}}, \ell_i\}$ are available, where conventionally we define $h_{n_{k+1}}$ to be the second best available good at the beginning of round $k$ in Round-Robin$(\bar{\bm{b}}^r_1, \mathbf{b}_{-1})$; 
recall that $\ell_i$ is the most valuable good with respect to $v^{r+1}_1$  that is allocated to some agent other than $1$  during round $i$.
By the definition of the $\hat{\lambda}_i$s, this observation implies
\[ v^{r+1}_1(\hat{\lambda}_i) \geq \max\big\{v^{r+1}_1(h_{n_{i+1}}), v^{r+1}_1(\ell_i)\big\}, \text{\ \ for all\ \ } r<i\leq k \,.\]
Given that $\bar{\bm{b}}^r_1$
trivially remains truthful from round $r+1$ with respect to $v^{r+1}_1$, the above maximum captures what is the mechanism's view of the second best good for agent 1 in round $i$ \emph{after} $\lambda_i$, for $i>r$, when the input is $(\bar{\bm{b}}^r_1, \mathbf{b}_{-1})$.
Plugging this bound into inequality \eqref{eq:sum_of_lambdas} along with $\hat{\lambda}_r = \lambda_r$, and rearranging the terms of round $r$, it yields
\[ \sum_{i= r+1}^{k} \big( v^{r+1}_1(\lambda_i) -  \max\big\{v^{r+1}_1(h_{n_{i+1}}), v^{r+1}_1(\ell_i)\big\} \big) =  v^{r+1}_1({\lambda}_r) - v^{r+1}_1(h_{n_r}) + \varepsilon \,,\]
as required. \renewcommand\qedsymbol{{\footnotesize $\boxdot$}}
\end{proof}

We can now define $v^{r}_1$ by appropriately changing some of the values of $v^{r+1}_1$. For the sake of readability, let \[\varepsilon_i = v^{r+1}_1(\lambda_i) -  \max\big\{v^{r+1}_1(h_{n_{i+1}}), v^{r+1}_1(\ell_i)\big\} \big)\] for $i>r$. 
Also, let $\alpha$ be such that $\varepsilon - \alpha\sum_{i=r+1}^{k}\varepsilon_i = \min\{\varepsilon, \delta\}/3$; recall that $\delta = \min_{g, h\in M, g\neq h}|v^{r+1}_1(g)-v^{r+1}_1(h)|$. Choosing such an $\alpha$ is always possible as a result of Claim \ref{claim:move_value}, because $f(\alpha) = \varepsilon - \alpha\sum_{i=r+1}^{k}\varepsilon_i$ with $\alpha\in(0,1)$ is a continuous function  with values in the interval  $(v^{r+1}_1(h_{n_r})-v^{r+1}_1({\lambda}_r), \varepsilon)$. We define:
\begin{itemize}[labelindent=20pt,leftmargin=*,itemsep=4pt,topsep=5pt]
	\item For each $r+1\le i \leq k$, we set $v_1^{r}(h_{n_{i}})=\max\big\{v^{r+1}_1(h_{n_{i+1}}), v^{r+1}_1(\ell_i)\big\} \big) + \alpha\cdot\varepsilon_i$.
	\item We also set $v_1^{r}(h_{n_{r}})=v_1^{r+1}(\lambda_{r}) + \varepsilon - \alpha\sum_{i=r+1}^{k}\varepsilon_i$.
	\item For any other good $g$, we set $v_1^{r}(g)=v_1^{r+1}(g)$.
\end{itemize}
We also define $\bm{b}^{r}_1$  from ${\bar{\bm{b}}}^r_1$ by replacing the bids with the corresponding $v^{r}_1$ values for all goods that are no better than $h_{n_r}$  in ${\bar{\succ}}^r_1$, i.e, we have
\[ g\, \not{\!\!{\bar{\succ}}}^r_1 \, h_{n_r} \Rightarrow \bm{b}^r_1(g) = v^{r}_1(g) \text{\ \ \ \ and\ \ \ \ } g\,\, {{\bar{\succ}}}^r_1 \,\,h_{n_r} \Rightarrow \bm{b}^r_1(g) = {\bar{\bm{b}}}^r_1(g)\,.\]

By this point, it should be straightforward to verify properties (i)-(iv). For (v), notice that any $\alpha$ resulting in $\varepsilon - \alpha\sum_{i=r+1}^{k}\varepsilon_i \in (0,\min\{\varepsilon, \delta\}/2)$ would work for consistently defining $\bm{b}^{r}_1$. If the above definition of $v^{r}_1$ happens to give one of the \emph{finitely many} values already in the range of $v^{r+1}_1$, then we may change $\alpha$ slightly to make all of the newly introduced values unique. 
\end{proof}

	\section{Towards EFX Equilibria: The Case of Two Agents}\label{sec:PR}

	As we saw, Round-Robin has PNE for every instance, and the corresponding allocations  are always \efo. The natural next question is \textit{can we have a similar guarantee for a stronger fairness notion?} In particular, we want to explore whether an analogous result is possible when we consider envy-freeness up to \textit{any} good.  When the agents are not strategic, it is known that \efx allocations exist when we have at most $3$ agents \cite{CaragiannisKMPS19, ChaGM20}. It should be noted that for the case of $3$ agents no polynomial time algorithm is known, and it is  unclear whether the constructive procedure of \citet{ChaGM20} has any PNE. For $n\ge 4$, the existence of \efx allocations remains a major open problem. Therefore, we turn our attention to the case of two agents.

	\subsection{A Mechanism with EFX Nash Equilibria}\label{sec:MEFXEQ}

	A polynomial-time algorithm that outputs \efx allocations when we have two agents is given by Plaut and Roughgarden \cite{PR18}. This is a modified \emph{cut-and-choose} algorithm where the cut (lines \ref{line:pr3}--\ref{line:pr5}) is produced using a variant of the \textit{envy-cycle-elimination} algorithm of Lipton et al.~\cite{LMMS04} on two copies of agent 1, and then  agent 2 ``chooses''  the best bundle among the two (line \ref{line:pr6}). We state it as mechanism Mod-Cut\&Choose below (recall the notation $\bm{b}_i(S)$ for $\sum_{h\in S} \bm{b}_i(h)$). We should  point out that this mechanism is not truthful, since there is no truthful mechanism for two agents that produces \efx (or \efo for that matter) allocations for more than four goods \cite{ABCM17}. 
    It is not obvious that the mechanism has PNE or that these are \efx, and even if that was the case, by Theorem \ref{thm:efx_to_mms_n=2}, there is no reason to expect that these PNE would guarantee more than $2\bmu_i / 3$ to each agent.
    Interestingly, we show that although not truthful, Mod-Cut\&Choose  always has at least one PNE for any instance, and all its equilibria are \mms and, by Theorem \ref{thm:mms_to_efx_n=2}, \efx. 
	
	\begin{algorithm}[ht]
		\caption{Mod-Cut\&Choose$(\bm{b}_1, \bm{b}_2)$ \cite{PR18} \hfill\small{ $\triangleright$ For $i\in \{1,2\}$, $\bm{b}_i = (b_{i1}, \ldots, b_{im})$ is the bid of agent $i$.}}
		\begin{algorithmic}[1]
			\State $(E_1, E_2) = (\emptyset,\emptyset)$
			\State $(h_1, h_2, \ldots, h_m)$ is $M$, sorted in decreasing order w.r.t.~$v_1$ \Comment{{\small Break ties lexicographically.}}
			\For{$i = 1, \dots, m$} \label{line:pr3}
                \State $j = \argmin_{k\in [2]} \bm{b}_1(E_k)$ \Comment{{\small Identify the worst bundle according to $\bm{b}_1$; break ties in favor of $E_1$.}}\vspace{3pt}\label{line:pr4}
				\State $E_j = E_j \cup \{h_i\}$ \Comment{{\small Add the next good to that bundle.}}\label{line:pr5}
			\EndFor
			\State $\ell = \argmax_{k\in [2]} \bm{b}_2(E_k)$ \Comment{{\small Identify the best bundle according to $\bm{b}_2$; break ties in favor of $E_1$.}}  \label{line:pr6}\vspace{3pt}			\State \textbf{Return}: {$\mathcal{A}=(M\setminus E_{\ell}, E_{\ell})$} \Comment{{\small Give this bundle to agent 2 and the remaining bundle to agent 1.}}
		\end{algorithmic}
		\label{alg:Cut+Choose}
	\end{algorithm}

	Seen as an algorithm, Mod-Cut\&Choose does not always produce $(5/6 + \varepsilon)$-\mms allocations for any $\varepsilon>0$, as it can be seen by the following simple instance with $N= \{1, 2\}$, and $M=\{g_1, \ldots, g_5\}$. For $i \in N$, let $v_i(g_j)=3$, if $j \in \{1,2\}$, and $v_i(g_j)=2$, if $j \in \{3,4,5\}$. It is easy to see that $\bmu_1 = \bmu_2 = 6$, but the allocation produced is $(\{g_1, g_3, g_5\}, \{g_2, g_4\})$, and thus agent 2 attains a value of $5$.

We begin with the following lemma on the ``cut'' part of Mod-Cut\&Choose, stating that agent 1 may create any desirable partition of the goods (up to the ordering of the two sets).
This is a necessary component of the proof of the main result of this section.

\begin{lemma}\label{lem:ECE}
	Let $(X_1, X_2)$ be a partition of $M$. Agent 1, by bidding accordingly, can force Mod-Cut\&Choose to construct $E_1, E_2$ in lines \ref{line:pr3}--\ref{line:pr5}, such that $\{E_1, E_2\} = \{X_1, X_2\}$.
\end{lemma}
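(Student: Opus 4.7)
The plan is to explicitly construct a bid vector $\bm{b}_1$ that forces the greedy loop of lines \ref{line:pr3}--\ref{line:pr5} to produce exactly $\{E_1,E_2\}=\{X_1,X_2\}$. The degenerate case in which one of $X_1,X_2$ is empty, say $X_2=\emptyset$, is immediate: bidding $0$ on every good makes every comparison at line \ref{line:pr4} a tie, and the tie-breaking rule in favour of $E_1$ then places all goods in $E_1$, yielding $E_1=M=X_1$ and $E_2=\emptyset=X_2$.

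For the main case, suppose $a=|X_1|\ge 1$ and $b=|X_2|\ge 1$. Fix an anchor $x^{\ast}\in X_1$ and enumerate $X_2=\{y_1,\ldots,y_b\}$ and $X_1\setminus\{x^{\ast}\}=\{x_2,\ldots,x_a\}$. I would set
\[
\bm{b}_1(x^{\ast})=1,\qquad \bm{b}_1(y_j)=\tfrac{1}{b}+(b-j+1)\delta \ \text{ for } j=1,\ldots,b,\qquad \bm{b}_1(x_i)=\epsilon_i \ \text{ for } i=2,\ldots,a,
\]
where $\delta$ and the distinct $\epsilon_i$ are sufficiently small positive constants with $\sum_i\epsilon_i<\delta$. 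These are calibrated so that, when $b\ge 2$, one has $\bm{b}_1(y_1)=1/b+b\delta<1$, giving sorted processing order $x^{\ast},y_1,\ldots,y_b,x_2,\ldots,x_a$, and when $b=1$, one has $\bm{b}_1(y_1)=1+\delta>1$, giving order $y_1,x^{\ast},x_2,\ldots,x_a$.

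The verification is a direct trace of the loop. For $b\ge 2$: $x^{\ast}$ enters $E_1$, and then each $y_j$ enters $E_2$ because for all $j\le b$ the partial sum $\sum_{k<j}\bm{b}_1(y_k)=(j-1)/b+O(\delta)$ stays strictly below $1=\bm{b}_1(E_1)$; once all of $X_2$ is placed, $\bm{b}_1(E_2)=1+\delta\,b(b+1)/2>\bm{b}_1(E_1)$, and this excess comfortably absorbs $\sum_i\epsilon_i<\delta$, so each remaining $x_i$ enters $E_1$. This produces $E_1=X_1$ and $E_2=X_2$. For $b=1$: $y_1$ enters $E_1$, then $x^{\ast}$ enters $E_2$, leaving $E_2$ behind $E_1$ by exactly $\delta$; since $\sum_i\epsilon_i<\delta$, every remaining $x_i$ then joins $E_2$, yielding $\{E_1,E_2\}=\{X_2,X_1\}$. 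Either way, $\{E_1,E_2\}=\{X_1,X_2\}$, as required.

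The hard part is the simultaneous control of two competing quantities through a single bid vector: the induced processing order at the sorting step (determined by the relative magnitudes of the bids) and the ``smaller bundle first'' invariant of line \ref{line:pr4} (determined by cumulative partial sums). Miscalibrating any of the three scales---the top-bid $1$, the mid-range $1/b+\Theta(\delta)$, or the tiny $\epsilon_i$---would push some good into the wrong bundle, so the delicate step is choosing $\delta$ and the $\epsilon_i$ small enough that the cumulative $y_j$ sums stay below $\bm{b}_1(x^{\ast})$ through the first $b-1$ insertions while still leaving enough post-insertion slack for the remaining bids of $X_1\setminus\{x^{\ast}\}$.
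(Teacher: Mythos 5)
Your proposal is correct and follows essentially the same strategy as the paper's proof: an explicit bid vector with one ``anchor'' good bid at $1$, medium bids for the other set calibrated to accumulate to just above the anchor, and negligible bids for the remaining goods, verified by tracing the greedy loop of lines \ref{line:pr3}--\ref{line:pr5}. The paper splits into three cardinality cases where you give a single unified construction (handling $|X_2|=1$ via the flipped processing order), but the underlying idea and calibration are the same.
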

\begin{proof}
    We consider different cases depending on the cardinality of the sets $X_1, X_2$. Each case describes a bid that agent 1 can report in order to create the desired partition $(X_1, X_2)$ or its permutation $(X_2, X_1)$. Note that only the first case is relevant when $m=1$, and only the first two cases are relevant when $2\le m\leq 3$. 
	\smallskip
	
	\noindent\ul{Case 1 (one set has all the goods).} 
	Agent 1 declares zero value for all the goods. According to these values, $j$ in line \ref{line:pr4} is always $1$, so every good goes to $E_1$, and we have the desired partition.	
	\smallskip
	
	\noindent\ul{Case 2 (one set has $m-1$ goods).}
	Agent 1 declares value $1$ for the good that is contained in the set with cardinality $1$, and for every good that is contained in the set with cardinality $m-1$ she declares a value equal to $\frac{1}{m-1}$. The first good is added in $E_1$, so $E_2$ is going to get chosen next. Actually, according to these values, $E_2$ must get all the remaining goods. Thus, the desired partition is produced.
	\smallskip
	
	\noindent\ul{Case 3 (the two sets have cardinalities $k \geq 2$ and $m-k \geq 2$).} 
	Agent 1 declares a value of $1$ for one of the goods that are contained in the set with cardinality $k$. For every good that is contained in the set with cardinality $m-k$ she declares a value equal to $\frac{1+\varepsilon}{m-k}$, where $0< \varepsilon <\frac{1}{m-k-1}$. Finally, for the rest of the goods that are contained in the set with cardinality $k$ she declares a value of $\frac{\varepsilon}{k-1}$. $E_1$ gets the first good, so it appears to be more valuable than $E_2$. According to these values, $E_2$ ceases to appear to be the worst of the two when it gets every good of the set with cardinality $m-k$. This is the point where $E_1$ becomes worse than $E_2$, and continues to be worse until it contains every good of the set of cardinality $k$. Thus, again,  the desired partition is produced. 
\end{proof}

In particular, agent 1 can force the mechanism to construct $E_1, E_2$, such that $\min\{v_1(E_1), v_1(E_2)\} = \bmu_1$. Such a pair $(E_1, E_2)$ is called a \textit{$\bmu_1$-\,partition}. At least one $\bmu_1$-\,partition exists, by the definition of $\bmu_1$.
	
	\begin{corollary}\label{cor:ECEMMS}
		Agent 1 can force Mod-Cut\&Choose to construct a $\bmu_1$-\,partition in lines \ref{line:pr3}--\ref{line:pr5}.
	\end{corollary}
	
We can now proceed to the main theorem of this section on the existence and fairness properties of the PNE of Mod-Cut\&Choose.
	
	\begin{theorem}\label{thm:mms+efx_PNE}
		For any instance $\mathcal{I}=(\{1,2\},M,\mathbf{v})$, the Mod-Cut\&Choose mechanism has at least one PNE. Moreover, every PNE of the mechanism is \mms and \efx with respect to the valuation functions $v_1,v_2$.
	\end{theorem}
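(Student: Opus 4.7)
My plan is to prove the theorem in three stages: exhibit an explicit PNE for existence, show that at every PNE both agents receive at least their maximin share via a ``safe deviation'' argument, and then invoke Theorem~\ref{thm:mms_to_efx_n=2} to upgrade \mms to \efx.

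For existence, I would fix agent~2 to bid truthfully, $\bm{b}_2=(v_2(g_1),\ldots,v_2(g_m))$, and pick agent~1's bid to be a best response. By Lemma~\ref{lem:ECE}, agent~1 can realize any partition of $M$ via a suitable bid, so the induced game over the (finite) set of partitions admits a maximizer: choose $(E_1^*,E_2^*)$ so as to maximize $v_1$ of the bundle that agent~2 would \emph{not} select under truthful bidding, and let $\bm{b}_1^*$ be any bid realizing this partition. Checking that $(\bm{b}_1^*,v_2)$ is a PNE is then immediate: agent~1 is optimal by construction, while agent~2 under truthful bidding already receives her favorite of the two cut bundles, so any deviation of hers either preserves her bundle or switches her to the weakly worse one.

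For the fairness properties at an \emph{arbitrary} PNE $(\bm{b}_1,\bm{b}_2)$ producing allocation $(A_1,A_2)$, I would use the fact that each agent has a unilateral deviation that secures her \mms value regardless of the opponent's fixed bid. Agent~1 can, by Corollary~\ref{cor:ECEMMS}, deviate to a bid inducing a $\bmu_1$-partition $(F_1,F_2)$ with $\min\{v_1(F_1),v_1(F_2)\}\ge\bmu_1$; whichever bundle agent~2 selects under her fixed $\bm{b}_2$, agent~1 is left with the other and so gets at least $\bmu_1$. Agent~2 can deviate to truthful bidding, after which she selects the better of the two bundles cut by agent~1, which has value at least $v_2(M)/2\ge\bmu_2$ using $n=2$. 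At a PNE no such deviation is strictly improving, so $v_i(A_i)\ge\bmu_i$ for $i\in\{1,2\}$, giving the \mms guarantee. Applying Theorem~\ref{thm:mms_to_efx_n=2} then upgrades this to \efx, closing the argument.

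The main subtlety I expect to need care with is keeping the distinction between bids and the partitions they induce: while computing a best-response bid is NP-hard (Proposition~\ref{prop:best-response-hard}), the abstract existence of a best response is unproblematic because Lemma~\ref{lem:ECE} collapses the bid space into at most $2^m$ equivalence classes. The ``safe deviation'' structure is precisely what explains the strengthening relative to the non-strategic setting: although Mod-Cut\&Choose viewed as an algorithm need not output an \mms allocation, the threat of each agent executing her own safe deviation forces every equilibrium to satisfy \mms.
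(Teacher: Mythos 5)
Your proposal is correct and follows essentially the same route as the paper's proof: fix agent~2 to bid truthfully, use Lemma~\ref{lem:ECE} to let agent~1 select the partition maximizing $v_1$ of the bundle agent~2 rejects, and establish \mms at every PNE via the two safe deviations (Corollary~\ref{cor:ECEMMS} for agent~1, truthful bidding and $v_2(M)/2\ge\bmu_2$ for agent~2), finishing with Theorem~\ref{thm:mms_to_efx_n=2}. No gaps; the finiteness-of-partitions observation you flag is exactly how the paper justifies the existence of the maximizer as well.
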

	
	\begin{proof}
	Given a partition $\mathcal{X} = (X_1,X_2)$ we are going to slightly abuse the notation---as we do in our pseudocode---and consider $\argmin_{X\in\mathcal{X}} v_2(X)$ to be a single set in $\mathcal{X}$ rather than a subset of $\{X_1,X_2\}$. To do so, we assume that ties are  broken in favor of the highest indexed set (here $X_2$) and tie-breaking is applied by the $\argmin$ operator.
	
	We will define a profile $(\bm{b}_1,\bm{b}_2)$ and show that it is a PNE.
	First, let $\bm{b}_2 = (v_2(g_1), v_2(g_2), \ldots, v_2(g_m))$ be the truthful bid of agent 2. Next $\bm{b}_1$ is the bid vector (as defined within the proof of Lemma \ref{lem:ECE}) that results in Mod-Cut\&Choose constructing a partition in 
	\[\argmax_{\mathcal{X} \in \Pi_2(M)} v_1 \big( \argmin_{X\in\mathcal{X}} v_2(X) \big) \,.\]
	To see that there exists such $\bm{b}_1$, notice that the set $\Pi_2(M)$ of all possible partitions  is finite and, by Lemma \ref{lem:ECE}, every possible partition can be produced by Mod-Cut\&Choose given the appropriate bid vector of agent 1. So, agent 1 forces the partition that maximizes, according to $v_1$, the value of the least desirable bundle according to $v_2$. 
	Now it is easy to see that given the bidding strategy of agent 2, i.e., playing truthfully, there is no deviation for agent 1 that is profitable (by definition). Moreover, agent 2 gets the best of the two bundles according to her  valuation function (regardless of the partition, truth telling is a dominant strategy for her), thus there is no profitable deviation for her either. Therefore, $(\bm{b}_1,\bm{b}_2)$ is a PNE for $\mathcal{I}$.
		
	Regarding the second part of the statement, suppose for a contradiction that there is a PNE $\mathbf{b}$, where an agent $i$ does not achieve her $\bmu_i$ in the allocation returned by Mod-Cut\&Choose$(\mathbf{b})$. If this agent is agent 1, then according to Corollary \ref{cor:ECEMMS}, there is a bid vector $\bm{b}'_1$ she can report, so that the algorithm will produce a $\bmu_1$-\,partition. By deviating to $\bm{b}'_1$, regardless of the set given to  agent 2, agent 1 will end up with a bundle she values at least $\bmu_1$. As this would be a strict improvement over what she currently gets, it would contradict the fact that $\mathbf{b}$ is a PNE. So, it must be the case where agent 2 gets a bundle she values strictly less than $\bmu_2$. Notice that, regardless of the partition which Mod-Cut\&Choose to constructs in lines \ref{line:pr3}--\ref{line:pr5}, by declaring her truthful bid, agent 2 gets a bundle of value at least $v_2(M)/2$. By Definition \ref{def:mmshare}, it is immediate to see that this value is at least $\bmu_2$, i.e., deviating to her truthful bid is a strict improvement over what she currently gets by Mod-Cut\&Choose$(\mathbf{b})$, which is a contradiction.
	
	It remains to show that the allocation returned by Mod-Cut\&Choose$(\mathbf{b})$ is also \efx. However, since here $n=2$, this directly follows from Theorem \ref{thm:mms_to_efx_n=2}. 
	\end{proof}

\subsection{The Enhanced Fairness of EFX Nash Equilibria} \label{sec:efx_to_mms_on_pne}
	
As it was discussed in Section \ref{sec:MEFXEQ}, it is surprising that  the \efx equilibria of Mod-Cut\&Choose impose stronger fairness guarantees compared to generic \efx allocations or even \efx allocations produced by Mod-Cut\&Choose itself in the non-strategic setting. 
In this section we explore whether something similar holds for \textit{every} mechanism with \efx equilibria. Specifically, we consider the (obviously non-empty) class of mechanisms
that have PNE for every instance and these equilibria always lead to \efx allocations. Our goal is to determine if these allocations have  better fairness guarantees (with respect to the underlying true valuation functions) than \efx allocations in general. To this end, we start by examining instances of two agents and  $4$ goods and we prove that for every mechanism of this class, all allocations at a PNE are \mms allocations.  The reason we start from this restricted set of instances is that it already provides a clear separation with the non-strategic setting. Recall from Theorem \ref{thm:efx_to_mms_n=2} that there are instances with just $4$ goods where an \efx allocation may not be a $\big(\frac{2}{3} +\delta\big)$-\mms allocation, for any $\delta>0$.

We begin by showing the following lemma which regards some very simple cases of such instances, and then we proceed to the proof of the statement. Recall that $\bmu_i$ denotes the maximin share of agent $i$ (see Definition \ref{def:mmshare}).
\begin{lemma}\label{lem:3EFX=MMS}
	Consider an instance with $2$ agents and $4$ goods. If  agent $i\in[2]$ has strictly positive value for three or less goods, then in every allocation which is \efx from her point of view, agent $i$ has value at least $\bmu_i$. 
\end{lemma}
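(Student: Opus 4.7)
The plan is to carry out a case analysis on $k$, the number of goods for which agent $i$ has strictly positive value. When $k\le 1$, we have $\bmu_i=0$ directly from the definition (agent $i$ can always place all valuable goods in one bundle and the other bundle has value $0$), so the claim is trivial for any allocation.

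For $k=2$, let $a,b$ be the two positively valued goods with $v_i(a)\ge v_i(b)>0$. Placing $a$ and $b$ in different bundles gives $\bmu_i=v_i(b)$. I will show that \efx from $i$'s viewpoint forces agent $i$ to receive at least one of $\{a,b\}$: otherwise $v_i(A_i)=0$, yet $A_j$ contains both $a$ and $b$, so removing the positively valued good $b$ from $A_j$ leaves value $v_i(a)\ge v_i(b)>0$, contradicting \efx. Hence $v_i(A_i)\ge v_i(b)=\bmu_i$.

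For $k=3$, let $a,b,c$ be the positively valued goods with $v_i(a)\ge v_i(b)\ge v_i(c)>0$. The optimal maximin partition is $(\{a\},\{b,c\})$ (the zero-valued good goes anywhere), so $\bmu_i=\min\{v_i(a),\,v_i(b)+v_i(c)\}$. I will split by $|A_i\cap\{a,b,c\}|$. If this is $2$ or $3$, then $v_i(A_i)\ge v_i(b)+v_i(c)\ge\bmu_i$. If it is $0$, then removing the least valuable good from $A_j$ still leaves positive value, contradicting \efx (since $v_i(A_i)=0$). The interesting subcase is when $A_i$ contains exactly one of $\{a,b,c\}$: if it is $a$, then $v_i(A_i)=v_i(a)\ge\bmu_i$ immediately; if it is $b$, \efx requires $v_i(b)\ge v_i(A_j\setminus\{c\})\ge v_i(a)$, which combined with $v_i(a)\ge v_i(b)$ yields $v_i(a)=v_i(b)$ and thus $v_i(A_i)=v_i(a)\ge\bmu_i$; the case $A_i\cap\{a,b,c\}=\{c\}$ is analogous, forcing $v_i(a)=v_i(b)=v_i(c)$ via the two \efx inequalities $v_i(c)\ge v_i(A_j\setminus\{a\})\ge v_i(b)$ and $v_i(c)\ge v_i(A_j\setminus\{b\})\ge v_i(a)$.

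No step is especially difficult; the argument is just a finite case analysis. The main thing to be careful about is the $k=3$ subcase in which agent $i$ receives a single valuable good other than $a$: there one must apply the \efx inequality twice to force equalities among the $v_i$-values, and then verify that $\bmu_i$ collapses accordingly so that $v_i(A_i)\ge\bmu_i$ holds.
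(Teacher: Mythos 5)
Your proof is correct and follows essentially the same route as the paper's: a case analysis on the number $k$ of positively valued goods, with $\bmu_i$ computed explicitly for $k\le 3$ and the \efx inequality used to rule out or collapse the unfavorable allocations. You are in fact slightly more explicit than the paper in the $k=3$ case (spelling out the forced equalities when $A_i$ contains a single valuable good other than the top one, and excluding the empty-intersection case), but the underlying argument is identical.
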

	
\begin{proof}
	Suppose agent $i$ has positive value for at most three goods.  The statement is trivial when there is \textit{at most one} positively valued good as in this case $\bmu_i  = 0$ and agent $i$ always gets $\bmu_i$ no matter the bundle that she gets. When she has a positive value for \textit{two} goods, in order to consider the allocation as \efx she must get at least one of them. In this case she also achieves her $\bmu_i$ as it is equal to the smaller of the two positive values.  Finally, suppose agent $i$ has positive value for \textit{three} goods. Notice that $\bmu_i$ in this case is either equal to the largest of the three values or to the sum of the two smallest values; whichever is smaller. So, if agent $i$ gets two goods, then she always derives a value of at least gets $\bmu_i$. If she gets just one good, then this good must have the highest value, otherwise the she would not consider the allocation as \efx.  So, in this case too, she gets value at least $\bmu_i$.
\end{proof}

We are now ready for the general result. 
	\begin{theorem}\label{thm:4EFX=MMS}
		Let $\mathcal{M}$ be a mechanism that has PNE for any instance 
		$(\{1, 2\},M,(v_1, v_2))$ with $|M|=4$,
		and all these equilibria lead to \efx allocations with respect to $v_1, v_2$. Then each such \efx allocation is also an \mms allocation. 
	\end{theorem}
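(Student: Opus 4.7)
I would proceed by contradiction. Suppose there exist a mechanism $\mathcal{M}$ satisfying the hypotheses, an instance $(\{1,2\},M,(v_1,v_2))$ with $|M|=4$, and a PNE $\mathbf{b}=(\bm{b}_1,\bm{b}_2)$ whose allocation $(A_1,A_2)=\mathcal{M}(\mathbf{b})$ is \efx but satisfies $v_i(A_i)<\bmu_i$ for some $i\in\{1,2\}$; without loss of generality, let $i=2$. By \cref{lem:3EFX=MMS}, I may further assume $v_2(g)>0$ for every $g\in M$, as otherwise \efx alone already implies the \mms guarantee. Order the goods so that $v_2(g_1)\ge v_2(g_2)\ge v_2(g_3)\ge v_2(g_4)>0$, and fix a $\bmu_2$-partition $(B_1,B_2)$ of $M$ with $v_2(B_1), v_2(B_2)\ge \bmu_2$; by assumption, $A_2$ is neither $B_1$ nor $B_2$.

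Next I would use \efx from agent~2's perspective, i.e.\ $v_2(A_2)\ge v_2(A_1)-\max_{h\in A_1}v_2(h)$, together with $v_2(A_2)<\bmu_2\le v_2(M)/2$, to constrain the structure of $(A_1,A_2)$. The cases $|A_2|=0$ and $|A_2|=4$ are immediately discarded ($|A_2|=0$ violates \efx since $A_1=M$ still has three positively valued goods after removing any one, while $|A_2|=4$ gives $v_2(A_2)=v_2(M)\ge 2\bmu_2$), leaving $|A_2|\in\{1,2,3\}$. For each case I would enumerate the possible patterns of $A_2$ compatible with \efx and with $v_2(A_2)<\bmu_2$, and for each such pattern exhibit a ``witness'' bid vector $\bm{b}'_2$ for agent~2 that encodes a preference ranking aligned with $(B_1,B_2)$—for instance, by putting almost all mass on the goods of $B_2$ and a negligible amount on those of $B_1$ (or vice versa, depending on the case).

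To argue that this deviation is strictly profitable for agent~2, I would consider the auxiliary instance $(\{1,2\},M,(v_1,v'_2))$ where $v'_2=\bm{b}'_2$. By the theorem's hypothesis this instance admits a PNE $(\bm{c}_1,\bm{c}_2)$ whose allocation $(C_1,C_2)$ is \efx with respect to $v_1$ and $v'_2$. Because of the concentrated design of $v'_2$, a direct calculation (analogous to the argument used in the proof of \cref{thm:mms+efx_PNE} for Mod-Cut\&Choose) shows that in any \efx allocation with respect to $v'_2$, agent~2 must receive a bundle whose true $v_2$-value is at least $\bmu_2$. I would then use that agent~1's best-response correspondence depends only on $v_1$, which is identical in both instances, to transport this outcome back: specifically, I would argue that the bundle agent~2 obtains by unilaterally deviating from $\bm{b}_2$ to $\bm{c}_2$ in the original game must also be worth at least $\bmu_2$ to her under $v_2$, contradicting the PNE status of $\mathbf{b}$.

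The main obstacle I anticipate is precisely this last bridging step: since $\mathcal{M}$ is unconstrained off-equilibrium, in general $\mathcal{M}(\bm{b}_1,\bm{c}_2)\ne \mathcal{M}(\bm{c}_1,\bm{c}_2)$, so the EFX guarantee on the auxiliary PNE does not transfer automatically. I expect to handle this via a more delicate case-by-case argument exploiting the very restricted set of partition shapes of $M$ into two parts when $|M|=4$, perhaps iterating between agent~1's best responses in the original and in the auxiliary instance, or constructing the deviation $\bm{b}'_2$ so that agent~2's achievable set against \emph{any} best response of agent~1 (in particular against $\bm{b}_1$) necessarily contains a bundle of true value $\ge\bmu_2$. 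This part of the argument will need to be tailored separately to each of the few structural cases identified in the second step.
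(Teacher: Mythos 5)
Your overall strategy matches the paper's at a high level: reduce to the all-goods-positive case via Lemma \ref{lem:3EFX=MMS}, enumerate the possible shapes of the allegedly non-\mms bundle, and for the shapes that cannot be excluded combinatorially, invoke the mechanism's guaranteed PNE on an auxiliary instance and play the two instances against each other. However, the step you flag as ``the main obstacle'' is exactly where your argument has a genuine gap, and the fix is not a refinement of your plan but a different move. You propose to deviate agent 2 in the original game to the auxiliary equilibrium bid $\bm{c}_2$, which lands the mechanism on the profile $(\bm{b}_1,\bm{c}_2)$ --- a profile that is off-equilibrium in \emph{both} games, so nothing constrains $\mathcal{M}$ there and the argument stalls, as you yourself observe. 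The paper instead deviates \emph{agent 1} in the original game to the auxiliary equilibrium bid $\bm{b}^*_1$, producing the profile $(\bm{b}^*_1,\bm{b}_2)$. The crucial point is that this single profile is simultaneously (a) a unilateral deviation of agent 1 from the original PNE $\mathbf{b}$, and (b) a unilateral deviation of agent 2 from the auxiliary PNE $\mathbf{b}^*$. Whatever allocation $\mathcal{M}$ outputs at that one profile, an exhaustive check of the possible bundles shows it either gives agent 1 strictly more than $v_1(A_1)$ under $v_1$ (contradicting that $\mathbf{b}$ is a PNE) or gives agent 2 strictly more than her value at $\mathbf{b}^*$ under the auxiliary valuations (contradicting that $\mathbf{b}^*$ is a PNE). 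No transfer of an \efx guarantee to an off-equilibrium profile is ever needed.

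Two further points. First, the auxiliary instance must be engineered so that this double-deviation argument closes: the paper uses identical valuations $(v^*,v^*)$ assigning $1.2,\,1,\,1,\,\varepsilon$ to $h_1,\ldots,h_4$ (named according to the \emph{original} agent's ranking), chosen so that the auxiliary instance has exactly two \efx allocations and agent 2's auxiliary values for them are sharply separated ($2$ versus $1.2+\varepsilon<1.4$); your proposal of a bid concentrated on one side of a $\bmu_2$-partition does not obviously yield either property. Second, before any game theory enters, the paper shows that every shape of $A_1$ other than $\{h_2,h_4\}$ and $\{h_3,h_4\}$ already forces $v_1(A_1)\ge\bmu_1$ from \efx alone; only those two residual cases need the auxiliary instance. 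Your enumeration plan is compatible with this but is not carried out, and you would additionally need to handle the corner cases $\bm{b}_1=\bm{b}^*_1$ and $\bm{b}_2=\bm{b}^*_2$, which the paper treats separately.
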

	
	\begin{proof}
		Suppose for contradiction that this is not the case. This means that there exists a valuation instance $\mathbf{v}=(v_1, v_2)$, for which there is a PNE $\mathbf{b}=(\bm{b}_1, \bm{b}_2)$ that produces an \efx allocation $(A_1, A_2)$, where, without loss of generality,  $v_1(A_1)<\bmu_1$. Rename, if necessary, the goods to $\{h_1, h_2, h_3, h_4\}$, so that $v_1(h_1)\geq v_1(h_2) \geq v_1(h_3)\geq v_1(h_4) > 0$, where the last inequality follows from Lemma \ref{lem:3EFX=MMS}. 
        The following lemma, established within the proof of Theorem 5.1 of \citet{ABM16} (also Lemma 5.3 in \citep{ANM2019}), will reduce and simplify the possible cases we need to consider.
        \begin{lemma}[Follows from the proof of Theorem 5.1 of \citep{ABM16}]\label{lem:mms_4_goods}
        For $N$, $M$ and $v_1$ as above, we have $v_i(\{h_1, h_4\}) \ge \bmu_1$ and $\max\left\{v_i(\{h_1\}), v_i(\{h_2, h_3\})\right\} \ge \bmu_1$.
        \end{lemma}
        Given Lemma \ref{lem:mms_4_goods}, the bundle $A_1$ must be either a singleton or one of $\{h_2, h_3\}, \{h_2, h_4\}, \{h_3, h_4\}$. 
        \smallskip

        \noindent\ul{Case 1 ($|A_1|=1$).} Since $(A_1, A_2)$ is an \efx allocation and all goods have positive value according to $v_1$, it is easy to see that $A_1 = \{h_1\}$. Then, again because we have an \efx allocation, $v_1(h_1) \ge v_1(\{h_2, h_3\})$. The latter implies $v_1(A_1) \ge \bmu_1$, by the second inequality of Lemma \ref{lem:mms_4_goods}.  
		\smallskip

		\noindent\ul{Case 2 ($A_1=\{h_2, h_3\}$).} Since $(A_1, A_2)$ is an \efx allocation, we have  $v_1(\{h_2, h_3\})\geq v_1(h_1)$. 
        Like in Case 1, this implies the contradiction $v_1(A_1) \ge \bmu_1$, by the second inequality of Lemma \ref{lem:mms_4_goods}.
		  \smallskip

        \noindent\ul{Case 3 ($A_1=\{h_2, h_4\}$ or $A_1=\{h_3, h_4\}$).} So far we have not use the fact that $\mathbf{b}=(\bm{b}_1, \bm{b}_2)$ is a PNE for the valuation profile $ \mathbf{v}=(v_1, v_2)$. Consider a different valuation profile $\mathbf{v}^*=(v^*, v^*)$, where the agents have identical values over the goods. The valuation function $v^*$ is defined as follows:
		\[ 
		v^*(h_j)= \left\{
		\begin{array}{lll}
			1.2 & j=1 \\
			1 &j \in \{2,3\} \\
			0.1 & j=4  \\
		\end{array} 
		\right. 
		\]
		It is easy to see that for $\mathbf{v}^*$ there are only two \efx allocations, namely $(\{h_1, h_4\},\{h_2, h_3\})$ and its symmetric $(\{h_2, h_3\}, \{h_1, h_4\})$. According to our assumption, there must be a bid vector $\mathbf{ b}^*=(\bm{b}^*_1, \bm{b}^*_2)$ that is a PNE of $\mathcal{M}$ for this valuation profile, and since we require the PNE to be also \efx, $\mathcal{M}(\mathbf{ b}^*)$ must be one of these allocations.  Moreover, observe that the value that agent 2 derives in these allocations is at most $2$.
  Let us examine what each agent can get if agent 1 deviates from $\mathbf{ b}$ to $\mathbf{ b}'=(\bm{b}^*_1, \bm{b}_2)$:
		\begin{itemize}
        [labelindent=10pt,leftmargin=*,itemsep=2pt,topsep=3pt]
			\item In case the bundle of agent 1 is a singleton, then agent 2 gets a bundle of cardinality $3$. This contradicts the fact that $\mathbf{ b}^*=(\bm{b}^*_1, \bm{b}^*_2)$ is a PNE for the valuation profile $\mathbf{ v}^*=(v^*, v^*)$, as any such set gives agent 2 a value of at least $2.1$.
			\item In case the bundle of agent 1 has cardinality 3, this contradicts the fact that $\mathbf{ b}=(\bm{b}_1, \bm{b}_2)$ is a PNE for the valuation profile $\mathbf {v}=(v_1, v_2)$, as the least valuable such set is $\{h_2, h_3, h_4\}$ and it has strictly more  value than $v_1(A_1)$, since $v_1(h_j)>0$ for every $j\in [4]$.
			\item In case the bundle of agent 1 is one of $\{h_1, h_2\}$, $\{h_1, h_3\}$, $\{h_1, h_4\}$, or $\{h_2, h_3\}$, then this implies that $A_1$ has value at least equal to the value of one of these bundles. By using Lemma \ref{lem:mms_4_goods} as above, we get the contradiction $v_1(A_1)\geq\bmu_1$.
			\item In case the bundle of agent 1 is one of $\{h_2, h_4\}$ or $\{h_3, h_4\}$, then agent 2 gets either $\{h_1, h_2\}$ or $\{h_1, h_3\}$. This contradicts the fact that $\mathbf{ b}^*=(\bm{b}^*_1, \bm{b}^*_2)$ is a PNE for the valuation profile $\mathbf{ v}^*=(v^*, v^*)$, as any such set gives agent 2 a value of at least $2.2$.
		\end{itemize}
        \smallskip
		
		Since every possible case leads to a contradiction, we conclude that every allocation corresponding to a PNE of $\mathcal{M}$ guarantees to each agent her maximin share. 
	\end{proof}

The proof of Theorem \ref{thm:4EFX=MMS}
relies on extensive case analysis, part of which is hidden within Lemma \ref{lem:mms_4_goods}. Each case assuming that the allocation is \efx but not \mms eventually contradicts the fact that the current profile is a PNE. When we consider instances with $5$ or more goods, this approach is not fruitful anymore. The reason is not solely the increased number of cases one has to handle, but rather the fact that now some of the cases do not seem to lead to a contradiction at all.

Although we suspect that the theorem is no longer true for more than $4$ goods, we are able prove a somewhat weaker property that still separates the \efx allocations in PNE from generic \efx allocations in the non-strategic setting. 
In particular, for general mechanisms that have PNE for every instance and these equilibria are always \efx, we show that the corresponding  allocations always guarantee an approximation to \mms that is \textit{strictly better} than $2/3$.

\begin{theorem}\label{thm:NEFX_APMMS}
	Let $\mathcal{M}$ be a mechanism that has PNE for any instance $(\{1,2\},M,(v_1, v_2))$, and all these equilibria lead to \efx allocations with respect to $v_1, v_2$. Then each such \efx allocation is also an $\alpha$-\mms allocation for some $\alpha> 2/3$.
\end{theorem}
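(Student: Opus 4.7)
The plan is to argue by contradiction: assume that for every $\alpha > 2/3$ there exist a mechanism $\mathcal{M}$ satisfying the hypothesis, an instance $(\{1,2\},M,(v_1, v_2))$, and a PNE $\mathbf{b}$ of $\mathcal{M}$ whose \efx outcome $(A_1, A_2)$ satisfies $v_1(A_1)/\bmu_1 < \alpha$, where without loss of generality agent $1$ is the disadvantaged one. The ultimate goal will be to exhibit a profitable deviation for agent $1$, contradicting the PNE assumption. As a first step I would pin down the structure of \efx allocations whose ratio approaches $2/3$. Letting $g^\ast = \argmin_{g \in A_2,\, v_1(g) > 0} v_1(g)$, the \efx inequality gives $v_1(A_1) \geq v_1(A_2) - v_1(g^\ast)$, and combined with $\bmu_1 \leq v_1(M)/2$ a short computation shows that any near-tight allocation forces $A_2$ to contain at most two goods positively valued by agent $1$, each of value at least $v_1(A_1)$. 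Hence the instance structurally resembles the four-goods tight example $v_1 \approx (a, a, b, b)$ with $a/b$ close to $2$, possibly padded by goods worthless to agent $1$.

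Next, I would construct an auxiliary instance $(\{1,2\},M,(v_1, v_2^\ast))$ where $v_2^\ast$ is a carefully chosen perturbation of $v_2$ (for instance, spiking agent $2$'s value for one of the two ``big'' goods in $A_2$) designed so that every \efx allocation in the auxiliary instance must assign at least one of the big goods to agent $1$. By the hypothesis on $\mathcal{M}$, the auxiliary instance admits a PNE with an \efx outcome, and from this I would extract structural information about agent $1$'s best-response bid $\tilde{\bm{b}}_1$ in the auxiliary game.

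Then I would try to show that agent $1$ can use a variant of $\tilde{\bm{b}}_1$ as a deviation in the original instance and obtain a bundle containing one of the big goods. The flavor of this argument would mirror the exhaustive case analysis in the proof of Theorem \ref{thm:4EFX=MMS}, where a deviation to a ``twin-style'' profile, combined with the universal \efx-at-every-PNE assumption applied across several related instances, forces a better outcome on at least one side. Since every big good is worth at least $v_1(A_1)$ plus a margin coming from the perturbation, such a deviation would strictly increase agent $1$'s utility beyond $v_1(A_1)$ and contradict PNE.

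The main obstacle is precisely this last step: $\mathcal{M}$ is arbitrary and its behavior on $(\tilde{\bm{b}}_1, \bm{b}_2)$ cannot be directly read off from its behavior on the auxiliary PNE, in which agent $2$'s bid differs. Bridging this gap seems to require either a continuity/perturbation argument in $v_2^\ast$ or a sufficiently rigid auxiliary instance that pins down $\mathcal{M}$'s output on a broad family of bid profiles (again in the spirit of the case distinctions of Theorem \ref{thm:4EFX=MMS}, but now with the added combinatorial complexity of $m\ge 5$). I suspect this is precisely why the theorem offers only an unspecified $\alpha > 2/3$ rather than an explicit constant, and why the argument cannot be pushed all the way to \mms for $m \geq 5$, as the authors themselves conjecture in the paragraph preceding the theorem.
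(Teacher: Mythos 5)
Your proposal follows the same broad outline as the paper's proof (contradiction, structural analysis of the tight \efx allocation, an auxiliary instance, and a ``crossed'' bid profile), but it stops exactly at the step that carries the whole argument, and the way you set up the auxiliary instance would make that step harder rather than easier. Two concrete issues. First, your negation is stronger than necessary: since Theorem \ref{thm:efx_to_mms_n=2} already guarantees every \efx allocation is $\frac{2}{3}$-\mms, negating the claim gives you a single PNE allocation with $v_1(A_1)=\frac{2}{3}\bmu_1$ \emph{exactly}, not a sequence of near-tight instances. The paper exploits these exact equalities heavily: the counting argument $\frac{2}{3}\bmu_1 \ge \frac{k-1}{k}v_1(A_2) \ge \frac{k-1}{k}\cdot\frac{4}{3}\bmu_1$ pins down that $A_2$ has exactly two positively valued goods $h_1,h_2$ with $v_1(h_1)=v_1(h_2)=v_1(A_1)=\frac{2}{3}\bmu_1$ and $v_1(A_2)=\frac{4}{3}\bmu_1$, and these equalities (together with two positively valued goods $h_3,h_4\in A_1$) are what later rule out agent 1's bundle containing $h_1$ or $h_2$ together with anything else. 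Working with approximate versions of these facts, as your ``for every $\alpha$'' framing forces you to, introduces the continuity problems you yourself flag, and none of them are needed.

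Second, and more importantly, the obstacle you correctly identify (``$\mathcal{M}$ is arbitrary and its behavior on $(\tilde{\bm{b}}_1,\bm{b}_2)$ cannot be read off from the auxiliary PNE'') is resolved in the paper by a \emph{two-sided} contradiction that your plan lacks. The auxiliary instance is not a perturbation of $v_2$; it replaces \emph{both} valuations by a common $v^*$ supported essentially on the already-identified goods $h_1,\dots,h_4$ (values $1.2,1,1,\varepsilon,\dots$), chosen so that it has exactly two \efx allocations, one giving agent 2 value $2$ and the other value $<1.4$. Let $\mathbf{b}^*$ be a PNE there. The crossed profile $\mathbf{b}'=(\bm{b}^*_1,\bm{b}_2)$ is simultaneously a deviation of agent 1 from $\mathbf{b}$ in the \emph{original} instance and a deviation of agent 2 from $\mathbf{b}^*$ in the \emph{auxiliary} instance. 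The PNE property of $\mathbf{b}$ under $(v_1,v_2)$ constrains agent 1's bundle under $\mathbf{b}'$ (it cannot pair $h_1$ or $h_2$ with any of $h_2,h_3,h_4$, since each such pair is worth strictly more than $v_1(A_1)=\frac{2}{3}\bmu_1$); hence agent 2's bundle under $\mathbf{b}'$ contains one of $\{h_2,h_3,h_4\}$, $\{h_1,h_3,h_4\}$, $\{h_1,h_2\}$, each worth strictly more than $2$ under $v^*$ --- contradicting that $\mathbf{b}^*$ is a PNE of the auxiliary instance. You never get to choose or analyze $\mathcal{M}$'s output on the crossed profile; you only need that \emph{whatever} it is, it violates one of the two PNE assumptions. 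Without this second prong (the contradiction on the auxiliary side), your argument cannot close, so as written the proposal has a genuine gap at its decisive step.
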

	
\begin{proof}
	Suppose for a contradiction that this is not the case. This means that there exists such a mechanism $\mathcal{M}$ and an instance $(\{1,2\},M,(v_1,v_2))$, for which there is a PNE $\mathbf{b}=(\bm{b}_1, \bm{b}_2)$ that results in an \efx allocation $\mathcal{A} = (A_1, A_2)$, where $v_i(A_i)\le2\bmu_i /3$ for at least one $i\in [2]$. Without loss of generality, assume $v_1(A_1)\le2\bmu_1 /3$ and notice that this means that $v_1(A_1) = 2\bmu_i /3$, as $v_1(A_1)$ cannot be smaller than $2\bmu_i /3$, by Theorem \ref{thm:efx_to_mms_n=2}. 
	This implies that $v_1(A_2)\geq 4\bmu_i /3$, since  $v_1(M)\ge 2\bmu_1$ by Definition \ref{def:mmshare}. 
		
	Initially, we will restrict the number of the goods with positive value (according to $v_1$) in $A_2$. Let $S\subseteq A_2$ be the set of such goods, i.e, $S=\{g\in A_2 \,|\, v_1(g)>0\}$.  Let $|S| = k$ and notice that $k$ cannot be $0$ or $1$ since otherwise $v_1(A_1) \geq \bmu_1$. 
	Finally,  let $x\in \argmin_{g\in S} v_1(g)$ be a minimum valued good for agent 1 in $S$. We have
\[ \frac{2}{3}\bmu_1  = v_1(A_1) \ge v_1(S\setminus\{x\}) \geq v_1(S) - \frac{v_1(S)}{k} = \frac{(k-1)}{k}v_1(A_2)\geq \frac{(k-1)}{k} \frac{4}{3}\bmu_1\,,  \]
    where the first inequality follows from $(A_1, A_2)$ being \efx. Given our observation that $k\le 2$, the above implies that $k=2$. 
	Name $h_1$ and $h_2$ the goods of $S$, and observe that if $v_1(A_2)=v_1(\{h_1, h_2\})> 4\bmu_1 /3$, then $(A_1, A_2)$ cannot be \efx from the perspective of agent 1. Thus, we get that $v_1(A_2)=  4\bmu_1 /3$, which in conjunction with \efx implies  $v_1(h_1)=v_1(h_2)= 2\bmu_1 /3$. 
		
	Next we argue that $A_1$ contains at least $2$ goods that have positive value for agent 1. 	
	Indeed, if all the goods in $A_1$ had zero value, then we would have $v_1(A_1)=0< 2 \bmu_1 / 3$ as $A_2$ contains two positively valued goods, while if there was just one positively valued good in $A_1$, this would imply that only three goods have positive value for agent 1, and each one of them has value $2 \bmu_1 / 3$. The latter would make the existence of a $\bmu_1$-\,partition impossible, which is a contradiction. 
	So, since there are at least two positively valued goods in $A_1$ for agent 1, we arbitrarily choose two of them, and we name them $h_3$ and $h_4$. 
	We arbitrarily name the remaining goods $h_5, h_6,\ldots, h_m$.
		
	Consider now a different valuation instance $\mathbf{v}^*=(v^*, v^*)$ where the agents have identical values over the goods. The valuation function is defined as 
		\[ 
		v^*(h_j)= \left\{
		\begin{array}{lll}
			1.2 & j=1 \\
			1 &j \in \{2,3\} \\
			\varepsilon & j \in \{4,\ldots , m\}  \\
		\end{array} 
		\right. 
		\]
		where $\varepsilon>0$ and $(m-3)\cdot \varepsilon < 0.2$. It is easy to see that for this valuation instance there are only two \efx allocations, namely, $\mathcal{X} = (\{h_1, h_4,\ldots, h_m\},\{h_2, h_3\})$, and its symmetric $\mathcal{Y} = (\{h_2, h_3\}, \{h_1,h_4,\ldots,h_m\})$. According to our assumption, there must be a bidding vector $\mathbf{b}^*=(\bm{b}^*_1, \bm{b}^*_2)$ that is a PNE of 
		$\mathcal{M}$ for the instance $(\{1,2\},M,\mathbf{v}^*)$, and since all PNE of $\mathcal{M}$ are also EFX,  $\mathcal{M}(\mathbf{b}^*)$ must output one of $\mathcal{X}$ and $\mathcal{Y}$.  Moreover, observe that the value agent 2 receives (with respect to $V^*$) in these allocations is $2$ and $1.2+ (m-3) \varepsilon < 1.4$ respectively.

		For now assume that $\bm{b}_1 \neq \bm{b}^*_1$ and $\bm{b}_2 \neq \bm{b}^*_2$. 
		We will show that, in this case, running $\mathcal{M}$ with input $\mathbf{b}'=(\bm{b}^*_1, \bm{b}_2)$ results to agent 2 receiving a bundle of value strictly better than $2$ according to $v^*$. This  contradicts
		the fact that $\mathbf{b}^*=(\bm{b}^*_1, \bm{b}^*_2)$ is a PNE for $\mathbf{v}^*=(v^*, v^*)$. 
Recall that $\mathbf{b}=(\bm{b}_1, \bm{b}_2)$ is a PNE for $\mathbf{v}=(v_1, v_2)$,  that $v_1(h_1)=v_1(h_2)= v_1(A_1)= 2 \bmu_1 / 3$, and that $v_1(h_3), v_1(h_4)$ are strictly positive. 
		So, let us examine what each agent may get if agent 1 deviates from $\mathbf{b}$ to $\mathbf{b}'=(\bm{b}^*_1, \bm{b}_2)$:
		\begin{itemize}[labelindent=10pt,leftmargin=*,itemsep=2pt,topsep=3pt]
			\item In case the bundle of agent 1 contains good $h_1$, it cannot contain any good from $\{h_2,h_3,h_4\}$; otherwise $\mathbf{b}=(\bm{b}_1, \bm{b}_2)$ would not be a PNE for  $\mathbf{v}=(v_1, v_2)$. Thus, $\{h_2,h_3,h_4\}$ is part of the bundle of agent 2.
			\item In case the bundle of agent 1 contains good $h_2$, it cannot contain any good from $\{h_1,h_3,h_4\}$; otherwise $\mathbf{b}=(\bm{b}_1, \bm{b}_2)$ would not be a PNE for $\mathbf{v}=(v_1, v_2)$. Thus, $\{h_1,h_3,h_4\}$ is part the bundle of agent 2.
			\item In case the bundle of agent 1 does not contain any of $h_1$ and $h_2$, then it is possible for her to get any subset $T\subseteq \{h_3, h_4, \ldots , h_m\}$. However, $\{h_1,h_2\}$ is part the bundle of agent 2.
		\end{itemize}
		Thus, in the allocation returned by $\mathcal{M}(\mathbf{b}')$, agent 2 gets a bundle that contains $\{h_2,h_3,h_4\}$ or  $\{h_1,h_3,h_4\}$ or $\{h_1,h_2\}$. Consider the value of these sets according to $v^*$:
		\[v^*(\{h_2,h_3,h_4\})=2+\varepsilon\,, \qquad 
		  v^*(\{h_1,h_3,h_4\})=2.2+\varepsilon\,, \qquad
		  v^*(\{h_1,h_2\})=2.2 \,.\]
		That is, in every single case the value agent 2 derives under $\mathbf{v}^*=(v^*, v^*)$ when the profile $\mathbf{b}'=(\bm{b}^*_1, \bm{b}_2)$ is played is strictly better than $2$. However, $2$ is the maximum possible value that agent 2 could derive under $\mathbf{v}^*$ when the profile $\mathbf{b}^*$ is played. This contradicts the fact that $\mathbf{b}^*$ is a PNE for $\mathbf{v}^*$, as $\bm{b}_2$ is a profitable deviation for agent 2. 
		
		The remaining corner cases are straightforward to deal with. To begin with, it is not possible to have $\bm{b}_1 =  \bm{b}^*_1$ and $\bm{b}_2 = \bm{b}^*_2$, as $\mathcal{X}\neq \mathcal{A}$ and $\mathcal{Y}\neq \mathcal{A}$.
		
		Next, assume that $\bm{b}_1= \bm{b}^*_1$ and $\bm{b}_2 \neq \bm{b}^*_2$.  This directly contradicts the fact that $\mathbf{b}^*$ is a PNE for $\mathbf{v}^*=(v^*, v^*)$. To see this, starting from $\mathbf{b}^*$ let agent 2 deviate to $\bm{b}_2$. She then gets $A_2$ which contains $h_1, h_2$ and has value for her $v^*(A_2)\ge 2.2 > 2$.
		
		Finally, assume that $\bm{b}_1\neq \bm{b}^*_1$ and $\bm{b}_2 = \bm{b}^*_2$. This directly contradicts the fact that
		$\mathbf{b}$ is a PNE for $\mathbf{v}=(v_1, v_2)$. To see this, starting from $\mathbf{b}$ let agent 1 deviate to $\bm{b}^*_1$. She either gets
		$\{h_1, h_4,\ldots , h_m\}$ of value at least $v_1(h_1)+v_1(h_4)> 2\bmu_1 /3 = v_1(A_1)$ or she gets $\{h_2, h_3\}$ of value $v_1(h_2)+v_1(h_3)> 2\bmu_1 /3 = v_1(A_1)$.
		
		Since every possible case leads to a contradiction, we conclude that every allocation that corresponds to a PNE of a mechanism in the class of interest, guarantees to each agent $i$ value that is strictly better than $2\bmu_i /3$, for $i\in[2]$. 
	\end{proof}

\section{Discussion}

In this work we studied the problem of fairly allocating a set of indivisible goods, to a set of strategic agents. Somewhat surprising---given the existing strong impossibilities for truthful mechanisms---our results are mostly positive. In particular, we showed that there exist mechanisms that have PNE for every instance, and at the same time the allocations that correspond to PNE have strong fairness guarantees with respect to the true valuation functions. 

We believe that there are several interesting directions for future work that follow our research agenda.
For instance, it would be interesting to explore how algorithms that compute \efo allocations for richer valuation function domains (e.g., the Envy-Cycle-Elimination algorithm \cite{LMMS04}) behave in the strategic setting we study in this work. 
Here the question is twofold. On one hand, it is unclear whether such algorithms have Nash equilibria---pure or mixed---for every valuation instance; on the other, it would be important to determine if they maintain their fairness properties at their equilibria or not. 
The existence of PNE or MNE for algorithms that compute approximate \mms allocation is on a similar direction and, as we mentioned in Section \ref{sec:prelims}, in this case we get the ex-post or ex-ante \mms guarantee on the equilibria for free.

Theorems \ref{thm:4EFX=MMS} and \ref{thm:NEFX_APMMS} leave an open question on the \mms guarantee that the equilibria of mechanisms that always have PNE and these are \efx. Although we suspect that the corresponding allocations are not always \mms,  such a result would immediately imply that for every such mechanism which runs in polynomial time, finding a best response of an agent is a computationally hard problem. 
Going beyond the case of two agents here seems to be a highly nontrivial problem as it is not very plausible that the current state of the art for the non-strategic setting could be analysed under incentives.

Finally, although we did not really focus on  complexity questions, it is clear that 
computing  best responses is generally hard. However, when they are not, for instance when the number of agents in Round-Robin is fixed \cite{XiaoL20}, we would like to know if best response dynamics always converge to a PNE  or there  might be cyclic behavior (as it happens with better response dynamics \cite{GW17}).

\section*{Acknowledgments.}
This work was supported by the ERC Advanced 
 Grant 788893 AMDROMA ``Algorithmic and Mechanism Design Research in 
 Online Markets'', the MIUR PRIN project ALGADIMAR ``Algorithms, Games, 
 and Digital Markets'', the FAIR (Future Artificial Intelligence Research) project, funded by the NextGenerationEU program within the PNRR-PE-AI scheme (M4C2, investment 1.3, line on Artificial Intelligence), PNRR MUR project  IR0000013-SoBigData.it, the NWO Veni project No.~VI.Veni.192.153, and the project MIS 5154714 of the National Recovery and Resilience Plan Greece 2.0 funded by the European Union under the NextGenerationEU Program.

%
%
%

\appendix
\section{Dealing With Ties Among the Values}\label{app:strict-vs-ties}

We begin with a lemma that is used twice: once to show that Round-Robin has PNE for any instance (Theorem \ref{thm:bluff}), and then again in the complete proof of Theorem \ref{thm:best_response}. Both proofs are presented in this appendix.

\begin{lemma}\label{lem:ties}
For any fair division instance $\mathcal{I}=(N,M,\mathbf{v})$ and any agent $i\in N$, there exists a valuation function $v'_i$ with the following properties:
\begin{itemize}[labelindent=10pt,leftmargin=*,itemsep=3pt,topsep=3pt]
    \item $v'_i$ induces a strict preference ranking over $M$, which is consistent with the preference ranking induced by $v_i$;
    \item if a bid vector $\bm{b}_i$ is a best response of agent $i$ with respect to $v_i$ to the (fixed) bid vectors $\mathbf{b}_{-i}$ of all other players in Round-Robin, then $\bm{b}_i$ is still a best response to $\mathbf{b}_{-i}$ with respect to $v'_i$;
    \item $v_i(T)\le v'_i(T)\le v_i(T) + \varepsilon / 3$, for any $T\subseteq M$, where $\varepsilon$ is the smallest positive difference between the values of two goods with respect to $v_1$, i.e., $\varepsilon = \min\{|v_i(g)-v_i(h)| \,:\, g,h\in M,\,v_i(g)\neq v_i(h)\}$, or,  if there is no positive difference, $\varepsilon = 1$.
\end{itemize}
\end{lemma}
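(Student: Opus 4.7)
The plan is to define $v'_i$ as a small additive perturbation $v'_i(g) = v_i(g) + r_g$ with carefully chosen positive $r_g$. The constraint $v_i(T) \le v'_i(T) \le v_i(T) + \varepsilon/3$ forces the bundle-total perturbation to be tiny, which makes the construction delicate: a naive uniform perturbation generally fails to preserve $\bm{b}_i$ as a best response when an alternative bid yields a distinct bundle $T'$ with $v_i(T') = v_i(A_i(\bm{b}_i)) =: v_i(T)$, since the perturbation then has to actively break the tie in favor of $T$.

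The structural fact I would exploit is specific to Round-Robin: agent $i$'s resulting bundle always has the same cardinality $k_i$ regardless of her bid, because her slot in each round is fixed. Hence any two achievable bundles $T, T'$ (i.e., bundles arising as $A_i(\bm{b}'_i, \mathbf{b}_{-i})$ for some $\bm{b}'_i$) have the same size, and in particular $|T\setminus T'| \ge 1$ whenever $T \neq T'$. This one-element margin is the lever that lets me absorb tie-breaking within the tight perturbation budget.

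Concretely, I would use a two-scale perturbation biased in favor of $T := A_i(\bm{b}_i)$. Fix a bijection $\sigma \colon M \to \{1,\ldots,m\}$ refining the weak order induced by $v_i$, pick constants $0 < \eta_2 \ll \eta_1$ with $\eta_1 > m^2 \eta_2$, $\eta_1 + m\eta_2 < \varepsilon$, and $m\eta_1 + m^2 \eta_2 \le \varepsilon/3$ (e.g., $\eta_1 = \varepsilon/(6m)$, $\eta_2 = \varepsilon/(12 m^3)$), and set
\[ r_g \,=\, \mathbbm{1}[g \in T] \cdot \eta_1 \,+\, \sigma(g) \cdot \eta_2. \]
The $\eta_2$-term only breaks residual $v_i$-ties strictly, while $\eta_1$ uniformly boosts the value of goods in $T$.

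Conditions (1) and (3) then follow by routine parameter-chasing: $v_i$-gaps of at least $\varepsilon$ dwarf $|r_g - r_h| \le \eta_1 + m\eta_2$, ties among goods are broken strictly by $\sigma$, and $\sum_{g \in T} r_g \le m\eta_1 + m^2\eta_2 \le \varepsilon/3$. The main step, and what I expect to be the crux, is (2). For any achievable $T' \neq T$, noting that only goods in $T \setminus T'$ carry an $\eta_1$-weight,
\[ v'_i(T) - v'_i(T') \,=\, \bigl(v_i(T) - v_i(T')\bigr) + \eta_1 |T \setminus T'| + \eta_2 \!\!\sum_{g \in T \setminus T'}\!\! \sigma(g) \,-\, \eta_2 \!\!\sum_{g \in T' \setminus T}\!\! \sigma(g). \]
The first bracket is nonnegative since $\bm{b}_i$ is a best response under $v_i$, the $\eta_1$-term is at least $\eta_1$ by the cardinality observation, and the combined $\eta_2$-terms have absolute value at most $m^2 \eta_2 < \eta_1$. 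Thus the difference is strictly positive, so $\bm{b}_i$ remains a best response under $v'_i$. The main obstacle, as outlined, is reconciling the very small allowed perturbation with the need to break bundle-level ties; Round-Robin's cardinality invariance is precisely the property that makes this possible.
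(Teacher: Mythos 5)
Your construction and argument match the paper's own proof in all essentials: a two-scale additive perturbation whose larger scale boosts exactly the goods in agent $i$'s bundle under $\bm{b}_i$ and whose smaller scale breaks residual ties, combined with the observation that Round-Robin gives agent $i$ a bundle of fixed cardinality, so any deviating bundle forfeits at least one large-scale boost that dominates the entire small-scale budget. Your single unified inequality for property (2) is in fact a little cleaner than the paper's case split (which handles the subcase $v_i(A'_i)<v_i(A_i)$ via a separate minimum-gap argument), but the approach is the same.
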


\begin{proof}
If $v_i$ already induces a strict preference ranking over the  goods, then clearly $v'_i=v_i$ has all these properties. So, suppose that there are goods with exactly the same  $v_1$ value and let $S = \{g\in M \,:\, \exists h\in M \text{\ such that\ } h\neq g \text{\ and\ } v_i(h)=v_i(g)\}$ be the set of all goods that do not have a unique $v_1$ value. Also, for $\mathbf{b} = (\bm{b}_1, \bm{b}_2, \ldots, \bm{b}_n)$ as in the second bullet of the statement, let $(A_1, A_2,\ldots, A_n)$ be the allocation returned by Round-Robin$(\mathbf{b})$. Then we define $v'_i$ on $M=\{g_1,\ldots,g_m\}$ as follows
	\[ 
		v'_i(g_j)= \left\{
		\begin{array}{lll}
			v_i(g_j) + \displaystyle\frac{j\cdot \varepsilon}{3m^2}\, , &\ \  \text{ if } g_j \in S\cap A_i \vspace{4pt}\\
			v_i(g_j) + \displaystyle\frac{j\cdot \varepsilon}{6m^5}\, , &\ \  \text{ if } g_j \in S\setminus A_i \vspace{4pt}\\
			v_i(g_j)\, , &\ \  \text{ if } g_j \in M\setminus S 
		\end{array} 
		\right. 
		\]

It is straightforward to verify that ties are broken without introducing any new ties and without violating the preference ranking induced by $v_i$. Also, the added quantities sum up to a value smaller than $\varepsilon/3$. So the first and third properties hold for $v'_i$. To see that $\bm{b}_i$ is still a best response to $\mathbf{b}_{-1}$ with respect to $v'_i$, suppose for a contradiction that this is not the case. That is, there is some bid vector $\bm{b}'_i$, such that in the allocation $(A'_1, A'_2,\ldots, A'_n)$  returned by Round-Robin$(\bm{b}'_i,\mathbf{b}_{-i})$ we have $v'_i(A'_i)>v'_i(A_i)$. The latter implies that $A'_i \neq A_i$. Given that $v_i(A'_i) \le v_i(A_i)$, we distinguish two cases. 

First, suppose $v_i(A'_i) < v_i(A_i)$. By the definition of $\varepsilon$ we have $v_i(A'_i) \le v_i(A_i) - \varepsilon$. This, however, implies $v'_i(A'_i) \le v_i(A'_i) + \varepsilon / 3 \le v_i(A_i) - 2\varepsilon / 3 \le v'_i(A_i) - 2\varepsilon / 3$, which contradicts  $v'_i(A'_i)>v'_i(A_i)$.

So, it must be the case that $v_i(A'_i) = v_i(A_i)$. Then the difference $v'_i(A'_i)-v'_i(A_i)$ must be due to the small terms we added to the values of some goods. Note that if $A_i\not\subseteq A'_i$, then the  value added to $v_i(A'_i)$ is at most the total value added to $S\setminus A_i$ plus \textit{almost} the total value added to $S\cap A_i$, as we should exclude at least $\frac{\varepsilon}{3m^2}$. But as $\sum_{j=1}^m \frac{j\cdot \varepsilon}{6m^5} < \frac{\varepsilon}{3m^2}$, we have that this difference $v'_i(A'_i)-v'_i(A_i)$ must be negative, and this again contradicts  $v'_i(A'_i)>v'_i(A_i)$. So it must be the case where $A_i \subseteq A'_i$. However, independently of the bid profile, we know that agent $i$ will receive \textit{exactly} the same number of goods in the two executions of Round-Robin (i.e., either $\lceil m/n \rceil$  in both or $\lfloor m/n \rfloor$ in both). But then $A_i = A'_i$  again contradicting $v'_i(A'_i)>v'_i(A_i)$.

We conclude that $\bm{b}_i$ is still a best response to $\mathbf{b}_{-1}$ with respect to $v'_i$. 
\end{proof}

Recall that \citet{GW17} showed is that as long as all the valuation functions in an instance induce strict preference rankings and all the values are positive, then there is a way to construct PNE. In the terminology of \citep{GW17} these are all the bid profiles that are consistent with the so-called \emph{bluff profile}  defined therein. Here we do not need to define what the bluff profile is explicitly. We are going to use the following result which essentially is a corollary of \citep{GW17}.

\begin{theorem}[Follows from \citep{GW17}]\label{thm:bluff}
   For any instance $\mathcal{I}=(N,M,\mathbf{v})$, where all goods have positive values for all agents and all the valuation functions induce strict preference rankings, Round-Robin has at least one PNE.
\end{theorem}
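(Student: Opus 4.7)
The plan is to lift the equilibrium-existence result of \cite{GW17} to our mechanism-level formulation. GW17 consider a version of Round-Robin in which each agent directly reports a (strict) preference ranking, and they prove that an inductively constructed \emph{bluff profile} of rankings is a PNE whenever the true preference rankings are strict. Since in our setting a bid vector $\bm{b}_i$ induces a preference ranking (with lexicographic tie-breaking), and conversely any strict ranking is realized by a strictly positive bid vector, the correspondence between the two settings is direct.

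First, I would recall the bluff-profile construction. Given strict true preference rankings $\succ_1, \ldots, \succ_n$, one defines reported rankings $\succ^{\star}_i$ by a backward-inductive procedure on the turns of Round-Robin: at her last turn agent $i$ reports her true favourite among the goods predicted to still be available, and at each earlier turn she reports the good that maximizes her value among those predicted to still be available given the reports already fixed in later rounds. Under strict true preferences all ties in this procedure are broken by the strict true ordering, so each $\succ^{\star}_i$ is uniquely determined. I would then realize the bluff profile as a bid profile $\mathbf{b}^{\star}$ by setting, for each $i$, $\bm{b}^{\star}_i(g) = m - \mathrm{rk}_i(g) + 1$, where $\mathrm{rk}_i(g)$ is the position of $g$ in $\succ^{\star}_i$. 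These bids are strictly positive and induce exactly the rankings $\succ^{\star}_i$, so the allocation produced by Round-Robin$(\mathbf{b}^{\star})$ coincides with the one analysed by GW17.

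The final step is to translate the PNE property. Any unilateral deviation of agent $i$ to a bid vector $\bm{b}'_i$ induces some strict preference ranking (again thanks to lexicographic tie-breaking), and only this induced ranking affects the mechanism's output. Hence profitable deviations in our setting are in bijection with profitable deviations in GW17's ranking-report setting, and the PNE property of the bluff profile carries over verbatim. The main obstacle will be verifying this correspondence carefully, in particular checking that the mechanism's lexicographic tie-breaking does not open additional deviations unavailable in the ranking formulation, and that the assumption of positive true values is strong enough so that each agent always has a strictly positive bid vector realizing any desired ranking. Both points are handled by the assumptions: positivity ensures that truthful reporting is itself a valid strictly positive bid, and strict true preferences combined with strictly decreasing bluff bids mean no ties ever arise along the execution of Round-Robin$(\mathbf{b}^{\star})$, so the tie-breaking rule is immaterial on-path and harmless off-path.
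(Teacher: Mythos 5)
Your argument is essentially the paper's: Theorem \ref{thm:bluff} is stated there without proof as a direct consequence of \citet{GW17}, and what you add is an explicit (and correct) translation between ranking reports and bid vectors, which the paper leaves implicit. One caveat: the backward-inductive construction you describe is not the bluff profile of \citet{GW17} (which has every agent report the single ranking given by the pick order of the truthful execution, hence the linear-time computability), but since you ultimately defer to their theorem for the equilibrium property and only existence of some PNE is needed, this inaccuracy does not affect the validity of your argument.
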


Using Theorem \ref{thm:bluff} and Lemma \ref{lem:ties}, we will show that Round-Robin has PNE in every single instance with additive valuation functions.

\begin{theorem}\label{thm:general_bluff}
   For any instance $\mathcal{I}=(N,M,\mathbf{v})$ Round-Robin has at least one PNE.
\end{theorem}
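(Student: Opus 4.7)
The plan is to reduce Theorem \ref{thm:general_bluff} to Theorem \ref{thm:bluff} via a careful perturbation of the valuation functions. Given an arbitrary instance $\mathcal{I} = (N, M, \mathbf{v})$, I would construct a perturbed instance $\mathcal{I}' = (N, M, \mathbf{v}')$ that meets the hypotheses of Theorem \ref{thm:bluff} (all values strictly positive and all induced preference rankings strict), apply that theorem to obtain a PNE $\mathbf{b}$ for $\mathcal{I}'$, and then argue that the very same profile $\mathbf{b}$ remains a PNE for the original instance $\mathcal{I}$.

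The construction of $\mathbf{v}'$ is the heart of the argument. For each agent $i$ and good $g_j$ I would set $v'_i(g_j) = v_i(g_j) + \delta_{ij}$, where the perturbations $\delta_{ij}$ are strictly positive and ``small''. The scale controlling smallness is
\[
\Delta^*_i = \min\{\,v_i(A) - v_i(B) \,:\, A,B \subseteq M,\ |A| = |B|,\ v_i(A) > v_i(B)\,\}\,,
\]
which is a well-defined positive number (or, if the set is empty, I would simply set $\Delta^*_i = 1$). Choosing each $\delta_{ij}$ inside the interval $(0, \Delta^*_i / (2m))$ and in sufficiently general position so that, for every fixed $i$, the numbers $v'_i(g_j)$ are pairwise distinct, guarantees all the desired properties of $\mathbf{v}'$: each $v'_i(g) > 0$, each $v'_i$ induces a strict preference ranking, and in addition the following key preservation property holds: for any two bundles $A, B \subseteq M$ with $|A| = |B|$, the inequality $v_i(A) > v_i(B)$ implies $v'_i(A) > v'_i(B)$. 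Indeed, the signed perturbation gap $\sum_{g \in A}\delta_{ig} - \sum_{g \in B}\delta_{ig}$ has absolute value strictly below $\Delta^*_i/2$ (with at most $m$ summands per side, each less than $\Delta^*_i/(2m)$), whereas $v_i(A) - v_i(B) \ge \Delta^*_i$ by definition of $\Delta^*_i$.

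Given this preservation property, the final step is short. Theorem \ref{thm:bluff} applied to $\mathcal{I}'$ yields a PNE $\mathbf{b} = (\bm{b}_1,\dots,\bm{b}_n)$. To verify that $\mathbf{b}$ is also a PNE for $\mathcal{I}$, I would fix $i$ and consider any alternative bid $\bm{b}'_i$; denote by $A$ and $A'$ the bundles assigned to agent $i$ by Round-Robin on inputs $\mathbf{b}$ and $(\bm{b}'_i, \mathbf{b}_{-i})$ respectively. Since in Round-Robin the cardinality of each agent's bundle is determined only by $n$, $m$, and the agent's index in the round order, and not by any of the reported bids, one has $|A| = |A'|$. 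If $v_i(A') > v_i(A)$, the preservation property yields $v'_i(A') > v'_i(A)$, contradicting that $\bm{b}_i$ is a best response to $\mathbf{b}_{-i}$ under $v'_i$. Hence no profitable deviation exists under $v_i$, and $\mathbf{b}$ is a PNE for $\mathcal{I}$.

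The only genuine subtlety --- the ``hard part'' --- is choosing the correct scale for the perturbation. Naively one might try to bound perturbation differences in terms of the smallest positive gap among individual goods values (as in Lemma \ref{lem:ties}), but sums of values can have arbitrarily smaller gaps than individual values. The right invariant to control is the minimum gap $\Delta^*_i$ over pairs of \emph{equal-cardinality} bundles, and the reason this suffices is precisely the Round-Robin-specific fact that an agent's bundle size is bid-independent.
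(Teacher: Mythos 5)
Your proposal is correct, and its overall architecture coincides with the paper's: perturb the valuations into an instance satisfying the hypotheses of Theorem \ref{thm:bluff} (strictly positive values, strict rankings), extract a PNE there, and transfer it back. Where you genuinely depart from the paper is in the quantitative lemma underpinning the transfer. The paper reuses Lemma \ref{lem:ties}, whose scale is $\varepsilon_i$, the minimum positive gap between the values of two \emph{individual goods}, and in the transfer step asserts that a strictly profitable deviation satisfies $v_k(A_k)\le v_k(B_k)-\varepsilon_k$ ``by the definition of $\varepsilon_k$''. As you observe, that inference is not immediate: the minimum positive gap between two equal-cardinality \emph{bundles} can be strictly smaller than the minimum gap between two goods (e.g., values $1, 1.3, 2, 2.4$ give single-good gaps of at least $0.3$ but $v(\{1,2.4\})-v(\{1.3,2\})=0.1$), so the paper's step needs an extra argument or a recalibrated $\varepsilon$. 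Your choice of $\Delta^*_i$, the minimum positive gap over pairs of equal-cardinality bundles, together with the bid-independence of bundle cardinalities in Round-Robin, closes exactly this hole and makes the transfer airtight. What you lose relative to the paper's Lemma \ref{lem:ties} is the preservation of best responses and consistency with the original ranking, but neither property is needed for this theorem (the paper itself notes its second bullet is irrelevant here). In short: same reduction, different and more carefully scaled perturbation; your version is self-contained and correct.
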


\begin{proof}
   For each one of $v_1, v_2, \ldots, v_n$ we apply Lemma \ref{lem:ties} to get $\mathbf{v}' = (v'_1, v'_2, \ldots, v'_n)$. When we apply it for $v_i$, let $\varepsilon_i$ be the corresponding constant of the third bullet of the lemma; the second bullet of the lemma is irrelevant here.
   Clearly, for all $i\in N$, $v'_i$ induces a strict preference ranking, so for Theorem  \ref{thm:bluff} to apply we only need that all values are positive. This may not always be the case. If $v_i$ assigned value $0$ to multiple goods, then all the $0$ are taken care of during the definition of $v'_i$. If, however there was a single good $g$ such that $v_i(g)=0$, then $v'_i(g)=0$ as well. We can resolve this by setting $v'_i(g)=\varepsilon_i / 3$. This does not affect the induced preference ranking of $v'_i$, while the property of the third bullet of Lemma \ref{lem:ties} becomes $v_i(T)\le v'_i(T)\le v_i(T) + 2 \varepsilon / 3$ instead.
   
   Now we may apply Theorem \ref{thm:bluff}. Thus, for the instance $\mathcal{I}'=(N,M,\mathbf{v}')$ Round-Robin has at least one PNE; suppose the profile $\mathbf{d} = (\bm{d}_1, \bm{d}_2, \ldots, \bm{d}_n)$ is such a PNE and let $(A_1, A_2, \ldots, A_m)$ be the allocation returned by Round-Robin$(\mathbf{d})$. We claim that $\mathbf{d}$ is also an equilibrium of the original instance $\mathcal{I}$. 
   
   Suppose it is not, for a contradiction. This means that in $\mathcal{I}$ there is an agent, say agent $k$, who can deviate to a bid profile $\bm{b}_k$, so that the allocation returned by Round-Robin$(\bm{b}_k, \mathbf{d}_{-k})$ is $(B_1, B_2, \ldots, B_m)$ with $v_k(B_k) > v_k(A_k)$.
   By the definition of $\varepsilon_k$, we have $v_k(A_k) \le v_k(B_k) - \varepsilon_k$. 
   This implies $v'_k(A_k) \le v_k(A_k) + 2\varepsilon_k / 3 \le v_k(B_k) - \varepsilon_k / 3 \le v'_k(B_k) - \varepsilon / 3 < v'_k(B_k)$, which contradicts  the fact that $\mathbf{d}$ is a PNE in $\mathcal{I}'$. 
   Therefore, $\mathbf{d}$ is a PNE in $\mathcal{I}$ as well. 
\end{proof}

Finally, we can present a complete proof of Theorem \ref{thm:best_response}, without any assumptions on the valuation function of agent 1.

\paragraph*{Complete Proof of Theorem \ref{thm:best_response}.}
Consider an arbitrary instance $\mathcal{I}=(N,M,\mathbf{v})$ and assume that the input of Round-Robin is $\,\mathbf{b} = (\bm{b}_1, \allowbreak \bm{b}_2, \ldots, \bm{b}_n)$, where $\bm{b}_1$ is a best response of agent 1 to $\mathbf{b}_{-i} = (\bm{b}_2, \ldots, \bm{b}_n)$ according to her valuation function $v_1$. Let $(A_1, \dots,A_n)$ be the output of Round-Robin$(\mathbf{b})$. 
In order to apply Lemma \ref{lem:main_lemma}, we need $v_1$ to induce a strict preference ranking over the goods. 
	
Instead, we are going to use Lemma \ref{lem:ties} first to get a valuation function $v'_1$ having all properties stated therein. Note that by the second bullet of Lemma \ref{lem:ties},  $\bm{b}_1$ is still a best response of agent 1 to $\mathbf{b}_{-i}$ in the instance $\mathcal{I}'=(N,M,(v'_1, \mathbf{v}_{-1}))$. So, we  apply Lemma \ref{lem:main_lemma} here.
That is, we  consider the hypothetical scenario implied by the lemma: keeping agents 2 through $n$ fixed, suppose that the valuation function of agent 1 is the function $v_1^*$ given by the lemma, and her bid $\bm{b}_1^*$ is the truthful bid for $v_1^*$. The first part of Lemma \ref{lem:main_lemma} guarantees that the output of Round-Robin$(\bm{b}_1^*, \mathbf{b}_{-i})$ remains $(A_1,\dots,A_n)$.

According to Lemma \ref{lem:ef1_of_RR}, no matter what others bid, if agent 1 (the agent with the highest priority here) reports her true values (i.e., according to $v^*_1$) to Round-Robin, the resulting allocation is \ef from her perspective.
	In our hypothetical scenario this translates into having $v_1^*(A_1)\ge v_1^*(A_i)$ for all $i \in N$. 
	Then the second and third parts of Lemma \ref{lem:main_lemma} imply that $v'_1(A_1)\ge v'_1(A_i)$ for all $i \in N$.

	Suppose for a contradiction that there is a $j\in N$, such that $v_1(A_1)< v_1(A_j)$. By the definition of $\varepsilon$ in the statement of Lemma \ref{lem:ties} we have $v_1(A_1) \le v_1(A_j) - \varepsilon$. This implies \[v'_1(A_1) \le v_1(A_1) + \varepsilon / 3 \le v_1(A_j) - 2\varepsilon / 3 \le v'_1(A_j) - 2\varepsilon / 3 < v'_1(A_j)\,,\] which contradicts  $v'_1(A_1)\ge v'_1(A_j)$ that we showed above. We conclude that agent 1 does not envy (with respect to $v_1$ any bundle in the original instance.
\hfill\qed

\bibliographystyle{plainnat}
\bibliography{journalRef.bib}

\begin{thebibliography}{54}
\providecommand{\natexlab}[1]{#1}
\providecommand{\url}[1]{\texttt{#1}}
\expandafter\ifx\csname urlstyle\endcsname\relax
  \providecommand{\doi}[1]{doi: #1}\else
  \providecommand{\doi}{doi: \begingroup \urlstyle{rm}\Url}\fi

\bibitem[Akrami and Garg(2023)]{AkramiGarg23}
Hannaneh Akrami and Jugal Garg.
\newblock Breaking the 3/4 barrier for approximate maximin share.
\newblock \emph{CoRR}, abs/2307.07304, 2023.

\bibitem[Amanatidis et~al.(2016)Amanatidis, Birmpas, and Markakis]{ABM16}
Georgios Amanatidis, Georgios Birmpas, and Evangelos Markakis.
\newblock On truthful mechanisms for maximin share allocations.
\newblock In \emph{Proceedings of the 25th International Joint Conference on
  Artificial Intelligence, {IJCAI '16}}, pages 31--37. {IJCAI/AAAI} Press,
  2016.

\bibitem[Amanatidis et~al.(2017{\natexlab{a}})Amanatidis, Birmpas,
  Christodoulou, and Markakis]{ABCM17}
Georgios Amanatidis, Georgios Birmpas, George Christodoulou, and Evangelos
  Markakis.
\newblock Truthful allocation mechanisms without payments: Characterization and
  implications on fairness.
\newblock In \emph{Proceedings of the 2017 {ACM} Conference on Economics and
  Computation, {EC' 17}}, pages 545--562. {ACM}, 2017{\natexlab{a}}.

\bibitem[Amanatidis et~al.(2017{\natexlab{b}})Amanatidis, Markakis, Nikzad, and
  Saberi]{AMNS17}
Georgios Amanatidis, Evangelos Markakis, Afshin Nikzad, and Amin Saberi.
\newblock Approximation algorithms for computing maximin share allocations.
\newblock \emph{{ACM} Trans. Algorithms}, 13\penalty0 (4):\penalty0
  52:1--52:28, 2017{\natexlab{b}}.

\bibitem[Amanatidis et~al.(2018)Amanatidis, Birmpas, and Markakis]{ABM18}
Georgios Amanatidis, Georgios Birmpas, and Evangelos Markakis.
\newblock Comparing approximate relaxations of envy-freeness.
\newblock In \emph{Proceedings of the 27th International Joint Conference on
  Artificial Intelligence, {IJCAI '18}}, pages 42--48. ijcai.org, 2018.

\bibitem[Amanatidis et~al.(2020)Amanatidis, Markakis, and Ntokos]{ANM2019}
Georgios Amanatidis, Evangelos Markakis, and Apostolos Ntokos.
\newblock Multiple birds with one stone: Beating 1/2 for {EFX} and {GMMS} via
  envy cycle elimination.
\newblock \emph{Theor. Comput. Sci.}, 841:\penalty0 94--109, 2020.

\bibitem[Amanatidis et~al.(2021{\natexlab{a}})Amanatidis, Birmpas,
  Filos{-}Ratsikas, Hollender, and Voudouris]{ABFHV21}
Georgios Amanatidis, Georgios Birmpas, Aris Filos{-}Ratsikas, Alexandros
  Hollender, and Alexandros~A. Voudouris.
\newblock Maximum {N}ash welfare and other stories about {EFX}.
\newblock \emph{Theor. Comput. Sci.}, 863:\penalty0 69--85, 2021{\natexlab{a}}.

\bibitem[Amanatidis et~al.(2021{\natexlab{b}})Amanatidis, Birmpas, Fusco,
  Lazos, Leonardi, and Reiffenh{\"{a}}user]{AmanatidisBFLLR21}
Georgios Amanatidis, Georgios Birmpas, Federico Fusco, Philip Lazos, Stefano
  Leonardi, and Rebecca Reiffenh{\"{a}}user.
\newblock Allocating indivisible goods to strategic agents: Pure {N}ash
  equilibria and fairness.
\newblock In \emph{Proceedings of the 17th International Conference on Web and
  Internet Economics, {WINE} 2021}, volume 13112 of \emph{LNCS}, pages
  149--166. Springer, 2021{\natexlab{b}}.

\bibitem[Amanatidis et~al.(2023{\natexlab{a}})Amanatidis, Aziz, Birmpas,
  Filos{-}Ratsikas, Li, Moulin, Voudouris, and Wu]{AmanatidisABFLMVW23}
Georgios Amanatidis, Haris Aziz, Georgios Birmpas, Aris Filos{-}Ratsikas,
  Bo~Li, Herv{\'{e}} Moulin, Alexandros~A. Voudouris, and Xiaowei Wu.
\newblock Fair division of indivisible goods: Recent progress and open
  questions.
\newblock \emph{Artif. Intell.}, 322:\penalty0 103965, 2023{\natexlab{a}}.

\bibitem[Amanatidis et~al.(2023{\natexlab{b}})Amanatidis, Birmpas, Lazos,
  Leonardi, and Reiffenh{\"{a}}user]{AmanatidisBL0R23}
Georgios Amanatidis, Georgios Birmpas, Philip Lazos, Stefano Leonardi, and
  Rebecca Reiffenh{\"{a}}user.
\newblock Round-robin beyond additive agents: Existence and fairness of
  approximate equilibria.
\newblock In \emph{Proceedings of the 24th {ACM} Conference on Economics and
  Computation, {EC} 2023}, pages 67--87. {ACM}, 2023{\natexlab{b}}.

\bibitem[Aziz et~al.(2017{\natexlab{a}})Aziz, Bouveret, Lang, and
  Mackenzie]{AzizBLM17}
Haris Aziz, Sylvain Bouveret, J{\'{e}}r{\^{o}}me Lang, and Simon Mackenzie.
\newblock Complexity of manipulating sequential allocation.
\newblock In \emph{Proceedings of the 31st {AAAI} Conference on Artificial
  Intelligence, {AAAI '17}}, pages 328--334. {AAAI} Press, 2017{\natexlab{a}}.

\bibitem[Aziz et~al.(2017{\natexlab{b}})Aziz, Goldberg, and Walsh]{GW17}
Haris Aziz, Paul Goldberg, and Toby Walsh.
\newblock Equilibria in sequential allocation.
\newblock In \emph{Proceedings of the 5th International Conference on
  Algorithmic Decision Theory, {ADT '17}}, volume 10576 of \emph{LNCS}, pages
  270--283. Springer, 2017{\natexlab{b}}.

\bibitem[Babaioff and Feige(2022)]{BabaioffF22}
Moshe Babaioff and Uriel Feige.
\newblock Fair shares: Feasibility, domination and incentives.
\newblock In \emph{{EC} '22: The 23rd {ACM} Conference on Economics and
  Computation, Boulder, CO, USA, July 11 - 15, 2022}, page 435. {ACM}, 2022.

\bibitem[Babaioff et~al.(2021)Babaioff, Ezra, and Feige]{BabaioffEF21}
Moshe Babaioff, Tomer Ezra, and Uriel Feige.
\newblock Fair and truthful mechanisms for dichotomous valuations.
\newblock In \emph{Proceedings of the 35th {AAAI} Conference on Artificial
  Intelligence, {AAAI} 2021}, pages 5119--5126. {AAAI} Press, 2021.

\bibitem[Barman and Krishnamurthy(2020)]{BarmanK20}
Siddharth Barman and Sanath~Kumar Krishnamurthy.
\newblock Approximation algorithms for maximin fair division.
\newblock \emph{{ACM} Trans. Economics and Comput.}, 8\penalty0 (1):\penalty0
  5:1--5:28, 2020.

\bibitem[Barman et~al.(2018)Barman, Biswas, Murthy, and Narahari]{BBMN18}
Siddharth Barman, Arpita Biswas, Sanath Kumar~Krishna Murthy, and Yadati
  Narahari.
\newblock Groupwise maximin fair allocation of indivisible goods.
\newblock In \emph{Proceedings of the 32nd AAAI Conference on Artificial
  Intelligence, {AAAI '18}}, pages 917--924. {AAAI} Press, 2018.

\bibitem[Bei et~al.(2017)Bei, Chen, Huzhang, Tao, and Wu]{BeiCHTW17}
Xiaohui Bei, Ning Chen, Guangda Huzhang, Biaoshuai Tao, and Jiajun Wu.
\newblock Cake cutting: Envy and truth.
\newblock In \emph{Proceedings of the 26th International Joint Conference on
  Artificial Intelligence, {IJCAI '17}}, pages 3625--3631. ijcai.org, 2017.

\bibitem[Berger et~al.(2022)Berger, Cohen, Feldman, and Fiat]{BergerCFF22}
Ben Berger, Avi Cohen, Michal Feldman, and Amos Fiat.
\newblock Almost full {EFX} exists for four agents.
\newblock In \emph{Proceedings of the 36th {AAAI} Conference on Artificial
  Intelligence, {AAAI} '22}, pages 4826--4833. {AAAI} Press, 2022.

\bibitem[Bouveret and Lang(2014)]{BouveretL14}
Sylvain Bouveret and J{\'{e}}r{\^{o}}me Lang.
\newblock Manipulating picking sequences.
\newblock In \emph{Proceedings of the 21st European Conference on Artificial
  Intelligence - {ECAI} 2014}, volume 263, pages 141--146. {IOS} Press, 2014.

\bibitem[Bouveret et~al.(2016)Bouveret, Chevaleyre, and Maudet]{BCM16-survey}
Sylvain Bouveret, Yann Chevaleyre, and Nicolas Maudet.
\newblock Fair allocation of indivisible goods.
\newblock In \emph{Handbook of Computational Social Choice}, pages 284--310.
  Cambridge University Press, 2016.

\bibitem[Br{\^{a}}nzei et~al.(2016)Br{\^{a}}nzei, Caragiannis, Kurokawa, and
  Procaccia]{BranzeiCKP16}
Simina Br{\^{a}}nzei, Ioannis Caragiannis, David Kurokawa, and Ariel~D.
  Procaccia.
\newblock An algorithmic framework for strategic fair division.
\newblock In \emph{Proceedings of the 30th {AAAI} Conference on Artificial
  Intelligence {AAAI '16}}, pages 418--424. {AAAI} Press, 2016.

\bibitem[Br{\^{a}}nzei et~al.(2022)Br{\^{a}}nzei, Gkatzelis, and
  Mehta]{BranzeiGM22}
Simina Br{\^{a}}nzei, Vasilis Gkatzelis, and Ruta Mehta.
\newblock Nash social welfare approximation for strategic agents.
\newblock \emph{Oper. Res.}, 70\penalty0 (1):\penalty0 402--415, 2022.

\bibitem[Budish(2011)]{Budish11}
Eric Budish.
\newblock The combinatorial assignment problem: Approximate competitive
  equilibrium from equal incomes.
\newblock \emph{Journal of Political Economy}, 119\penalty0 (6):\penalty0
  1061--1103, 2011.

\bibitem[Caragiannis et~al.(2009)Caragiannis, Kaklamanis, Kanellopoulos, and
  Kyropoulou]{CKKK09}
I.~Caragiannis, C.~Kaklamanis, P.~Kanellopoulos, and M.~Kyropoulou.
\newblock On low-envy truthful allocations.
\newblock In \emph{First International Conference on Algorithmic Decision
  Theory, {ADT '09}}, pages 111--119, 2009.

\bibitem[Caragiannis et~al.(2019{\natexlab{a}})Caragiannis, Gravin, and
  Huang]{CaragiannisGH19}
Ioannis Caragiannis, Nick Gravin, and Xin Huang.
\newblock Envy-freeness up to any item with high {N}ash welfare: The virtue of
  donating items.
\newblock In \emph{Proceedings of the 2019 {ACM} Conference on Economics and
  Computation, {EC} '19}, pages 527--545. {ACM}, 2019{\natexlab{a}}.

\bibitem[Caragiannis et~al.(2019{\natexlab{b}})Caragiannis, Kurokawa, Moulin,
  Procaccia, Shah, and Wang]{CaragiannisKMPS19}
Ioannis Caragiannis, David Kurokawa, Herv{\'{e}} Moulin, Ariel~D. Procaccia,
  Nisarg Shah, and Junxing Wang.
\newblock The unreasonable fairness of maximum {N}ash welfare.
\newblock \emph{{ACM} Trans. Economics and Comput.}, 7\penalty0 (3):\penalty0
  12:1--12:32, 2019{\natexlab{b}}.

\bibitem[Chaudhury et~al.(2020)Chaudhury, Garg, and Mehlhorn]{ChaGM20}
Bhaskar~Ray Chaudhury, Jugal Garg, and Kurt Mehlhorn.
\newblock {EFX} exists for three agents.
\newblock In \emph{Proceedings of the 2020 {ACM} Conference on Economics and
  Computation, {EC '20}}, pages 1--19. {ACM}, 2020.

\bibitem[Chaudhury et~al.(2021{\natexlab{a}})Chaudhury, Garg, Mehlhorn, Mehta,
  and Misra]{ChaudhuryGMMM21}
Bhaskar~Ray Chaudhury, Jugal Garg, Kurt Mehlhorn, Ruta Mehta, and Pranabendu
  Misra.
\newblock Improving {EFX} guarantees through rainbow cycle number.
\newblock In \emph{Proceedings of the 22nd {ACM} Conference on Economics and
  Computation, {EC '21}}, pages 310--311. {ACM}, 2021{\natexlab{a}}.

\bibitem[Chaudhury et~al.(2021{\natexlab{b}})Chaudhury, Kavitha, Mehlhorn, and
  Sgouritsa]{ChaudhuryKMS21}
Bhaskar~Ray Chaudhury, Telikepalli Kavitha, Kurt Mehlhorn, and Alkmini
  Sgouritsa.
\newblock A little charity guarantees almost envy-freeness.
\newblock \emph{{SIAM} J. Comput.}, 50\penalty0 (4):\penalty0 1336--1358,
  2021{\natexlab{b}}.

\bibitem[Chen et~al.(2013)Chen, Lai, Parkes, and Procaccia]{ChenLPP13}
Yiling Chen, John~K. Lai, David~C. Parkes, and Ariel~D. Procaccia.
\newblock Truth, justice, and cake cutting.
\newblock \emph{Games Econ. Behav.}, 77\penalty0 (1):\penalty0 284--297, 2013.

\bibitem[Cole et~al.(2013)Cole, Gkatzelis, and Goel]{ColeGG13}
Richard Cole, Vasilis Gkatzelis, and Gagan Goel.
\newblock Mechanism design for fair division: allocating divisible items
  without payments.
\newblock In \emph{Proceedings of the 14th {ACM} Conference on Electronic
  Commerce, {EC '13}}, pages 251--268. {ACM}, 2013.

\bibitem[Ehlers and Klaus(2003)]{EhlersK03}
Lars Ehlers and Bettina Klaus.
\newblock Coalitional strategy-proof and resource-monotonic solutions for
  multiple assignment problems.
\newblock \emph{Soc. Choice Welf.}, 21\penalty0 (2):\penalty0 265--280, 2003.

\bibitem[Foley(1967)]{Foley67}
Duncan~K. Foley.
\newblock Resource allocation and the public sector.
\newblock \emph{Yale Economics Essays}, 7:\penalty0 45--98, 1967.

\bibitem[Gamow and Stern(1958)]{GS58}
George Gamow and Marvin Stern.
\newblock \emph{Puzzle-Math}.
\newblock Viking press, 1958.

\bibitem[Garg and Murhekar(2023)]{GargM23}
Jugal Garg and Aniket Murhekar.
\newblock Computing fair and efficient allocations with few utility values.
\newblock \emph{Theor. Comput. Sci.}, 962:\penalty0 113932, 2023.

\bibitem[Garg and Taki(2021)]{GargT21}
Jugal Garg and Setareh Taki.
\newblock An improved approximation algorithm for maximin shares.
\newblock \emph{Artif. Intell.}, 300:\penalty0 103547, 2021.

\bibitem[Garg et~al.(2019)Garg, McGlaughlin, and Taki]{GargMT19}
Jugal Garg, Peter McGlaughlin, and Setareh Taki.
\newblock Approximating maximin share allocations.
\newblock In \emph{Proceedings of the 2nd Symposium on Simplicity in
  Algorithms, {SOSA@SODA} 2019}, volume~69 of \emph{{OASICS}}, pages
  20:1--20:11. Schloss Dagstuhl - Leibniz-Zentrum f{\"{u}}r Informatik, 2019.

\bibitem[Ghodsi et~al.(2021)Ghodsi, Hajiaghayi, Seddighin, Seddighin, and
  Yami]{GHSSY21}
Mohammad Ghodsi, Mohammad~Taghi Hajiaghayi, Masoud Seddighin, Saeed Seddighin,
  and Hadi Yami.
\newblock Fair allocation of indivisible goods: Improvements and
  generalizations.
\newblock \emph{Math. Oper. Res.}, 46\penalty0 (3):\penalty0 1038--1053, 2021.

\bibitem[Gourv{\`{e}}s et~al.(2014)Gourv{\`{e}}s, Monnot, and
  Tlilane]{GourvesMT14}
Laurent Gourv{\`{e}}s, J{\'{e}}r{\^{o}}me Monnot, and Lydia Tlilane.
\newblock Near fairness in matroids.
\newblock In \emph{Proceedings of the 21st European Conference on Artificial
  Intelligence - {ECAI} 2014}, volume 263, pages 393--398. {IOS} Press, 2014.

\bibitem[Halpern et~al.(2020)Halpern, Procaccia, Psomas, and Shah]{HPPS}
Daniel Halpern, Ariel~D. Procaccia, Alexandros Psomas, and Nisarg Shah.
\newblock Fair division with binary valuations: One rule to rule them all.
\newblock In \emph{Proceedings of the 16th International Conference on Web and
  Internet Economics, {WINE} 2020}, volume 12495 of \emph{LNCS}, pages
  370--383. Springer, 2020.

\bibitem[Klaus and Miyagawa(2002)]{KlausM02}
Bettina Klaus and Eiichi Miyagawa.
\newblock Strategy-proofness, solidarity, and consistency for multiple
  assignment problems.
\newblock \emph{Int. J. Game Theory}, 30\penalty0 (3):\penalty0 421--435, 2002.

\bibitem[Kohler and Chandrasekaran(1971)]{KohlerC71}
David~A. Kohler and R.~Chandrasekaran.
\newblock A class of sequential games.
\newblock \emph{Oper. Res.}, 19\penalty0 (2):\penalty0 270--277, 1971.

\bibitem[Kurokawa et~al.(2018)Kurokawa, Procaccia, and Wang]{KurokawaPW18}
David Kurokawa, Ariel~D. Procaccia, and Junxing Wang.
\newblock Fair enough: Guaranteeing approximate maximin shares.
\newblock \emph{J. {ACM}}, 65\penalty0 (2):\penalty0 8:1--8:27, 2018.

\bibitem[Lipton et~al.(2004)Lipton, Markakis, Mossel, and Saberi]{LMMS04}
Richard~J. Lipton, Evangelos Markakis, Elchanan Mossel, and Amin Saberi.
\newblock On approximately fair allocations of indivisible goods.
\newblock In \emph{Proceedings of the 5th {ACM} Conference on Electronic
  Commerce, EC '04}, pages 125--131. {ACM}, 2004.

\bibitem[Mahara(2023)]{Mahara23}
Ryoga Mahara.
\newblock Existence of {EFX} for two additive valuations.
\newblock \emph{Discret. Appl. Math.}, 340:\penalty0 115--122, 2023.

\bibitem[Markakis(2017)]{Markakis17-survey}
Evangelos Markakis.
\newblock Approximation algorithms and hardness results for fair division with
  indivisible goods.
\newblock In \emph{Trends in Computational Social Choice}, chapter~12. AI
  Access, 2017.

\bibitem[Papai(2000)]{Papai00}
S.~Papai.
\newblock Strategyproof multiple assignment using quotas.
\newblock \emph{Review of Economic Design}, 5\penalty0 (1):\penalty0 91--105,
  2000.

\bibitem[Papai(2001)]{Papai01}
S.~Papai.
\newblock Strategyproof and nonbossy multiple assignments.
\newblock \emph{Journal of Public Economic Theory}, 3\penalty0 (3):\penalty0
  257--271, 2001.

\bibitem[Plaut and Roughgarden(2020)]{PR18}
Benjamin Plaut and Tim Roughgarden.
\newblock Almost envy-freeness with general valuations.
\newblock \emph{{SIAM} J. Discret. Math.}, 34\penalty0 (2):\penalty0
  1039--1068, 2020.

\bibitem[Procaccia(2016)]{Procaccia16-survey}
Ariel~D. Procaccia.
\newblock Cake cutting algorithms.
\newblock In \emph{Handbook of Computational Social Choice}, pages 311--330.
  Cambridge University Press, 2016.

\bibitem[Steinhaus(1949)]{Steinhaus49}
Hugo Steinhaus.
\newblock Sur la division pragmatique.
\newblock \emph{Econometrica}, 17 (Supplement):\penalty0 315--319, 1949.

\bibitem[Varian(1974)]{Varian74}
Hal~R. Varian.
\newblock Equity, envy and efficiency.
\newblock \emph{Journal of Economic Theory}, 9:\penalty0 63--91, 1974.

\bibitem[Woeginger(1997)]{Woeginger97}
Gerhard~J. Woeginger.
\newblock A polynomial-time approximation scheme for maximizing the minimum
  machine completion time.
\newblock \emph{Oper. Res. Lett.}, 20\penalty0 (4):\penalty0 149--154, 1997.

\bibitem[Xiao and Ling(2020)]{XiaoL20}
Mingyu Xiao and Jiaxing Ling.
\newblock Algorithms for manipulating sequential allocation.
\newblock In \emph{Proceedings of the 34th {AAAI} Conference on Artificial
  Intelligence, {AAAI '20}}, pages 2302--2309. {AAAI} Press, 2020.

\end{thebibliography}

\end{document}